\newlength{\defbaselineskip}
\newcommand{\setlinespacing}[1]%
           {\setlength{\baselineskip}{#1 \defbaselineskip}}
\theoremstyle{plain}
\newtheorem{thm}{Theorem}[section]
\newtheorem{cor}[thm]{Corollary}
\newtheorem{lem}[thm]{Lemma}
\newtheorem{prop}[thm]{Proposition}
\newtheorem{exam}[thm]{Example}
\theoremstyle{definition}
\newtheorem{defn}{Definition}[section]
\newtheorem{ass}{Assumption}[section]
\newtheorem{rmk}{Remark}[section]
\newcommand{\cB}{\mathcal{B}}
\newcommand{\cA}{\mathcal{A}}
\newcommand{\cE}{\mathcal{E}}
\newcommand{\bH}{\mathbb{H}}
\newcommand{\bP}{\mathbb{P}}
\newcommand{\bR}{\mathbb{R}}
\newcommand{\bN}{\mathbb{N}}
\newcommand{\sF}{\mathscr{F}}
\newcommand{\sP}{\mathscr{P}}
\DeclareMathOperator{\var}{var}
\makeatletter\@addtoreset{equation}{section} \makeatother
\begin{document}

\title{ Pricing Options Under Rough Volatility with Backward SPDEs
}

\author{Christian Bayer\footnotemark[1]  \and  Jinniao Qiu\footnotemark[2]  \and Yao Yao\footnotemark[2]}
\footnotetext[1]{Weierstrass Institute for Applied Analysis and Stochastics
  (WIAS), Berlin, Germany. \textit{Email}:
  \texttt{christian.bayer@wias-berlin.de}. C.~Bayer gratefully acknowledges
  funding by the German Research Foundation (DFG) (project AA4-2 within the
  cluster of excellence MATH+).}
\footnotetext[2]{Department of Mathematics \& Statistics, University of Calgary, 2500 University Drive NW, Calgary, AB T2N 1N4, Canada. \textit{Email}: \texttt{jinniao.qiu@ucalgary.ca} (J. Qiu), \texttt{yao.yao1@ucalgary.ca} (Y. Yao). J. Qiu was partially supported by the National Science and Engineering Research Council of Canada and by the start-up funds from the University of Calgary. }

%

\maketitle

\begin{abstract}
In this paper, we study the option pricing problems for rough volatility models. As the framework is non-Markovian, the value function for a European option is not deterministic; rather, it is random and satisfies a backward stochastic partial differential equation (BSPDE). The existence and uniqueness of weak solution is proved for general nonlinear BSPDEs with unbounded random leading coefficients whose connections with certain forward-backward stochastic differential equations are derived as well.  These BSPDEs are then used to approximate American option prices. A deep leaning-based method is also investigated  for the numerical approximations to  such BSPDEs and associated non-Markovian pricing problems. Finally, the examples of rough Bergomi type are numerically computed for both European and American options.
\end{abstract}

{\bf Mathematics Subject Classification (2010):}	 91G20, 60H15, 91G60, 60H35, 91G80

{\bf Keywords:} rough volatility, option pricing, stochastic partial differential equation, machine learning, stochastic Feynman-Kac formula

\section{Introduction}

Let $(\Omega, {\sF}, ({\sF}_t)_{t\in[0,T]},\bP)$ be a complete filtered probability space with the filtration $({\sF}_t)_{t\in[0,T]}$ being the augmented filtration generated by  two independent Wiener processes $W$ and $B$.  Throughout this paper, we denote by $(\sF^W_t)_{t\in[0,T]}$ the augmented filtration generated by the Wiener process $W$. The predictable $\sigma$-algebras on $\Omega\times[0,T]$ corresponding to $(\sF^W_t)_{t\in[0,T]}$ and  $({\sF}_t)_{t\in[0,T]} $ are denoted by $\sP^W$ and ${\sP}$, respectively.


We consider a general stochastic volatility model given under a risk neutral
probability measure as
\begin{equation}\label{SDE-price}
\left\{
\begin{split}
dS_t&=rS_tdt+S_t \sqrt{V_t} \left( \rho \,dW_t + \sqrt{1-\rho^2} \,dB_t  \right);\\
S_0&=s_0,
\end{split}\right.
\end{equation}
where $\rho\in [-1,1]$ denotes the correlation coefficient and the constant $r$ the interest rate. We impose the following assumptions on the stochastic variance process $V$.
\begin{ass}
  \label{ass:stoch-var}
  $V$ has continuous trajectories, takes values in $\mathbb{R}_{\ge 0}$, and is
  adapted to the filtration generated by the Brownian motion $W$. We further
  assume that $V$ is integrable, i.e.,
  \begin{equation*}
    E\left[ \int_0^T V_s ds \right] < \infty, \quad T > 0.
  \end{equation*}
\end{ass}

Note that we do not assume that $V$ (or even $(S,V)$) is a Markov process or a
semi-martingale, and, in fact, our main examples will be neither.
Indeed, the motivation of this work is to extend the backward stochastic differential equation-based pricing theory
to rough volatility models. These models were put forth in
\cite{gatheral2018volatility} in order to explain the roughness of time series
of daily realized variance estimates. The idea is that the spot price process
is modeled by a stochastic volatility model, with the stochastic variance
process essentially behaving like an exponential fractional Brownian motion
with Hurst index $0 < H < 1/2$ -- in contrast to an earlier strand of
literature (see, e.g., \cite{comte2012affine}) which tried to model long
memory in the variance by fractional Brownian motion with $1/2 < H$. In the
pricing domain, rough volatility was found in~\cite{bayer2016pricing} to lead
to extremely accurate fits of SPX implied volatility surfaces with very few
parameters, in particular explaining the power law behaviour of the ATM
implied volatility skew for short maturities; see also
\cite{alos2007short,fukasawa2011asymptotic}. Since then, there have been many
new contributions to the literature of rough volatility models, including
developments of rough Heston models with closed expressions for the
characteristic functions (see \cite{el2019characteristic}), microstructural
foundations of rough volatility models (\cite{el2018microstructural}),
calibration of rough volatility models by machine learning techniques
(\cite{bayer2019deep}), a theory of affine rough Volterra processes
(\cite{jaber2019affine}) and a regularity structure (in the sense of Hairer)
for rough volatility (\cite{bayer2019regularity}), to mention just a few.

In this work, we keep the following examples specifically in mind.

\begin{exam}
  \label{ex:rBergomi}
  In the \emph{rough Bergomi model} (see~\cite{bayer2016pricing}), the
  stochastic variance is given as
  \begin{equation}
    \label{eq:rBergomi}
    V_t = \xi_t \mathcal{E}\left( \eta \widehat{W}_t \right),
  \end{equation}
  where $\xi_t$ denotes the \emph{forward variance curve} (a quantity which
  can be computed from the implied volatility surface), $\mathcal{E}$ denotes
  the \emph{Wick exponential}, i.e., $\mathcal{E}(Z) \coloneqq \exp\left( Z -
    \frac{1}{2} \var Z \right)$ for a zero-mean normal random variable $Z$,
  and $\eta \ge 0$. Finally, $\widehat{W}$ denotes a fractional Brownian
  motion (fBm) of Riemann-Liouville type with Hurst index $0 < H <
  \frac{1}{2}$, i.e.,
  \begin{equation}
    \label{eq:RL-fbm}
    \widehat{W}_t \coloneqq \int_0^t \mathcal K(t-s) dW_s, \quad \mathcal K(r) \coloneqq
    \sqrt{2H} r^{H-1/2}, \quad r > 0.
  \end{equation}
  If the correlation $\rho$ is negative, then
  Gassiat~\cite{gassiat2018martingale} showed that the discounted price $e^{-rt}S_t$ is, indeed, a
  martingale; otherwise, it may not be a martingale. But the conditions of Assumption~\ref{ass:stoch-var} are
  always satisfied.
\end{exam}

\begin{exam}
  \label{ex:rough-Heston}
  In the \emph{rough Heston model} introduced in~\cite{el2019characteristic},
  the stochastic variance satisfies the stochastic Volterra equation
  \begin{equation}
    \label{eq:rough-Heston}
    V_t = V_0 + \int_0^t \mathcal K(t-s) \lambda \left( \theta - V_s \right) ds +
    \int_0^t \mathcal K(t-s) \zeta \sqrt{V_s} dW_s,
  \end{equation}
  where the Kernel satisfies
  \begin{equation}
    \label{eq:power-law-kernel}
    \mathcal K(r) \coloneqq r^{\alpha-1}/\Gamma(\alpha), \quad r>0, \quad \frac{1}{2} <
    \alpha < 1.
  \end{equation}
  The rough Heston process also satisfies Assumption~\ref{ass:stoch-var};
  see~\cite{jaber2019affine}.
\end{exam}

For each $(t,s) \in [0,T]\times \bR^+$, denote the asset/security price process by $S_{\tau}^{t,s}$,  for $\tau\in[t,T]$, which satisfies the stochastic differential equation (SDE) in \eqref{SDE-price} but with initial time $t$ and initial state $s$ (price at time $t$).  The fair price of a European option with payoff $H$, as the smallest initial wealth required to finance an admissible (super-replicating) wealth process, is given by
\begin{align}
P_t(s):= E\left[   e^{-r(T-t)}H(S^{t,s}_T)\big| \sF_t   \right];
\end{align}
refer to \cite{cox2005local} for the cases when the discounted price $e^{-rt}S_t$ is just a local martingale. Taking $X_t= -rt+\log S_t$, we may reformulate the above pricing problem, i.e.,
\begin{align}
u_t(x):= E\left[   e^{-r(T-t)}H(e^{X_T^{t,x}+rT})\big| \sF_t  \right], \quad (t,x) \in [0,T]\times \bR, \label{European-optn}
\end{align}
subject to
\begin{equation}\label{forward-X}
\left\{
\begin{split}
dX_s^{t,x}&= \sqrt{V_s} \left( \rho \,dW_s+ \sqrt{1-\rho^2} \,dB_s  \right) -\frac{V_s}{2}\,ds, \quad 0\leq t\leq s\leq T;\\
X_t^{t,x}&=x.
\end{split}\right.
\end{equation}
Obviously, we have the relation $u_t(x)=P_t(e^{x+rt})$ a.s..

The non-Markovianity of the pair $(S,V)$ (or $(X,V)$)  makes it impossible to characterize the value function $u_t(x)$ with a conventional (deterministic) partial differential equation (PDE). Indeed, we prove that the function $u_t(x)$, for $(t,x)\in[0,T]\times \bR$, is a random field which together with another random field $\psi_t(x)$ satisfies the following backward stochastic partial differential equation (BSPDE):
\begin{equation}\label{BSPDE-value-funct}
\left\{
\begin{split}
-du_t(x)&= \Big[\frac{V_t}{2}D^2u_t(x) + \rho \sqrt{V_t}D\psi_t(x)-\frac{V_t}{2}Du_t(x)
-ru_t(x)\Big]\,dt
			-\psi_t(x)\,dW_s;\\
u_T(x)&=H(e^{x+rT}),
\end{split}\right.
\end{equation}
where the pair $(u,\psi)$ is unknown and the volatility process $(V_t)_{t\geq 0}$ is defined exogenously as in Examples \ref{ex:rBergomi} and \ref{ex:rough-Heston}.

While the BSPDEs have been extensively studied (see \cite{BenderDokuchev-2014,DuQiuTang10,Hu_Ma_Yong02,Peng_92} for instance), to the best of our knowledge, there is no available theory for the well-posednesss of BSPDE \eqref{BSPDE-value-funct} because the leading coefficient $\frac{V_t}{2}$ is neither uniformly bounded from above nor uniformly (strictly) positive from below and the terminal value $H(e^{\cdot+rT})$ may not belong to any space $L^p(\Omega\times\bR)$ for $p\in (1,\infty)$. Hence, a weak solution theory is established for the well-posedness of general nonlinear BSPDEs and associated \textit{stochastic} Feynman-Kac formula, particularly applicable to \eqref{BSPDE-value-funct}. Such nonlinear BSPDEs are further used to approximate the American option prices. Based on the stochastic Feynman-Kac formula with forward-backward stochastic differential equations (FBSDEs), we develop a deep learning-based method for numerical approximations for the solutions which are essentially defined on the (infinite dimensional) probability space due to the randomness. Accordingly, the universal approximation theorem of neural networks is generalized from finite dimensional input spaces to infinite dimensional cases in the probabilistic setting. On the basis of this approximation result, we design the schemes in the spirit of the Markovian counterpart by Hur{\'e}, Pham, and Warin \cite{hure2019some} but equipped with neural networks with changing and high input dimensions. Some numerical results are also presented for examples of  rough Bergomi type, along with an appended convergence analysis. 
Here, although the theory and application results are presented for the case of a single risky asset under rough volatility, leading to associated BSPDEs on the one-dimensional space $\bR$,  a multi-dimensional extension may be obtained under certain assumptions in a similar manner; nevertheless, we would not seek such a generality to avoid cumbersome arguments.

  Finally, let us contrast the present work with the recent work
  \cite{jacquier2019deep}. Therein, with the method developed in \cite{viens2019martingale}  the European option price in a local
  rough volatility model is expressed as a function of $t, S_t$ and an
  additional, infinite-dimensional term $\Theta$, which is closely related to
  the forward variance curve. An infinite-dimensional pricing PDE for the option price with respect
  to these variables is then formulated and solved with a discretization method
  using deep neural networks as basis functions. The focus of
  \cite{jacquier2019deep} is clearly on the mathematical finance and numerical
  side, whereas well-posedness of the path-dependent PDE is more or less
  assumed. (They do refer to \cite{ekren2014viscosity}, which, however, only
  covers the case of path-dependent PDEs with constant diffusion
  coefficients. Moreover, the arguments in \cite{jacquier2019deep} seem to
  require classical -- not viscosity -- solutions of the path-dependent PDE.) In this
  sense, our present work is complementary, as the well-posedness of the BSPDE is a serious concern of this paper. We also extend the consideration from the
  European to the American case, and provide similar type of numerical
  discretization also based on deep neural networks, but for approximation of
  the associated FBSDEs.

The rest of this paper is organized as follows. Section 2 is devoted to the well-posedness of a class of nonlinear BSPDEs and associated stochastic Feynman-Kac formula. The weak solution theory is then applied to approximations of American option prices under rough volatility in Section 3. Then in Section 4, we discuss the numerical approximations with a deep learning-based method: in the first subsection we addressed the approximations of neural networks to random functions involving infinite-dimensional input spaces in the probabilistic setting, then a deep learning-based method is introduced for non-Markovian BSDEs and associated BSPDEs in the second subsection, and in the third subsection we present some numerical examples for the rough Bergomi model. Finally, in the appendix, a convergence analysis is presented for the deep learning-based method.

\section{ Well-posedness of nonlinear BSPDEs and stochastic Feynman-Kac formula}

This section is devoted to a weak solution theory for the following nonlinear BSPDE:
\begin{equation}\label{BSPDE}
\left\{
\begin{split}
-du_t(x)&= \Big[\frac{V_t}{2}D^2u_t(x) + \rho \sqrt{V_t}D\psi_t(x)-\frac{V_t}{2}Du_t(x)\\
&\quad\quad+ F_t(e^x,u_t(x),\sqrt{(1-\rho^2)V_t}Du_t(x),\psi_t(x)+\rho \sqrt{V_t} Du_t(x))\Big]\,dt
		\\
		&\quad\quad
			-\psi_t(x)\,dW_s, \quad (t,x)\in [0,T)\times \bR;\\
u_T(x)&=G(e^x),\quad x\in\bR.
\end{split}\right.
\end{equation}
Noteworthily, BSPDE \eqref{BSPDE-value-funct} turns out to be a particular case when $F_t(x,y,z,\tilde z)\equiv -ry$ and $G(e^{x})=H(e^{x+rT})$.

We shall study the well-posedness of BSPDE \eqref{BSPDE} for given continuous nonnegative process $(V_t)_{t\geq 0}$ and address the representation relationship between BSPDE \eqref{BSPDE} and associated FBSDE.
Following are the assumptions on the coefficients $G$ and $F$.

\begin{ass} \label{ass}
\begin{enumerate} 
\item[(1)] The function $G:  (\Omega\times \bR,\,\sF_T^W\otimes \cB(\bR)) \rightarrow (\bR,\cB(\bR)$ satisfies
 \begin{align*}
 G(x) &\leq L(1+|x|),  \quad x\in\bR,
 \end{align*}
 for some constant $L>0$.
 \item[(2)] The function $F:(\Omega\times [0,T]\times \bR^4,\, \sP^W\otimes \cB(\bR^4))\rightarrow (\bR,\cB(\bR))$ satisfies that there exists a positive constants $L_0\in (0,\infty)$ such that  for all $x,y^1,y^2,z^1,z^2,\tilde z^1,\tilde z^2\in\bR$, and $t\in[0,T]$,
 \begin{align*}
 &|F_t(x,y^1,z^1,\tilde z^1)-F_t(x,y^1, z^2,\tilde z^2)|\leq L_0\left(
 |y^1-y^2|+|z^1-z^2|+|\tilde z^1-\tilde z^2| \right), \text{ a.s.},
 \\
 &|F_t(x,0,0,0)| \leq L_0(1+|x|), \text{ a.s.,}\\
 &|F_t(x,y^1,z^1,\tilde z^1)-F_t(x,y^1,0,0)|\leq L_0   , \text{ a.s..}
 \end{align*}
\end{enumerate}
\end{ass}

For the well-posedness of BSPDE \eqref{BSPDE} under Assumption \ref{ass}, the difficulty lies in the combination of the non-uniform-boundedness of $(V_t)_{t\in[0,T]}$ and the inintegrability of $G(e^x)$ and $F_t(e^x,y,z,\tilde z)$ w.r.t. $x$ on the whole space $\bR$.  Indeed, from the condition on $(V)_{t\geq 0}$ in Assumption \ref{ass}, we may conclude that $e^{X_s^{0,x}}$ is a positive local martingale and thus a supermartingale, satisfying $E[e^{X_t^{0,x}}]\leq e^x$ for instance; however, it is not appropriate to expect $E\left[\big| e^{X_t^{0,x}}\big|^p\right]<\infty$ for some $p>1$ without further restrictive assumptions (see \cite[Theorem 2]{gassiat2018martingale}).

The dependence of $F$ on $(Z,\tilde Z)$ is not necessary for the concerned
examples in this paper. We assume the Lipschitz continuity and boundedness in
$(Z,\tilde Z)$ for the reader's interests. In fact, for the well-posedness of the involved BSDEs and BSPDEs in the $L^1$ spaces, it is not appropriate to assume the linear growth in $(Z,\tilde Z)$ as indicated in the theory of $L^1$ solutions for BSDEs (see \cite[Section 6]{Hu_2002}); it might be workable for certain fractional growths in $(Z,\tilde Z)$, while we would not seek such a generality to avoid cumbersome arguments in this work.

Corresponding to BSPDE \eqref{BSPDE}, there follows the BSDE:
\begin{equation}\label{BSDE}
\left\{
\begin{split}
-dY_s^{t,x}&= F_s(e^{X_s^{t,x}},Y_s^{t,x},Z_s^{t,x},\tilde Z_s^{t,x})\,ds -
\tilde Z_s^{t,x} \,dW_s-  Z_s^{t,x} \,dB_s  , \quad 0\leq t\leq s\leq T;\\
Y_T^{t,x}&=G(X_T^{t,x}),
\end{split}\right.
\end{equation}
where the triple $(Y_s^{t,x},Z_s^{t,x},\tilde{Z}_s^{t,x})$ is defined as the
solution to BSDE \eqref{BSDE} in the sense of \cite[Definition
2.1]{Hu_2002}. Under Assumptions \ref{ass:stoch-var} and \ref{ass}, BSDE \eqref{BSDE} has a unique
solution $(Y_s^{t,x},Z_s^{t,x},\tilde{Z}_s^{t,x})$ for each
$(t,x)\in[0,T)\times \bR$ (see \cite[Theorem 6.3]{Hu_2002}).

\subsection{Definition of the weak solution for BSPDE \eqref{BSPDE}}
Denote by $C_c^{\infty}$ the space of infinitely differentiable functions with compact supports in $\bR$ and let $\mathscr{D}$ be the space of real-valued Schwartz distributions on
$C^{\infty}_{c}$.  The Lebesgue measure in $\bR$ will be denoted by $dx$. $L^2(\bR)$ ($L^2$ for short) is the usual Lebesgue integrable space with scalar product and norm defined
$$
\langle \phi,\,\psi\rangle=\int_{\bR}\phi(x)\psi(x)dx,\quad \|\phi\|=\langle\phi,\,\phi\rangle^{1/2},\,\,\forall
\phi,\psi\in L^2.
$$
For convenience, we shall also use $\langle \cdot,\,\cdot\rangle$ to denote the duality between the Schwartz distribution space $\mathscr{D}$ and  $C_c^{\infty}$.

By $\mathfrak{D}_{\sF}$ (respectively, $\mathfrak{D}_{\sF^W}$) we denote the set of all $\mathscr{D}$-valued functions defined on
$\Omega\times [0,T]$ such that, for any $u\in \mathfrak{D}_{\sF}$ (respectively, $u\in\mathfrak{D}_{\sF^W}$) and $\phi\in C_c^{\infty}$, the
function $\langle u,\,\phi\rangle$ is $\sP$ (respectively, ${\sP}^W$)-measurable. When there is no confusion about the involved filtration, we shall just write $\mathfrak{D}$.

For $p=1,2$ we denote by $\mathfrak{D}^p$ the totality of $u\in\mathfrak{D}$ such that for any $R_1\in(0,\infty)$ and $\phi\in C_c^{\infty}$, we have
$$
\int_0^{T} \sup_{|x|\leq R_1} |\langle u_t(\cdot),\phi(\cdot-x)\rangle|^p \,dt<\infty \quad \text{a.s..}
$$

\begin{lem}\label{lem-D}
Given $u\in \mathfrak D^p$ for $p=1,2$, it holds that:
\begin{enumerate}
\item[(i)] $Du\in \mathfrak D^p$;
\item[(ii)] For each continuous function $\varrho$  on $\bR$, 
 we have $\varrho u\in \mathfrak D^p$ if $u\in L^2(\Omega\times[0,T]\times \bR)$.
\item[(iii)] For any continuous processes $(x_t)_{t\in[0,T]}$ and $(y_t)_{t\in[0,T]}$ with $\max_{t\in[0,T]} |x_t| +|y_t| <\infty$ a.s., the random field $\tilde u_t(x):=y_tu_t(x+x_t)$ is also lying in $\mathfrak D^p$.
\end{enumerate}
\end{lem}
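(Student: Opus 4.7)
The plan is to unpack the definition of $\mathfrak{D}^p$ in each case and reduce everything to the hypothesis that $u\in\mathfrak{D}^p$, handling measurability and the $L^p_t$-integrability separately. Throughout, fix $R_1\in(0,\infty)$ and $\phi\in C_c^\infty$ supported in $[-R,R]$ say.

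For (i), the distributional derivative satisfies $\langle Du_t(\cdot),\phi(\cdot-x)\rangle=-\langle u_t(\cdot),(D\phi)(\cdot-x)\rangle$, and $D\phi\in C_c^\infty$. So applying the defining property of $\mathfrak{D}^p$ to $u$ with test function $D\phi$ gives the desired integrability of $\sup_{|x|\le R_1}|\langle Du_t,\phi(\cdot-x)\rangle|^p$. Predictability of $\langle Du_t,\phi\rangle$ follows from that of $\langle u_t,D\phi\rangle$.

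For (ii), the point is that the product $\varrho u$ is \emph{a priori} not well-defined as a distribution when $\varrho$ is merely continuous; this is exactly where the assumption $u\in L^2(\Omega\times[0,T]\times\bR)$ enters, as it lets us treat $u_t$ as an honest function so that $\varrho u_t\in L^1_{\mathrm{loc}}$. Then for $|x|\le R_1$,
\begin{equation*}
|\langle \varrho u_t,\phi(\cdot-x)\rangle|
=\Big|\int_{\bR}\varrho(y)u_t(y)\phi(y-x)\,dy\Big|
\le \Big(\sup_{|y|\le R_1+R}|\varrho(y)|\Big)\,\|\phi\|\,\|u_t\mathbf{1}_{[-(R_1+R),R_1+R]}\|.
\end{equation*}
Raising to the $p$-th power and integrating in $t$, the right-hand side is controlled by $\int_0^T\int_{|y|\le R_1+R}|u_t(y)|^p\,dy\,dt$ (using Hölder for $p=1$), which is a.s.\ finite by the $L^2$-assumption (and Cauchy--Schwarz against the finite measure $dt\otimes \mathbf{1}_{[-(R_1+R),R_1+R]}dy$ for $p=1$). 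Predictability of $\varrho u$ as a $\mathfrak D$-element follows since $\varrho$ is deterministic.

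For (iii), use the translation identity $\langle \tilde u_t,\phi(\cdot-x)\rangle=y_t\,\langle u_t,\phi(\cdot-x-x_t)\rangle$ and set $R_2:=R_1+\max_{t\in[0,T]}|x_t|$, which is finite a.s.\ by hypothesis. Then
\begin{equation*}
\sup_{|x|\le R_1}\bigl|\langle \tilde u_t,\phi(\cdot-x)\rangle\bigr|^p
\le \Big(\max_{t\in[0,T]}|y_t|\Big)^{p}\sup_{|x'|\le R_2}\bigl|\langle u_t,\phi(\cdot-x')\rangle\bigr|^p,
\end{equation*}
and integrating in $t$ reduces the claim to $u\in\mathfrak D^p$ applied with radius $R_2$. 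The main subtlety here is predictability of $\langle u_t,\phi(\cdot-x_t)\rangle$: since $x\mapsto \langle u_t,\phi(\cdot-x)\rangle$ is continuous (indeed smooth) in $x$ for each fixed $t$ and this map is predictable in $t$ for each fixed $x$, the composition with the continuous adapted process $x_t$ is predictable; multiplication by $y_t$ preserves this, so $\tilde u\in\mathfrak D$.

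The only genuine obstacle is (ii), where the distributional product $\varrho u$ must first be justified via the $L^2$ hypothesis before the pointwise estimate above is even meaningful; parts (i) and (iii) are essentially symbolic manipulations together with a measurability check.
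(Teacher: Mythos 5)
Your proposal is correct and follows essentially the same route as the paper's proof: integration by parts against $D\phi$ for (i), a Cauchy--Schwarz bound $\sup_{|x|\le R_1}|\langle \varrho u_t,\phi(\cdot-x)\rangle|\le \max_{|y|\le R_1+R}|\varrho(y)|\,\|\phi\|\,\|u_t\|$ combined with the a.s.\ finiteness of $\int_0^T\|u_t\|^2\,dt$ for (ii), and the translation identity with enlarged radius $R_1+\max_{t}|x_t|$ and the factor $\max_t|y_t|^p$ for (iii). Your explicit predictability check in (iii) (continuity in $x$ plus predictability in $t$, composed with the adapted process $x_t$) is a detail the paper leaves implicit, but it does not change the argument.
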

\begin{proof}
The assertion (i) may also be found in \cite[page 297]{Krylov_09}. In fact, for each $\phi\in C_c^{\infty}$, we have $D\phi \in C_c^{\infty}$, and the integration-by-parts formula indicates that
$$\langle D u_t(\cdot),\phi(\cdot-x)\rangle= - \langle u_t(\cdot),(D\phi)(\cdot-x)\rangle.
$$
Hence, $Du\in \mathfrak D^p$ if $u\in  \mathfrak D^p$.

For assertion (ii), notice that for each $\gamma\in (0,\infty)$,
\begin{align*}
\sup_{|x|\leq \gamma}| \langle \varrho(\cdot) u_t(\cdot),\phi(\cdot-x)\rangle |^p
&\leq      \|u_t\|^p  \| \phi  \|^p \max_{|x|\leq \gamma+R}|   \varrho(x) |^p,
\end{align*}
where we choose a sufficiently big $R>0$ so that the support of $\phi$ is contained in $[-R,R]$. Then it follows obviously that $\varrho u\in  \mathfrak D^p$.

Lastly, as $\max_{t\in[0,T]} |x_t| + |y_t| <\infty$ a.s. and for each $\gamma\in (0,\infty)$,
\begin{align*}
\sup_{|x|\leq \gamma}| \langle  y_t u_t(\cdot+x_t),\phi(\cdot-x)\rangle |^p
&= \sup_{|x|\leq \gamma}|   \langle u_t(\cdot),  y_t\phi(\cdot-x_t-x)\rangle |^p
\\
&\leq \sup_{|x|\leq \gamma+ \max_{t\in[0,T]} |x_t|}|    \langle u_t(\cdot), \phi(\cdot-x)\rangle |^p  \max_{t\in[0,T]} |y_t|^p,
\end{align*}
there holds assertion (iii).
\end{proof}

For $u,f,g\in \mathfrak{D}$, we say that the equality
\begin{equation*}
du_t(x)=f_t(x)\,dt+g_t(x)\,dW_t, \quad t\in [0,T],
\end{equation*}
holds in the sense of distribution if $f   \in \mathfrak{D}^1$, $g   \in \mathfrak{D}^2$ and for each $\phi\in C_c^{\infty}$, it holds a.s.,
$$
\langle u_t(\cdot),\,\phi\rangle=
\langle u_0(\cdot),\,\phi\rangle+\int_0^t\langle f_s(\cdot),\,\phi\rangle\,ds+\int_0^t\langle g_s(\cdot),\,\phi\rangle \,dW_s, \quad
	\forall \, t\in[0,T].
$$

\begin{defn}\label{defn-solution}
A pair $(u,\psi)\in \mathfrak D^1_{\sF^W} \times \mathfrak D^2_{\sF^W}$ is said to be a weak solution of BSPDE \eqref{BSPDE}, if
\begin{enumerate}
\item [(i)] $u_T(x)=G(e^x)$ a.s.;
\item [(ii)] for almost all $(\omega,t)\in\Omega\times[0,T]$, the functions
  $u_t(x), \sqrt{(1-\rho^2)V_t}D\psi_t(x)$, and
  $\rho \sqrt{V_t}Du_t(x)+\psi_t(x)$ are locally integrable in $x\in\bR$;\footnote{Here, by the local integrability of a function $g$ in $x\in\bR$ we mean that for each bounded measurable set $D\subset \bR$, it holds that the truncated function $g \cdot 1_{D}$ lies in $L^1(\bR)$.}
\item[(iii)] the equality
\begin{align*}
-du_t(x)=& \Big[\frac{V_t}{2}D^2u_t(x) + \rho \sqrt{V_t}D\psi_t(x)-\frac{V_t}{2}Du_t(x)\\
&
	+F_t(e^x,u_t(x),\sqrt{(1-\rho^2)V_t}Du_t(x),\psi_t(x)+\rho \sqrt{V_t} Du_t(x))
\Big]\,dt
		-\psi_t(x)\,dW_s,
\end{align*}
holds in the sense of distribution.
\end{enumerate}
\end{defn}
By Assumption \ref{ass}, the linear growth of $(G,F)$ w.r.t. $e^x$ produces the local integrability in $x\in\bR$. Therefore,
in Definition \ref{defn-solution} the local integrability is set for the weak solution, which does not just give a point-wise meaning of the compositions involved in function $F$ but also make the weak solution be potentially workable under Assumption \ref{ass} particularly encompassing the concerned examples in this paper. Obviously, it differs from the $L^p$ ($p\in(1,\infty]$)-integrability requirements for the weak or viscosity solutions in the existing BSPDE literature (see \cite{DuQiuTang10,Hu_Ma_Yong02,qiu2017viscosity,Zhou_92} for instance).

\subsection{Well-posedness of BSPDE \eqref{BSPDE} and the stochastic Feynman-Kac formula}
First comes a result about the measurability of $Y_t^{t,x}$ which basically states that the randomness from Wiener process $B$ is averaged out as the randomness of all the coefficients is only (explicitly) subject to the sub-filtration $\{\sF^W_t\}_{t\geq 0}$.
\begin{thm}\label{thm-main-measurability}
Under assumptions \ref{ass:stoch-var} and \ref{ass}, for each $(t,x)\in[0,T]\times \bR$,  let $(Y_s^{t,x},Z_s^{t,x},\tilde{Z}_s^{t,x})$ be the solution to BSDE \eqref{BSDE}. 
Then the value function:
$$
\Phi_t(x):= Y_t^{t,x} \quad \text{is just $\sF_t^W$-measurable.}
$$

\end{thm}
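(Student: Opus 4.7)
The plan is to show that, on the interval $[t,T]$, the BSDE \eqref{BSDE} can equivalently be solved in a filtration which at time $t$ reduces to $\sF^W_t$, and then to identify that solution with the original one via uniqueness. The key auxiliary object is the filtration
\[
  \mathscr{G}_s := \sF^W_s \vee \sigma\bigl(B_u - B_t : u \in [t,s]\bigr), \quad s\in[t,T],
\]
together with the shifted process $\tilde B_s := B_s - B_t$. Note that $\mathscr{G}_t = \sF^W_t$ and, thanks to the independence of $W$ and $B$ combined with the independent-increments property of $B$, the pair $(W,\tilde B)$ forms two independent standard Brownian motions on $[t,T]$ with respect to $\{\mathscr{G}_s\}_{s\in[t,T]}$.

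Next I would rewrite the forward SDE \eqref{forward-X} on $[t,T]$ as
\[
  X_s^{t,x} = x + \int_t^s \rho\sqrt{V_u}\,dW_u + \int_t^s \sqrt{(1-\rho^2)V_u}\,d\tilde B_u - \int_t^s \tfrac{V_u}{2}\,du,
\]
which, together with the $\sF^W$-adaptedness of $V$ from Assumption \ref{ass:stoch-var}, makes $X^{t,x}$ $\mathscr{G}$-adapted. Since $G$ is $\sF^W_T \otimes \cB(\bR)$-measurable and $F$ is $\sP^W \otimes \cB(\bR^4)$-measurable (Assumption \ref{ass}), the terminal $G(X^{t,x}_T)$ is $\mathscr{G}_T$-measurable and the generator $(s,y,z,\tilde z) \mapsto F_s(e^{X^{t,x}_s}, y, z, \tilde z)$ is $\mathscr{G}$-progressive. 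The $L^1$ existence-and-uniqueness result \cite[Theorem 6.3]{Hu_2002}, applied in the filtration $\{\mathscr{G}_s\}_{s\in[t,T]}$ with driving Brownian motions $W$ and $\tilde B$, then produces a unique $\mathscr{G}$-adapted triple $(\bar Y,\bar Z,\bar{\tilde Z})$ solving the analogue of BSDE \eqref{BSDE} on $[t,T]$.

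The last step is identification. Because $\mathscr{G}_s \subset \sF_s$, and the stochastic integrals against $W$ and $\tilde B$ (equivalently $B$, since $d\tilde B = dB$ on $[t,T]$) are invariant under this enlargement of filtration, the same triple $(\bar Y,\bar Z,\bar{\tilde Z})$ is an $\sF$-adapted solution of BSDE \eqref{BSDE}. The uniqueness part of \cite[Theorem 6.3]{Hu_2002} in $\sF$ then forces $(\bar Y,\bar Z,\bar{\tilde Z}) = (Y^{t,x},Z^{t,x},\tilde Z^{t,x})$, so in particular $Y^{t,x}_t = \bar Y_t$ is $\mathscr{G}_t = \sF^W_t$-measurable, proving the claim.

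I expect the main obstacle to be a careful verification that $W$ and $\tilde B$ really are jointly two independent Brownian motions in the mixed filtration $\mathscr{G}$---subtle because $\mathscr{G}_s$ carries the full information of $W$ on $[0,s]$ but only the increments of $B$ from time $t$---and that the $L^1$-BSDE theory of \cite{Hu_2002} applies verbatim in this smaller filtration under Assumptions \ref{ass:stoch-var}--\ref{ass}. Both checks are routine but need to be done cleanly so as to legitimize the uniqueness-based identification above.
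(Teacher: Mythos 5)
Your proposal is correct, but it takes a genuinely different route from the paper. The paper proves the theorem via the Buckdahn--Li translation technique on the canonical space $\Omega=\Omega^W\times\Omega^B$: for Cameron--Martin shifts $h\in\bH_t$ acting only on the $B$-component up to time $t$, one checks $X^{t,x}_T(\tau_h)=X^{t,x}_T$ (the forward process sees only the increments of $B$ after $t$), deduces $\Phi_t(x)(\tau_h)=\Phi_t(x)$ a.s., and then a Girsanov-density computation with arbitrary bounded continuous test functions $\mathcal G$ shows that this invariance forces $\sF^W_t$-measurability. You instead shrink the filtration: you solve the BSDE on $[t,T]$ in $\mathscr{G}_s=\sF^W_s\vee\sigma(B_u-B_t:u\in[t,s])$, which satisfies $\mathscr{G}_t=\sF^W_t$, and identify the $\mathscr{G}$-solution with the original one by the uniqueness clause of the $L^1$ theory in the larger filtration $\sF$. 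Both arguments hinge on uniqueness of the BSDE solution (in the paper it is hidden in the step $\Phi_t(x)(\tau_h)=\Phi_t(x)$, which identifies the shifted solution with the original via uniqueness under the transformed measure), but the technical burdens differ. Your two flagged checks are indeed the right ones and both go through: since $\sF^W_s=\sF^W_t\vee\sigma(W_u-W_t:u\in[t,s])$, your $\mathscr{G}$ restricted to $[t,T]$ is exactly the filtration of the two independent Brownian motions $(W_\cdot-W_t,\tilde B)$ initially enlarged by the independent $\sigma$-algebra $\sF^W_t$, so the predictable representation property and hence the $L^1$ theory of Briand et al.\ carry over verbatim; and stochastic integrals of $\mathscr{G}$-predictable integrands are unchanged under the enlargement $\mathscr{G}_s\subset\sF_s$ because $W$ and $B$ remain Brownian motions in $\sF$. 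What your route buys is strictly more than the statement: it shows the whole path $s\mapsto Y^{t,x}_s$ is adapted to $\mathscr{G}$, not just that $Y^{t,x}_t$ is $\sF^W_t$-measurable, and it avoids the canonical-space setup entirely. What the paper's route buys is softness: it needs no martingale representation theorem in any auxiliary filtration and no re-derivation of solvability there, only the invariance of the data under shifts of $B$ supported on $[0,t]$ -- which is why it is the standard device in this literature.
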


\begin{proof}
We shall adopt some techniques by Buckdahn and Li in \cite{buckdahnLi-2008-SDG-HJBI}. For the underlying probability space, w.l.o.g., we may take $\Omega=C([0,T];\bR^{2})=\Omega^W\times \Omega^B$, with ${\Omega^W}=C([0,T];\bR)$, ${\Omega^B}=C([0,T];\bR)$, and for each $\omega\in\Omega$, one has $\omega=({\omega^W},{\omega^B})$ with ${\omega^W}\in {\Omega^W}$ and ${\omega^B}\in {\Omega^B}$. And the two independent Wiener processes $W$ and $B$ may be defined on $\Omega^W$ and $\Omega^B$, respectively.

Set
\[
\bH=\left\{h;h(0)=0,\frac{dh}{dt}\in L^2(0,T;\bR)\right\},
\]
which is the Cameron-Martin space associated with the Wiener process ${B}$. For any $h\in\bH$, we define the translation operator $\tau_h:{\Omega}\rightarrow {\Omega}$, $\tau_h(({\omega^W},{\omega^B}))=({\omega^W},{\omega^B}+h)$ for ${\omega}=({\omega^W},{\omega^B})\in{\Omega}$. It is obvious that $\tau_h$ is a bijection and that it defines the probability transformation: $\left(\bP\circ \tau_h^{-1}\right)(d\omega)=\exp\{\frac{1}{2}\int_0^T|\frac{dh}{dt}|^2\,dt-\int_0^T\frac{dh}{dt}\,d{B}_t\}\bP(d\omega)$.

Fix some $(t,x)\in[0,T]\times\bR^d$ and set $\bH_t=\{h\in \bH\, \big|\, h(\cdot)=h(\cdot \wedge t) \}$. Recall
$$
X_T^{t,x}=x-\int_t^T \frac{V_s}{2}ds+\int_t^T \rho \sqrt{V_s}\, dW_s +\int_t^T \sqrt{(1-\rho^2)V_s} \, dB_s.
$$
By Girsanov theorem, it follows that    $X^{t,x}_T(\tau_h)=X^{t,x}_T$ for all $h\in \bH_t$, and thus, we have
 $\Phi_t(x)(\tau_h)=\Phi_t(x)$ $\bP$-a.s. for any $h\in\bH_t$. In particular, for any continuous and bounded function $\mathcal G$,
\begin{align*}
&E\left[  \mathcal G(\Phi_t(x))\exp\Big\{\int_0^T|\frac{dh}{ds}|^2\,ds-\frac{1}{2}\int_0^T\frac{dh}{ds}\,dB_s\Big\}  \right]\\
&=
E\left[  \mathcal G(\Phi_t(x))(\tau_{h})\exp\Big\{\int_0^T|\frac{dh}{ds}|^2\,ds-\frac{1}{2}\int_0^T\frac{dh}{ds}\,dB_s\Big\}  \right]
\\
&=E\left[  \mathcal G(\Phi_t(x))\right]
\\
&=E\left[  \mathcal G(\Phi_t(x))\right]
E\left[\exp\Big\{\int_0^T|\frac{dh}{ds}|^2\,ds-\frac{1}{2}\int_0^T\frac{dh}{ds}\,dB_s\Big\}  \right],
\end{align*}
which together with the arbitrariness of $(\mathcal G,h)$ implies that $\Phi_t(x)$ is just $\sF^W_t$-measurable.
\end{proof}

Following is the It\^o-Wentzell-Krylov formula.
\begin{lem}[Theorem 1 of \cite{Krylov_09}]\label{Ito-Wentzell}
Let $x_t$ be an $\bR$-valued predictable process of the following form
$$
x_t=\int_0^tb_s\,ds+\int_0^t\beta_s\,dW_s+\int_0^t\sigma_s\,dB_s,
$$
where $b$, $\sigma$ and $\beta$ are predictable processes such that for all $\omega\in\Omega$ and $s\in[0,T]$,
it holds that
$$
|\beta_s|+ |\sigma_s| <\infty \quad \hbox { \rm and } \quad \int_0^T \left( |b_t|+ |\beta_s|^2+ |\sigma_s|^2  \right) \,dt<\infty.
$$
  Assume that the equality
\begin{equation*}
du_t(x)=f_t(x)\,dt+g_t(x)\,dW_t, \quad t\in [0,T],
\end{equation*}
 holds in the sense of distribution and define
  $
  v_t(x):=u_t(x+x_t).
  $
  Then we have
  \begin{equation*}
    \begin{split}
      dv_t(x)
      =&\bigg(
      f_t(x+x_t)+ \frac{1}{2}(|\beta_t|^2+|\sigma_t|^2)D^2v_t(x)+\beta_t Dg_t(x+x_t)
      +b_tDv_t(x)\bigg)\,dt\\
      &+\left( g_t(x+x_t)+\beta_t Dv_t(x)\right)\,dW_t
      +\sigma_tDv_t(x)\,dB_t
      ,\quad t\in[0,T]
    \end{split}
  \end{equation*}
  holds in the sense of distribution.
\end{lem}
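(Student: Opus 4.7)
The plan is to prove the generalized It\^o-Wentzell formula by first regularizing $u$ in the spatial variable and then applying a classical It\^o argument at each fixed $x$. Introduce a standard mollifier $\eta_\epsilon \in C_c^\infty(\bR)$ and set $u^\epsilon_t(x) := \langle u_t, \eta_\epsilon(x-\cdot)\rangle$, and analogously $f^\epsilon_t, g^\epsilon_t$. Since the distributional identity $du_t = f_t\,dt + g_t\,dW_t$, when tested against $\eta_\epsilon(x-\cdot)$, produces a pointwise scalar It\^o equation for $u^\epsilon_t(x)$ (jointly measurable and continuous in $x$ together with all spatial derivatives), we reduce to a spatially smooth random field obeying a genuine It\^o SDE at each $x$.

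For this smooth $u^\epsilon$, set $v^\epsilon_t(x) := u^\epsilon_t(x + x_t)$ and derive the target formula by a telescoping argument. For a partition $0 = t_0 < \cdots < t_n = t$, decompose
$$
v^\epsilon_t(x) - v^\epsilon_0(x) = \sum_k \bigl[u^\epsilon_{t_{k+1}}(x+x_{t_{k+1}}) - u^\epsilon_{t_{k+1}}(x+x_{t_k})\bigr] + \sum_k \bigl[u^\epsilon_{t_{k+1}}(x+x_{t_k}) - u^\epsilon_{t_k}(x+x_{t_k})\bigr].
$$
Expand the first (spatial) sum by Taylor through second order, generating the first-order contribution $Du^\epsilon_{t_{k+1}}(x+x_{t_k})\,(x_{t_{k+1}}-x_{t_k})$ and, via the quadratic variation of $x$, the $\tfrac12(|\beta_t|^2+|\sigma_t|^2)D^2 v^\epsilon$ term. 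Expand the second (temporal) sum using the scalar SDE satisfied by $u^\epsilon_\cdot(x+x_{t_k})$ at the frozen point, contributing $f^\epsilon_s(x+x_s)\,ds + g^\epsilon_s(x+x_s)\,dW_s$ in the $|\pi|\to 0$ limit.

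The main obstacle is the covariation cross-term $\beta_s Dg_s(x+x_s)\,dt$, which has no analogue in either the scalar It\^o formula or a deterministic chain rule. It arises because both sums carry $dW$-contributions evaluated at \emph{staggered} times: the spatial sum contains $Du^\epsilon_{t_{k+1}}(x+x_{t_k})\,\beta_{t_k}(W_{t_{k+1}}-W_{t_k})$, while the temporal sum contains $g^\epsilon_{t_k}(x+x_{t_k})(W_{t_{k+1}}-W_{t_k})$. Replacing $Du^\epsilon_{t_{k+1}}$ by $Du^\epsilon_{t_k}$ incurs exactly the stochastic increment $Dg^\epsilon_{t_k}(x+x_{t_k})(W_{t_{k+1}}-W_{t_k})$ to leading order, whose product with $\beta_{t_k}(W_{t_{k+1}}-W_{t_k})$ Riemann-sums to $\int_0^t \beta_s Dg^\epsilon_s(x+x_s)\,ds$ by the usual $L^2$-convergence of quadratic variation sums. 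Careful bookkeeping of the left/right evaluation points is the only delicate part of the mollified calculation.

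Finally, pass $\epsilon \to 0$ in the sense of distributions. Test both sides against $\phi \in C_c^\infty$; Lemma \ref{lem-D}(iii) guarantees that the shifted random fields $u^\epsilon_t(\cdot + x_t)$ and $g^\epsilon_t(\cdot + x_t)$ remain in the appropriate $\mathfrak{D}^p$ so that the drift and stochastic integrals are well-defined. Since $\langle u^\epsilon_t, \phi\rangle\to \langle u_t,\phi\rangle$ and analogously for $f^\epsilon, g^\epsilon$ and their spatial derivatives, dominated convergence (exploiting the pathwise boundedness of $x_t, b_t, \beta_t, \sigma_t$ and the local $L^1$/$L^2$ estimates from the $\mathfrak D^p$ assumption) transfers the identity to $v_t(x) = u_t(x+x_t)$ in the distributional sense, yielding the stated formula.
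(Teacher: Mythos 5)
The paper offers no internal proof of this lemma to compare against: it is quoted verbatim as Theorem~1 of \cite{Krylov_09}, with the burden of proof delegated to Krylov. Your proposal is, in substance, a reconstruction of the standard argument --- mollify so that the distributional identity becomes a pointwise It\^o equation for the smooth field $u^\epsilon$, prove the formula for smooth fields by a staggered telescoping/Taylor expansion, then remove the mollifier --- and this is also the architecture of Krylov's own proof, so the route is sound. Your diagnosis of the cross term $\beta_t Dg_t(x+x_t)$ as the covariation created by the staggered evaluation times is exactly right; note that it tacitly uses the differentiated equation $d\,(Du^\epsilon_t)(x)=Df^\epsilon_t(x)\,dt+Dg^\epsilon_t(x)\,dW_t$, which is legitimate because $D$ commutes with the convolution.

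Three technical points are glossed and deserve explicit treatment. First, $b,\beta,\sigma$ are only assumed pathwise (square-)integrable, not bounded --- contrary to your appeal to ``pathwise boundedness'' of $b_t,\beta_t,\sigma_t$ --- so the $L^2$-convergence of the quadratic-variation sums you invoke is unavailable without first localizing by stopping times such as $\tau_n=\inf\{t:\,|x_t|+\int_0^t(|b_s|+|\beta_s|^2+|\sigma_s|^2)\,ds\ge n\}$ and letting $n\to\infty$ at the end. Second, the pointwise It\^o equation for $u^\epsilon_\cdot(x)$ holds, a priori, only off an $x$-dependent null set; to evaluate it at the random, $\sF_{t_k}$-measurable point $x+x_{t_k}$ in your temporal sum you need a modification continuous in $(t,x)$ for which the equation holds simultaneously for all $x$ --- your parenthetical ``jointly measurable and continuous'' asserts this, but it requires a common-null-set/continuity argument on compacts using the pathwise $\mathfrak{D}^p$ bounds. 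Third, the domination in the limit $\epsilon\to0$ must be uniform in $\epsilon$: writing $\langle u^\epsilon_t(\cdot+x_t),\phi\rangle=\int\eta_\epsilon(z)\,\langle u_t,\phi(\cdot-x_t+z)\rangle\,dz$ exhibits the mollified pairing as an average of translates, bounded by $\sup_{|x|\le \max_{t\in[0,T]}|x_t|+1}|\langle u_t,\phi(\cdot-x)\rangle|$, which the definition of $\mathfrak{D}^p$ places in $L^p(0,T)$ pathwise (recall the definition of a distributional solution already requires $f\in\mathfrak{D}^1$, $g\in\mathfrak{D}^2$); the stochastic-integral term then converges because the integrands converge in probability in $L^2(0,T)$, not merely pointwise. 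With these repairs your outline is a correct proof, essentially identical to the one in \cite{Krylov_09}.
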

We note that in the It\^o-Wentzell formula by Krylov \cite[Theorem 1]{Krylov_09}, the Wiener process $(W_t)_{t\geq 0}$ may be general separable Hilbert space-valued and the process $(x_t)_{t\geq 0}$ may be multi-dimensional. An application of the above It\^o-Wentzell-Krylov formula gives the following stochastic Feynman-Kac formula that is the probabilistic representation of the weak solution to BSPDE \eqref{BSPDE} via the solution of associated BSDE \eqref{BSDE} coupled with the forward SDE \eqref{forward-X}.

\begin{thm}\label{thm-uniqueness}
Let Assumptions \ref{ass:stoch-var} and \ref{ass} hold. Let $(u,\psi)$ be a weak solution of BSPDE \eqref{BSPDE} such that there is $C_u\in (0,\infty)$ satisfying for each $t\in[0,T]$
\begin{align}
|u_t(x)| \leq C_u \left(1+e^x\right), \quad \text{for almost all }(\omega,x)\in\Omega\times \bR. \label{growth-u}
\end{align}
Then $(u,\psi)$ admits a version (denoted by itself) satisfying a.s.
$$
u_{\tau}(X_{\tau}^{t,x})=Y_{\tau}^{t,x},\quad \sqrt{(1-\rho^2)V_{\tau}} Du_{\tau}(X_{\tau}^{t,x}) =Z_{\tau}^{t,x}, \quad
\psi_{\tau}(X_{\tau}^{t,x})+\rho\sqrt{V_{\tau}}Du_{\tau}(X_{\tau}^{t,x})
= \tilde Z_{\tau}^{t,x},
$$
for $0\leq t\leq \tau\leq T$ and   $x\in\bR$, where $(Y_{\tau}^{t,x},Z_{\tau}^{t,x},\tilde Z_{\tau}^{t,x})$ is the unique solution to BSDE \eqref{BSDE}.
\end{thm}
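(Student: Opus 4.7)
The strategy is to compose the distributional BSPDE \eqref{BSPDE} with the forward diffusion $X^{t,x}$ via the It\^o--Wentzell--Krylov formula (Lemma \ref{Ito-Wentzell}), and to identify the resulting equation with BSDE \eqref{BSDE} by uniqueness. Since $(u,\psi)$ solves the BSPDE only in the sense of distributions, I would first mollify in $x$: let $\rho_\eps\in C_c^\infty(\bR)$ be a standard mollifier and set $u^\eps_s:=u_s*\rho_\eps$, $\psi^\eps_s:=\psi_s*\rho_\eps$. Convolving \eqref{BSPDE} against $\rho_\eps(x-\cdot)$ turns it into a pointwise equation for $u^\eps$ with classical spatial derivatives, the nonlinear term becoming the convolution of $F_s(e^\cdot,u_s,\sqrt{(1-\rho^2)V_s}Du_s,\psi_s+\rho\sqrt{V_s}Du_s)$ with $\rho_\eps$, which is well defined thanks to Assumption \ref{ass}, the growth bound \eqref{growth-u}, and Lemma \ref{lem-D}.

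Next, I apply Lemma \ref{Ito-Wentzell} to $v^\eps_s(y):=u^\eps_s(y+X^{t,x}_s-x)$ with coefficients $b_s=-V_s/2$, $\beta_s=\rho\sqrt{V_s}$, $\sigma_s=\sqrt{(1-\rho^2)V_s}$. A direct computation shows that the three linear BSPDE terms $\tfrac{V_s}{2}D^2u^\eps$, $\rho\sqrt{V_s}D\psi^\eps$ and $-\tfrac{V_s}{2}Du^\eps$ cancel exactly against the It\^o corrections $\tfrac{1}{2}(\beta_s^2+\sigma_s^2)D^2v^\eps$, $\beta_sDg^\eps$ and $b_sDv^\eps$ from the formula, leaving in the drift only (a mollification of) $-F_s$ and in the martingale part $[\psi^\eps_s(X^{t,x}_s)+\rho\sqrt{V_s}Du^\eps_s(X^{t,x}_s)]\,dW_s+\sqrt{(1-\rho^2)V_s}Du^\eps_s(X^{t,x}_s)\,dB_s$. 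Evaluating at $y=x$ and integrating from $\tau$ to $T$ yields a pre-BSDE identity for $u^\eps_\tau(X^{t,x}_\tau)$.

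The final step is to pass to the limit $\eps\downarrow 0$ and identify the limit with the BSDE solution. For the $ds$-integrals, the growth bound \eqref{growth-u}, the supermartingale estimate $E[e^{X^{t,x}_s}]\le e^x$ implied by Assumption \ref{ass:stoch-var}, and the Lipschitz/bounded structure of $F$ yield uniform integrability, while the mollification commutator $[F,\rho_\eps]$ vanishes by standard estimates combined with Definition \ref{defn-solution}(ii). For the Brownian integrals, I would localize by stopping times $\tau_R:=\inf\{s\ge t:|X^{t,x}_s|\ge R\}\wedge T$ so that the arguments of $Du$ and $\psi$ are confined to a compact $x$-interval on which local integrability upgrades to genuine $L^1(dt)$ control, pass to the limit via the BDG inequality, and finally send $R\to\infty$ using \eqref{growth-u}. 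In the limit, the triple
\begin{equation*}
\bar Y_s:=u_s(X^{t,x}_s),\quad \bar Z_s:=\sqrt{(1-\rho^2)V_s}Du_s(X^{t,x}_s),\quad \bar{\tilde Z}_s:=\psi_s(X^{t,x}_s)+\rho\sqrt{V_s}Du_s(X^{t,x}_s)
\end{equation*}
satisfies the same BSDE \eqref{BSDE} as $(Y^{t,x},Z^{t,x},\tilde Z^{t,x})$, with terminal datum inherited from $u_T(x)=G(e^x)$. Uniqueness of the $L^1$-solution (\cite[Theorem 6.3]{Hu_2002}) forces $\bar Y=Y^{t,x}$, $\bar Z=Z^{t,x}$, $\bar{\tilde Z}=\tilde Z^{t,x}$, and simultaneously singles out the version of $(u,\psi)$ on which the three identifications hold a.s.\ for $0\le t\le\tau\le T$. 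The main obstacle is precisely this limit passage in the Brownian integrals, since neither $\psi$ nor $Du$ is a classical process along the path of $X^{t,x}$; the stopping-and-exhaustion argument above is what reconciles the very weak integrability built into Definition \ref{defn-solution} with the classical BSDE framework.
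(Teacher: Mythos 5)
Your proposal is correct and, at its core, is the same argument as the paper's: compose $u$ with the forward process via the It\^o--Wentzell--Krylov formula, observe the exact cancellation of $\tfrac{V_s}{2}D^2u$, $\rho\sqrt{V_s}D\psi$ and $-\tfrac{V_s}{2}Du$ against the It\^o corrections (your identification of $b,\beta,\sigma$ and of the resulting martingale integrands is exactly right), obtain $u_\tau(X^{t,x}_\tau)\in L^1(\Omega)$ from \eqref{growth-u} together with the supermartingale bound $E[e^{X^{t,x}_\tau}]\le e^x$, and conclude by uniqueness of $L^1$-solutions to BSDE \eqref{BSDE}. The one genuine difference is that your entire mollification--commutator--stopping-time apparatus is unnecessary: Lemma \ref{Ito-Wentzell} is Krylov's formula for \emph{distribution-valued} processes --- both its hypothesis ($du_t=f_t\,dt+g_t\,dW_t$ in the sense of distribution, with $f\in\mathfrak{D}^1$, $g\in\mathfrak{D}^2$) and its conclusion hold in the distributional sense --- so the paper applies it in one stroke to the weak solution $(u,\psi)$ of \eqref{BSPDE}, and the delicate $\eps\downarrow 0$ and $R\to\infty$ passages you outline are in effect already absorbed into Krylov's theorem. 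What your longer route buys is an explicit treatment of the step the paper leaves terse, namely how an identity holding only in the sense of distribution yields the \emph{pointwise} identification along the path of $X^{t,x}$: the paper resolves this by letting the $L^1$-uniqueness of the BSDE select the version of $(u,\psi)$, exactly as you do at the end, so your extra care is consistent with (if more laborious than) the published proof. Two small cautions if you were to write your version out in full: the BDG-based limit for the stochastic integrals needs local $L^2(dt)$ control of the mollified integrands, which comes from the $\mathfrak{D}^2_{\sF^W}$ membership of $\psi$ rather than from the mere local integrability in Definition \ref{defn-solution}(ii); and the a.e.-pointwise convergence $u^\eps_s(X^{t,x}_s)\to u_s(X^{t,x}_s)$ along the random path is not automatic from $L^1_{\mathrm{loc}}$ convergence, so it is cleaner to identify the limiting martingale parts via the martingale representation theorem and BSDE uniqueness, as the paper implicitly does, than term by term.
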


\begin{proof}
For each $t\in [0,T)$, recall
$$
X_s^{t,x}=x-\int_t^s \frac{V_r}{2}dr+\int_t^s \rho \sqrt{V_r}dW_r +\int_t^s \sqrt{(1-\rho^2)V_r} dB_r,\quad t\leq s \leq T.
$$
Applying Lemma \ref{Ito-Wentzell} to $u$ over the interval $[t,T]$ yields that
\begin{align*}
du_s(X_s^{t,x})
&=
\left(\psi_s(X_s^{t,x})+\rho\sqrt{V_s}Du_s(X_s^{t,x})\right) dW_s
+\sqrt{(1-\rho^2)V_s}Du_s(X_s^{t,x}) dB_s
\\
-F_s&\left(e^{X_s^{t,x}},u_s(X_s^{t,x}),\sqrt{(1-\rho^2)V_s}Du_s(X_s^{t,x}), \psi_s(X_s^{t,x})+\rho\sqrt{V_s}Du_s(X_s^{t,x})\right)\,ds
, \quad s\in[t,T],
\end{align*}
holds in the sense of distribution with $u_T(X_T^{t,x}) = G(e^{X_T^{t,x}})$.


Notice that for all $\tau\in [t,T]$, we have $e^{X_{\tau}^{t,x}}\in L^1(\Omega,\bP)$ and $E\left[e^{X_{\tau}^{t,x}}\right] \leq e^x$. This together with Assumption \ref{ass} and relation \eqref{growth-u}, implies that $ u_{\tau}(X_{\tau}^{t,x}) \in L^1(\Omega,\bP)$ for all $\tau\in [t,T]$. Further, the uniqueness of $L^1$-solution for BSDEs (see \cite[Section 6]{Hu_2002}) yields a version of $(u,\psi)$ (denoted by itself) satisfying that a.s.
 $$
u_{\tau}(X_{\tau}^{t,x})=Y_{\tau}^{t,x},\quad \sqrt{(1-\rho^2)V_{\tau}} Du_{\tau}(X_{\tau}^{t,x}) =Z_{\tau}^{t,x}, \quad
\psi_{\tau}(X_{\tau}^{t,x})+\rho\sqrt{V_{\tau}}Du_{\tau}(X_{\tau}^{t,x})
= \tilde Z_{\tau}^{t,x},
$$
for $0\leq t\leq \tau\leq T$ and   $x\in\bR$, where $(Y_{\tau}^{t,x},Z_{\tau}^{t,x},\tilde Z_{\tau}^{t,x})$ is the unique solution to BSDE \eqref{BSDE}.
\end{proof}
From the proof, we may see that the growth condition \eqref{growth-u} confirms that the distribution-valued process $u$ is locally integrable and a.e. defined on $\Omega\times [0,T]\times \bR$ which means more than distributions. More importantly, it implies the integrability of $u(\tau,X_{\tau}^{t,x})$ which is needed for the uniqueness of solution to BSDEs. The growth condition \eqref{growth-u} may be relaxed; however, power growth condition like $|u_t(x)| \leq C \left(1+|e^{x}|^p\right)$ for some $p>1$ may fail to imply the integrability of $u(\tau,X_{\tau}^{t,x})$ (see \cite[Theorem 2]{gassiat2018martingale}). On the other hand, the stochastic Feynman-Kac formula in Theorem \ref{thm-uniqueness} actually implies the uniqueness of weak solution for BSPDE \eqref{BSPDE} which together with the existence is summarized in what follows.

\begin{thm}\label{thm-existence}
Under Assumptions \ref{ass:stoch-var} and \ref{ass}, suppose further that there is an infinitely differentiable function $\zeta$ such that $\zeta(x)>0$ for all $x\in\bR$ and
\begin{align}
G(e^{\cdot+X^{0,0}_T} ) \zeta(\cdot) \in L^2(\Omega,\sF_T;L^2(\bR)),
\quad
\zeta(\cdot) F_{\cdot}(e^{\cdot+X^{0,0}_{\cdot}},0,0,0)  \in L^2(\Omega\times [0,T];L^2(\bR)).
\label{ass-G}
\end{align}
Then  BSPDE \eqref{BSPDE} admits a unique weak solution $(u,\psi)$ such that there is $C_u\in (0,\infty)$ satisfying for each $t\in[0,T]$
\begin{align}
|u(t,x)| \leq C_u (1+e^x), \quad \text{for almost all }(\omega,x)\in\Omega\times \bR. \label{growth-u-ex}
\end{align}
\end{thm}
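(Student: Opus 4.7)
The plan separates into uniqueness (essentially immediate from Theorem \ref{thm-uniqueness}) and existence (via an FBSDE-based candidate together with a regularization that exploits the weight $\zeta$).

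For uniqueness, suppose $(u^i,\psi^i)$ for $i=1,2$ both satisfy BSPDE \eqref{BSPDE} and the growth bound \eqref{growth-u-ex}. Theorem \ref{thm-uniqueness} then supplies versions of each with $u^i_\tau(X^{t,x}_\tau) = Y^{t,x}_\tau$ a.s. Choosing $\tau = t$ and using $X^{t,x}_t = x$ yields $u^1_t(x) = u^2_t(x) = Y^{t,x}_t$, so the random fields agree in $\mathfrak D_{\sF^W}$. Subtracting the two distributional BSPDE identities tested against any $\phi \in C_c^\infty$, the $dW$-martingale parts must coincide, forcing $\psi^1 = \psi^2$.

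For existence, the natural candidate is $u_t(x) := Y^{t,x}_t$, which is $\sF^W_t$-measurable by Theorem \ref{thm-main-measurability}. The growth bound \eqref{growth-u-ex} follows from $E[e^{X^{t,x}_T}] \leq e^x$ (since $e^{X^{t,x}}$ is a nonnegative supermartingale), the linear growth of $G(e^\cdot)$ and $F_\cdot(e^\cdot,0,0,0)$ in $e^x$, and $L^1$-BSDE estimates in the style of \cite[Section 6]{Hu_2002}. To identify $\psi$ and verify the distributional form of the BSPDE, I would regularize: take $\chi_n \in C_c^\infty(\bR)$ with $\chi_n \uparrow 1$ and set $G^n := \chi_n G$, $F^n := \chi_n F$, and $V^n_t := (V_t\wedge n) + 1/n$. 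The weight $\zeta$ together with \eqref{ass-G} places the truncated data in $L^2(\Omega\times\bR)$ after the deterministic-in-$x$ shift by $X^{0,0}$. Existing weak-solution theory for BSPDEs with uniformly bounded, uniformly elliptic leading coefficients (\cite{DuQiuTang10,Hu_Ma_Yong02,Zhou_92}) then yields a unique $(u^n,\psi^n)$ with $H^1$-valued regularity, and Theorem \ref{thm-uniqueness} identifies $u^n_t(x)$ with the corresponding regularized BSDE solution.

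The final step is the limit $n \to \infty$. I would apply It\^o-Wentzell (Lemma \ref{Ito-Wentzell}) to the weighted and shifted field $v^n_t(x) := \zeta(x)\, u^n_t(x + X^{0,0}_t)$; the resulting energy identity produces $n$-uniform $L^2$ bounds on $v^n$ and the corresponding transformed $\psi^n$, controlled exactly by the norms appearing in \eqref{ass-G}. Combined with $L^1$-BSDE stability giving $u^n \to u$ pointwise in $(\omega,t,x)$, weak compactness yields a limit $\psi$, and one passes to the limit in each term of the distributional BSPDE identity tested against $\phi \in C_c^\infty$, integrating by parts to move derivatives off $\psi^n$ and using Lemma \ref{lem-D}(ii)--(iii) to control multiplications by $\sqrt{V_t}$ and the shift by $X^{0,0}$. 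Local integrability of the compositions appearing in Definition \ref{defn-solution}(ii) follows from the growth bound \eqref{growth-u-ex} and the linear-in-$e^x$ growth of the data. The main obstacle is precisely this passage to the limit: since $V_t/2$ is neither bounded nor uniformly elliptic, no unweighted energy estimate is available, and the weight $\zeta$ combined with the It\^o-Wentzell change of variable along the random reference path $X^{0,0}$ is the essential device for producing the $n$-uniform bound on $\psi^n$ that makes the limit procedure work.
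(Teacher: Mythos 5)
Your uniqueness argument and your sketch of the growth bound are consistent with the paper (uniqueness is indeed just Theorem \ref{thm-uniqueness}; the paper gets \eqref{growth-u-ex} via a comparison with an explicit auxiliary BSDE, but your supermartingale estimate $E[e^{X^{t,x}_s}\,|\,\sF_t]\le e^x$ plus linear growth is the same mechanism). The existence part, however, has a genuine gap at exactly the step you flag as "the main obstacle": the $n$-uniform energy estimate for the truncated problems does not close. If you apply Lemma \ref{Ito-Wentzell} to $v^n_t(x)=\zeta(x)u^n_t(x+X^{0,0}_t)$ with $X^{0,0}$ driven by the \emph{true} $V$, while $u^n$ solves the BSPDE with the truncated coefficient $V^n_t=(V_t\wedge n)+1/n$, the second-order terms do not cancel: the drift of $v^n$ retains commutator terms of the type $\tfrac{V_t-V^n_t}{2}D^2u^n_t(x+X^{0,0}_t)$ and $\rho(\sqrt{V_t}-\sqrt{V^n_t})D\psi^n_t(x+X^{0,0}_t)$. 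These involve second derivatives of $u^n$ whose $L^2$ norms blow up in $n$ (the ellipticity constant of the truncated equation is $1/n$; worse, for $|\rho|=1$, which the paper allows, the truncated BSPDE is degenerate and the super-parabolic $L^2$ theory of \cite{DuQiuTang10} does not even apply), so the identity is not "controlled exactly by the norms in \eqref{ass-G}". The alternative of shifting along the truncated characteristics $X^{n;0,0}$ does cancel the operator, but then the data become $G(e^{\cdot+X^{n;0,0}_T})\zeta$ and $\zeta F_\cdot(e^{\cdot+X^{n;0,0}_\cdot},0,0,0)$, which are not controlled by \eqref{ass-G} (stated for $X^{0,0}$) and need not converge as $n\to\infty$, since Assumption \ref{ass} makes $G$ merely measurable with linear growth --- no continuity in $x$ is available to pass compositions to the limit. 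So neither variant of your compactness scheme goes through as written.

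The paper avoids truncation entirely by reversing the order of your two operations: it performs the random shift \emph{first, at the level of the equation}, before any PDE theory is invoked. Setting $\theta(x)=\zeta(x)/((1+\zeta(x))(1+x^2))$, it solves the $L^2(\bR)$-valued BSDE \eqref{bsde-tilde} for $\tilde u_t(x)$ --- which is the weighted field along the shifted variable $x+X^{0,0}_t$ --- with \emph{trivial} spatial operator $A=0$, using the infinite-dimensional BSDE theory of \cite{HuPeng}; the unboundedness and degeneracy of $V$ never enter because no differential operator is present at the solving stage, and \eqref{ass-G} is used exactly to place the data in $L^2(\Omega;L^2(\bR))$. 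Only afterwards is Lemma \ref{Ito-Wentzell} applied to $\hat u_t(x-X^{0,0}_t)$ to \emph{recreate} the BSPDE, and then Theorem \ref{thm-main-measurability} (the $\sF^W_t$-measurability of $Y^{t,x}_t$) forces the $dB$-martingale part to vanish, which is what identifies $\hat\psi^B_t=\sqrt{(1-\rho^2)V_t}\,D\hat u_t$ and produces the pair $(u,\psi)$. This identification step --- which your proposal does not address --- is essential: without it one cannot reduce the two-noise representation to a BSPDE driven by $W$ alone. No compactness, no approximation of $V$, and no $n$-uniform estimates are needed; you should restructure your existence proof along these lines rather than repairing the truncation scheme.
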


\begin{proof}
\textbf{Step 1} (Existence).
Put $\theta(x)=\frac{\zeta(x)}{(1+\zeta(x))(1+x^2)}$ for $x\in\bR$. The theory of Banach space-valued BSDEs in \cite[Section 3]{DuQiuTang10} may be extended to nonlinear cases under Lipschitz assumptions with the standard application of Picard iteration. In particular, for the case of Hilbert spaces, applying \cite[Theorem 3.1]{HuPeng} to the following  Hilbert space-valued BSDE (with a trivial operator $A=0$ therein):
\begin{align}
\tilde u_t(x)
&=G(e^{x+X_T^{0,0}} ) \theta(x)
+\int_t^T \theta(x)F_s(e^{x+X_s^{0,0}},(\theta(x))^{-1}\tilde u_s(x),(\theta(x))^{-1}\tilde \psi^B_s(x), (\theta(x))^{-1}\tilde \psi^W_s(x))\,ds
\nonumber\\
&\quad
-\int_t^T \tilde \psi^B_s(x) \, dB_s-\int_t^T \tilde \psi^W_s(x)\, dW_s,\quad t\in [0,T].
\label{bsde-tilde}
\end{align}
gives the solution of the triple of $L^2(\bR)$-valued $(\sF_t)$-adapted random fields
\begin{align}
(\tilde u, \tilde \psi^B,\tilde\psi^W)\in L^2(\Omega;C([0,T];L^2(\bR))) \times L^2(\Omega\times [0,T]\times\bR)\times L^2(\Omega\times [0,T]\times\bR). \label{BSDE-sol-Banach}
\end{align}
Obviously, we have $(\tilde u, \tilde \psi^B,\tilde \psi^W)\in \mathfrak D^1_{\sF} \times \mathfrak D^2_{\sF} \times \mathfrak D^2_{\sF}$, and thus by assertion (ii) of Lemma \ref{lem-D}, it holds that
$$
(\hat u, \hat \psi^B,\hat \psi^W):=\frac{(\tilde u, \tilde \psi^B,\tilde \psi^W)}{\theta}  \in \mathfrak D^1_{\sF} \times \mathfrak D^2_{\sF} \times \mathfrak D^2_{\sF},
$$
satisfying BSDE:
\begin{align*}
\hat u_t(x)&=G(e^{x+X_T^{0,0}} )
+\int_t^T F_s(e^{x+X_s^{0,0}}, \hat u_s(x), \hat  \psi^B_s(x), \hat \psi^W_s(x))\,ds
-\int_t^T \hat \psi^B_s(x) \, dB_s
\\
&\quad
-\int_t^T \hat \psi^W_s(x)\, dW_s,\quad t\in [0,T].
\end{align*}
Also, it is straightforward to have that
\begin{align}
\hat u_t(x)= Y_t^{t,x+X_t^{0,0}}\quad \text{a.s., for all }(t,x)\in[0,T]\times \bR, \label{re-hatu}
\end{align}
with the triple $(Y_s^{t,x},Z^{t,x}_s,\tilde{Z}^{t,x}_s)_{s\in[t,T]}$ satisfying BSDE \eqref{BSDE}.

By Lemma \ref{lem-D}, we may apply the It\^o-Wentzell-Krylov formula in Lemma \ref{Ito-Wentzell} which yields that the equality
\begin{align}
&-d\hat u_t(x-X_t^{0,0})\\
&=
\bigg\{
-\frac{V_t}{2}D^2\hat u_t(x-X_t^{0,0})+\sqrt{(1-\rho^2)V_t}D\hat\psi^B_t(x-X_t^{0,0})
	+\rho \sqrt{V_t}D\hat\psi^W_t(x-X_t^{0,0})
\nonumber\\
&\quad\quad
	-\frac{V_t}{2} D\hat u_t(x-X_t^{0,0})
	+F_t(e^{x}, \hat u_t(x-X_t^{0,0}),   \hat\psi^B_t(x-X_t^{0,0}) , \hat\psi^W_t(x-X_t^{0,0}))
\bigg\}dt
\nonumber\\
&\quad\quad
-\left(\hat\psi^W_t(x-X_t^{0,0})-\rho \sqrt{V_t} D\hat u_t(x-X_t^{0,0}) \right)dW_t
\nonumber\\
&\quad\quad
-\left(\hat\psi^B_t(x-X_t^{0,0})-\sqrt{(1-\rho^2)V_t} D\hat u_t(x-X_t^{0,0}) \right)dB_t,\quad t\in [0,T],
\label{IWK-hatu}
\end{align}
holds in the sense of distribution. Notice that the equality \eqref{re-hatu} indicates that  for each $s\in[0,T]$
\begin{align}
\hat u_s(x-X_s^{0,0})=Y_s^{s,x}, \label{re-u-x}
\end{align}
which is just $\sF_s^W$-measurable by Theorem \ref{thm-main-measurability}. Thus, the stochastic integration w.r.t. $B$ should be vanishing, i.e., we have
$$\hat\psi^B_t(x)-\sqrt{(1-\rho^2)V_t} D\hat u_t(x)=0, \text{ a.s. for all }(t,x)\in [0,T]\times\bR.
$$
 Put
\begin{align*}
u_t(x)=\hat u_t(x-X_t^{0,0}) \quad \text{and}\quad \psi_t(x)=\hat\psi^W_t(x-X_t^{0,0})-\rho \sqrt{V_t} D\hat u_t(x-X_t^{0,0}), \quad (t,x)\in[0,T]\times \bR.
\end{align*}
The $\sF_t^W$-adaptedness of $u_t(x)$, and
the assertions (i) and (iii) of Lemma \ref{lem-D} imply $(u,\psi)\in \mathfrak D^1_{\sF^W}\times \mathfrak D^2_{\sF^W}$, and  the equality \eqref{re-hatu} writes equivalently
\begin{align*}
-du_t(x)
&=\bigg\{
\frac{V_t}{2}D^2u_t(x)
+\rho \sqrt{V_t}D \psi_t(x)
-\frac{V_t}{2} Du_t(x)
\\
&+F_t(e^x,u_t(x),\sqrt{(1-\rho^2)V_t}Du_t(x),\psi_t(x)+\rho\sqrt{V_t}Du_t(x))
\bigg\}dt
-\psi_t(x)\,dW_t, \quad t\in[0,T],
\end{align*}
which holds in the sense of distribution with the terminal condition $u_T(x)=G(e^x)$.
 The local integrability of $(u,\sqrt{(1-\rho^2)V_t}Du,\psi+\rho\sqrt{V_t}Du)$ required in Definition \ref{defn-solution} (ii) may be obtained by combining the relation \eqref{BSDE-sol-Banach}, the path-continuity of $(X^{0,0}_{s})_{s\geq 0}$, and the positivity of $\theta$.
 Therefore, the pair $(u,\psi)$ is a weak solution of BSPDE \eqref{BSPDE}.

 \textbf{Step 2} (Growth condition \eqref{growth-u-ex}).
 Consider the following Hilbert space-valued BSDE:
 \begin{equation}\label{bsde-comparison}
 \left\{
 \begin{split}
 \tilde Y_t(x)
 &=\left|G(e^{x+X_T^{0,0}} )\right| \theta(x)
 -\int_t^T \tilde Z^B_s(x) \, dB_s-\int_t^T \tilde Z^W_s(x)\, dW_s
 \\
&
+\int_t^T  \left( \theta(x) \big|F_s(e^{x+X_s^{0,0}},0,0,0)\big|+L_0\theta(x)+ L_0 |\tilde Y_s(x)| \right) \,ds,
 \end{split}\right.
 \end{equation}
 where the positive constant $L_0$ is from Assumption \ref{ass} (iii). The standard BSDE theory (see \cite{ParPeng_90}) yields the unique existence of the $L^2$-solution to BSDE \eqref{bsde-comparison}. In fact, for each $(t,x)\in [0,T)\times \bR$ we have
 \begin{align}
 &\tilde Y_t(x)= E\left[ \left|G(e^{x+X_T^{0,0}} )\right| \theta(x) \gamma_T^t + \int_t^T \theta(x) (L_0+\big|F_s(e^{x+X_s^{0,0}},0,0,0)\big|) \cdot \gamma_s^t \,ds \Big| \sF_t \right],
 \label{Y-tilde-repr}
 \\
 &\text{with }\gamma_s^t= \exp\left\{ L_0 (s-t)   \right\},\quad s\in[t,T]. \nonumber
 \end{align}
 Putting the BSDEs \eqref{bsde-tilde} and \eqref{bsde-comparison} together, we may use the comparison theorem (see \cite[Theorem 2.2]{Karoui_Peng_Quenez}) to achieve the relation
 \begin{align*}
 \tilde u_t(x)\leq \tilde Y_t(x), \quad \text{a.s., }\forall (t,x)\in[0,T]\times \bR,
 \end{align*}
 which together with \eqref{Y-tilde-repr} implies that
 \begin{align*}
 u_t(x)
 &\leq (\theta(x-X_t^{0,0}))^{-1} \tilde Y_t(x-X_t^{0,0})
 \\
 &=E\left[ \left|G(e^{X_T^{t,x}} )\right|  \gamma_T^t + \int_t^T \left(\big|F_s(e^{X_s^{t,x}},0,0,0)\big| +L_0\right) \cdot \gamma_s^t \,ds \Big| \sF_t \right]
 \\
 &\leq
 E\left[ \left(L e^{X_T^{t,x}} +L \right) \gamma_T^t + \int_t^T \left( L_0e^{X_s^{t,x}} +2L_0 \right) \cdot \gamma_s^t \,ds \Big| \sF_t \right]
 \\
 &\leq
 E\left[ L e^{x+L_0(T-t)}+L e^{L_0(T-t)} + \int_t^T \left( L_0e^{x+L_0(s-t)} +2L_0e^{L_0(s-t)}\right)   \,ds \Big| \sF_t \right]
 \\
 &\leq C(L,T,L_0) (1+e^x),\text{ a.s., }\forall (t,x)\in [0,T]\times \bR,
 \end{align*}
 where we have used the relation $E\left[ e^{X_s^{t,x}}\big|\sF_t\right] \leq e^x$ a.s., for $0\leq t\leq s\leq T$. This gives the growth estimate \eqref{growth-u-ex}.

\textbf{Step 3} (Uniqueness).
The uniqueness follows from Theorem \ref{thm-uniqueness} and the proof is complete.
\end{proof}

\begin{rmk}
In view of the above proof, the assumption \eqref{ass-G} on $G$ and $F$ is to ensure $(\tilde \psi^B, \tilde \psi^W)\in \mathfrak D^2_{\sF} \times \mathfrak D^2_{\sF}$ and further $\psi\in \mathfrak D^2_{\sF^W}$. It is for simplicity and may be relaxed; for instance, the $L^2$-requirements in \eqref{ass-G} may be replaced correspondingly by $L^p$-integrability but with $1<p<\infty$ and the associated well-posedness result with $L^p$-integrability in \eqref{BSDE-sol-Banach} may be obtained by standardly extending the theory of Banach space-valued BSDEs in  \cite[Section 3]{DuQiuTang10} as stated at the beginning of the proof. A typical example satisfying \eqref{ass-G} is the European put option where $F_t(x,y,z,\tilde z)=-ry$, $G(e^x)=(K-e^{x+rT})^+$ for some $K\in(0,\infty)$ and one may take $\zeta(x)=\frac{1}{1+x^2}$ for instance.  However, it is by no means obvious to see if it is satisfied for the call options, while for pricing calls, we may use the put-call parity  if applicable.
\end{rmk}



\section{An application: approximating American option prices}\label{section:AO}

Assuming the same setting as the European options, we consider instead the American type, that is to compute
\begin{align*}
\overline u_t(x):= \sup_{\tau\in \mathcal T_t}E\left[   e^{-(\tau-t) r}g_{\tau}(e^{X_{\tau}^{t,x} })\big| \sF_t  \right], \quad (t,x) \in [0,T]\times \bR,
\end{align*}
where $r\geq 0$ is the interest rate and $\mathcal T_t$ denotes all the stopping times $\tau$ satisfying $t\leq \tau \leq T$. For simplicity, we assume:
\begin{ass}\label{ass-AO}
The function $g:(\Omega\times [0,T]\times \bR,\, \sP^W\otimes \cB(\bR))\rightarrow (\bR,\cB(\bR))$ satisfies that there exists a positive constant $L_1>0$ such that for each $(t,x)\in [0,T]\times \bR$,
\begin{enumerate}
\item [(i)] $g_s(e^{X_s^{t,x}})$ is almost surely continuous in $s\in[t,T]$;
\item [(ii)] $g_s(e^{x})\leq L_1(1+e^x)$, a.s.;
\item [(iii)]
 \begin{align*}
&\left|g_s\left(e^{X_s^{t,x}}\right)\right|
\leq \Gamma_s^t \tilde \theta(x),\quad\text{a.s., } \forall\, s\in[t,T],\text{ with }E\left[ \sup_{s\in[t,T]} \left|\Gamma_s^t \right|^2\right]<\infty,
 \end{align*}
where the positive function $\tilde \theta: \bR\rightarrow (0,\infty)$  is infinitely differentiable.

\end{enumerate}
\end{ass}
A typical example satisfying Assumption \ref{ass-AO} is the American put option with $g_t(e^{x}) =(K-e^{x+rt})^+$ for some $K>0$, where one may take $L_1=K$, $\Gamma_s^t \equiv K$,  and $\tilde \theta(x) \equiv 1$.
By the theory of reflected BSDEs (see \cite[Section 3]{El_Karoui-reflec-1997}), the following reflected BSDE
\begin{equation}\label{RBSDE}
\left\{
\begin{split}
-d\overline Y_s^{t,x}&=-r\overline Y_s^{t,x}\,ds +dA^{t,x}_s-\overline Z_s^{t,x;B}\,dB_s -\overline Z_s^{t,x;W}\,dW_s,\quad s\in[t,T];\\
\overline Y_T^{t,x}&=g_T(e^{X_T^{t,x}});\quad \overline Y_s^{t,x}\geq g_s(e^{X_s^{t,x}}),\quad s\in [t,T];\\
A^{t,x}_{\cdot}& \text{ is increasing and continuous, }A_t^{t,x}=0,
\quad \int_t^T (\overline Y_s^{t,x}-g_s(e^{X_s^{t,x}}))\,d A_s^{t,x}=0,
\end{split}\right.
\end{equation}
admits a unique solution $(\overline Y^{t,x},A^{t,x},\overline Z^{t,x;B}, \overline Z^{t,x;W})$ for each $(t,x)\in [0,T]\times \bR$, and in particular, by \cite[Proposition 7.1]{El_Karoui-reflec-1997}, we have
\begin{align}
\overline Y_t^{t,x}= \overline u_t(x), \text{ a.s. for each }(t,x)\in [0,T]\times \bR. \label{RBSDE-AO}
\end{align}
We would stress that the above relation \eqref{RBSDE-AO} only indicates that $\overline u_t(x)$ is $\sF_t$-measuable for each $(t,x)\in [0,T]\times \bR$.

In fact, the penalization method provides an approximation of reflected BSDE \eqref{RBSDE} with a sequence of BSDEs without reflections (see \cite[Section 6]{El_Karoui-reflec-1997}), i.e., for each $N\in \bN^+$, the following BSDE
\begin{equation}\label{RBSDE-appr}
\left\{
\begin{split}
-d\overline Y_s^{t,x;N}&= \left[ -r\overline Y_s^{t,x;N}+N\left(g_s(e^{X_s^{t,x}})-\overline Y^{t,x;N}_s\right)^+ \right]\,ds
-\overline Z_s^{t,x;B,N}\,dB_s \\
&\quad \quad-\overline Z_s^{t,x;W,N}\,dW_s,\quad s\in[t,T];\\
\overline Y_T^{t,x;N}&=g_T(e^{X_T^{t,x}}),
\end{split}\right.
\end{equation}
admits a unique solution $(\overline Y^{t,x;N},\overline Z^{t,x;B,N}, \overline Z^{t,x;W,N})$ such that
$\overline Y^{t,x;N}_s$ converges increasingly to $\overline Y^{t,x}_s$ with
\begin{align}
\lim_{N\rightarrow \infty}
E\left[\sup_{s\in[t,T]} \left|
\overline Y^{t,x;N}_s
-\overline Y^{t,x}_s
\right|^2
+\int_t^T \left| \overline Z^{t,x;B,N}_s
-\overline Z^{t,x;B}_s \right|^2
+   \left| \overline Z^{t,x;W,N}_s
-\overline Z^{t,x;W}_s \right|^2
ds
\right] =0, \label{BSDE-appr-rbsde}\\
\lim_{N\rightarrow \infty}
E\left[\sup_{s\in[t,T]} \left|
 A^{t,x;N}_s
- A^{t,x}_s
\right|^2
 \right] =0, \label{BSDE-appr-rbsde-1}
\end{align}
for each $(t,x)\in[0,T]\times \bR$, where
$$A^{t,x;N}_r=\int_t^r N\left(g_s(e^{X_s^{t,x}})-\overline Y^{t,x;N}_s\right)^+ \,ds, \quad \text{for }0\leq t\leq r\leq T.$$
Notice that Theorem \ref{thm-main-measurability} says that $\overline Y^{t,x;N}_t$ is $\sF^W_t$-measurable for each $(t,x)\in [0,T]\times \bR$. Hence, the approximation \eqref{BSDE-appr-rbsde} implies that $\overline Y_t^{t,x}$ (and thus $\overline u_t(x)$) is also just  $\sF^W_t$-measurable for each $(t,x)\in [0,T]\times \bR$, which together with Theorems \ref{thm-uniqueness} and \ref{thm-existence} yields the following
\begin{cor}\label{cor-AO}
Let Assumptions \ref{ass:stoch-var} and \ref{ass-AO} hold. It holds that:
\begin{enumerate}
\item [(i)] The value function $\overline u_t(x)$ is just  $\sF^W_t$-measurable for each $(t,x)\in [0,T]\times \bR$.
\item [(ii)] For each $N\in\bN^+$, the following BSPDE
\begin{equation}\label{BSPDE_N}
\left\{
\begin{split}
-du^N_t(x)&= \Big[\frac{V_t}{2}D^2u^N_t(x) + \rho \sqrt{V_t}D\psi^N_t(x)-\frac{V_t}{2}Du^N_t(x)-ru^N_t(x)
\\
&\quad\quad
+N\left(g_t(e^{x})-u^N_t(x)\right)^+\Big]\,dt
			-\psi^N_t(x)\,dW_s;\\
u^N_T(x)&=g_T(e^{x}),
\end{split}\right.
\end{equation}
admits a unique weak solution $(u^N,\psi^N)$ such that there exists $C_N\in (0,\infty)$ satisfying for each $t\in[0,T]$
\begin{align*}
|u^N_t(x)| \leq C_N (1+e^x), \quad \text{for almost all }(\omega,x)\in\Omega\times \bR.
\end{align*}
\item [(iii)] For each $N\in\bN^+$, the above weak solution $(u^N,\psi^N)$
satisfies a.s. $u^N_{\tau}(X_{\tau}^{t,x})=\overline Y_{\tau}^{t,x;N}$,
 $$
 \sqrt{(1-\rho^2)V_{\tau}} Du^N_{\tau}(X_{\tau}^{t,x}) =\overline Z_{\tau}^{t,x;B,N}, \quad \text{and}\quad
\psi^N_{\tau}(X_{\tau}^{t,x})+\rho\sqrt{V_{\tau}}Du^N_{\tau}(X_{\tau}^{t,x})
= \overline Z_{\tau}^{t,x;W,N},
$$
for $0\leq t\leq \tau\leq T$ and   $x\in\bR$, where $(\overline Y^{t,x;N},\overline Z^{t,x;B,N},\overline Z^{t,x;W,N})$ is the unique solution to BSDE \eqref{RBSDE-appr}.
\item [(iv)] For each $(t,x)\in [0,T]\times \bR$, $u^N_t(x)$ converges increasingly to $\overline u_t(x)$ in $L^2(\Omega,\sF_t;\bR)$.
\item[(v)] There is a triple $(\overline u,\, \overline\psi^B,\,\overline\psi^W)$ defined on $(\Omega\times [0,T]\times \bR,\sP^W\otimes \cB(\bR))$ such that
\begin{align*}
\overline u_{\tau}(X_{\tau}^{t,x})=\overline Y_{\tau}^{t,x},\quad
\overline \psi^B_{\tau}(X_{\tau}^{t,x}) =\overline Z_{\tau}^{t,x;B}, \quad \text{and}\quad
\overline \psi^W_{\tau}(X_{\tau}^{t,x})
= \overline Z_{\tau}^{t,x;W},\quad \text{a.s.},
\end{align*}
for $0\leq t\leq\tau \leq T$.
\end{enumerate}
\end{cor}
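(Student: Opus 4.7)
The plan is to dispatch all five parts from the theory already built in Section 2 together with the penalization scheme \eqref{RBSDE-appr}. For every $N$, the approximating driver
\[
F^N_s(e^x,y,z,\tilde z):=-ry+N\bigl(g_s(e^x)-y\bigr)^+
\]
is Lipschitz in $y$ with constant $r+N$, independent of $(z,\tilde z)$, has linear growth in $e^x$ by Assumption \ref{ass-AO}(ii), and the terminal $G(e^x)=g_T(e^x)$ is also at most linear in $e^x$; hence Assumption \ref{ass} is satisfied. For (i), Theorem \ref{thm-main-measurability} then yields that $\overline Y_t^{t,x;N}$ is $\sF_t^W$-measurable, and the $L^2$ convergence \eqref{BSDE-appr-rbsde} together with \eqref{RBSDE-AO} passes this measurability to $\overline u_t(x)$. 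For (ii), to invoke Theorem \ref{thm-existence} I would choose the smooth positive weight $\zeta(x):=(1+x^{2})^{-1}/\tilde\theta(x)$; then Assumption \ref{ass-AO}(iii) gives $|g_s(e^{x+X_s^{0,0}})|\,\zeta(x)\le \Gamma_s^0/(1+x^{2})$, so the $L^2$-integrability \eqref{ass-G} follows from $E[\sup_{s\le T}|\Gamma_s^0|^2]<\infty$. This delivers $(u^N,\psi^N)$ and the growth \eqref{growth-u-ex}, and (iii) is immediate from Theorem \ref{thm-uniqueness}.

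For (iv), evaluating the first identity of (iii) at $\tau=t$ (so $X_t^{t,x}=x$) gives $u^N_t(x)=\overline Y_t^{t,x;N}$. The sequence $N\mapsto\overline Y_t^{t,x;N}$ is monotone increasing toward $\overline Y_t^{t,x}=\overline u_t(x)$, a standard feature of penalization (cf.\ \cite[Section 6]{El_Karoui-reflec-1997}) since larger $N$ pushes the solution further up against the obstacle, and the $L^2$ rate comes from \eqref{BSDE-appr-rbsde}.

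The delicate part is (v). Using the shift structure $X_\tau^{t,x}=x+X_\tau^{t,0}$ of \eqref{forward-X}, for $\tau\ge t$ the forward processes $X^{t,x}$ and $X^{0,\,x-X_t^{0,0}}$ coincide on $[t,T]$, so by uniqueness of the reflected BSDE the triples $(\overline Y^{t,x},\overline Z^{t,x;B},\overline Z^{t,x;W})$ and $(\overline Y^{0,\,x-X_t^{0,0}},\overline Z^{0,\,x-X_t^{0,0};B},\overline Z^{0,\,x-X_t^{0,0};W})$ agree on $[t,T]$. This suggests the intrinsic definitions
\[
\overline u_\tau(y):=\overline Y_\tau^{0,\,y-X_\tau^{0,0}},\quad
\overline\psi^B_\tau(y):=\overline Z_\tau^{0,\,y-X_\tau^{0,0};B},\quad
\overline\psi^W_\tau(y):=\overline Z_\tau^{0,\,y-X_\tau^{0,0};W}.
\]
The main obstacle is to produce a $\sP^W\otimes\cB(\bR)$-measurable joint version of $(\overline\psi^B,\overline\psi^W)$ rather than merely an $(\omega,\tau)$-pointwise family. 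I would obtain this by passing to the limit at the field level: integrating \eqref{BSDE-appr-rbsde} against a weight $\theta(x)^2\,dx$ chosen so that $\int_\bR\theta(x)^2\,\tilde\theta(x)^2\,dx<\infty$ and using the random change of variable $y=x+X_\cdot^{0,0}$, one sees that the triple $(u^N,\,\sqrt{(1-\rho^2)V_\cdot}\,Du^N,\,\psi^N+\rho\sqrt{V_\cdot}\,Du^N)$ is Cauchy in a weighted $L^2(\Omega\times[0,T]\times\bR)$ of $\sP^W\otimes\cB(\bR)$-measurable fields (using (i) for the adaptedness). The limit defines $(\overline u,\overline\psi^B,\overline\psi^W)$ with the required measurability, and the identities in (v) then follow by taking $N\to\infty$ in (iii) together with \eqref{BSDE-appr-rbsde}.
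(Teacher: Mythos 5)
Your proposal is correct and follows essentially the same route as the paper: verify Assumption \ref{ass} for the penalized driver $-ry+N(g_t(e^x)-y)^+$ and invoke Theorems \ref{thm-main-measurability}, \ref{thm-uniqueness} and \ref{thm-existence}, then use the penalization convergences \eqref{BSDE-appr-rbsde}--\eqref{BSDE-appr-rbsde-1} for (i), (iv) and (v). In fact you supply details the paper leaves implicit --- the explicit weight $\zeta(x)=(1+x^2)^{-1}/\tilde\theta(x)$ verifying \eqref{ass-G} via Assumption \ref{ass-AO}(iii), and the weighted-$L^2$ field-level limit (with the shift $y=x+X^{0,0}_\cdot$, as in the proof of Theorem \ref{thm-existence}) producing the $\sP^W\otimes\cB(\bR)$-measurable triple in (v), which the paper disposes of in one line in Remark \ref{relfected-BSPDE} --- and these details are consistent with the paper's intended argument.
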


\begin{rmk}\label{relfected-BSPDE}
The assertion (v) is concluded from the approximating relations \eqref{BSDE-appr-rbsde} and \eqref{BSDE-appr-rbsde-1}. In fact, by the theory of reflected BSPDEs (see \cite{QiuWei-RBSPDE-2013} or \cite[Section 3.3]{Qiu2014weak}), one may expect the value function $\overline u_t(x)$ to be characterized via the following reflected BSPDE
 \begin{equation}\label{RBSPDE}
  \left\{\begin{array}{l}
  \begin{split}
  -d\overline u_t(x)
  =\,&\displaystyle \Bigl[ \frac{V_t}{2}D^2\overline u_t(x) + \rho \sqrt{V_t}D\overline \psi_t(x)-\frac{V_t}{2}D\overline u_t(x)-r\overline u_t(x)
                \Bigr]\, dt+\overline \mu(dt,x)\\ &\displaystyle
           -\overline \psi(t,x)\, dW_{t},\,(t,x)\in [0,T]\times \bR;\\
    \overline u_T(x)=\, &g_T(e^{x}), \quad x\in\bR;\\
    \overline u_t(x)\geq\,& g_t(e^{x}),\,\,d\mathbb{P}\otimes dt\otimes dx\text{-a.e.};\\
    \int_{[0,T]\times \bR} \big( \overline u_t(x)&-g_t(e^{x}) \big)\,\overline\mu(dt,dx)=0,\,\text{a.s.,} \quad \quad \textrm{(Skorohod condition)}
    \end{split}
  \end{array}\right.
\end{equation}
for which the solution is a triple $(\overline u, \overline \psi, \overline \mu)$ with $\overline\mu$ being a regular random radon measure. A solution theory may be developed by generalizing the regular stochastic potential and capacity theory in \cite{Qiu2014weak,QiuWei-RBSPDE-2013}; nevertheless, we would not seek such a generality in this paper, in order to put more efforts in the numerical approximations.
\end{rmk}



\section{Numerical approximations with a deep learning-based method}
Throughout this section, we assume that the functions $G$, $F$ and $g$ are deterministic, i.e., 
$$(\cA^*)\quad 
G: \bR\rightarrow \bR,\quad
F: [0,T]\times \bR^4 \rightarrow \bR,\quad g: [0,T]\times \bR\rightarrow \bR.$$
In fact, this assumption may be relaxed by allowing (explicit) dependence on the variance process $V$ and the Wiener process $W$, and together with Assumptions \ref{ass:stoch-var}, \ref{ass}, and \ref{ass-AO}, it ensures that all the coefficients may be simulated in the subsequent numerical computations, given the approximations of the unknown functions. In what follows, we first introduce and discuss the neural networks approximating random functions, a deep learning-based method is then introduced for non-Markovian BSDEs and associated BSPDEs
and finally, the numerical examples are presented for the rough Bergomi model.

\subsection{Neural networks approximating random functions}
First, we introduce a feedforward neural network with input dimension $d_0$ and output dimension $d_1$. Suppose that it has $M+1\in\mathbb{N}^+\diagdown\{1,2\}$ layers with each layer having $m_n$ neurons, $n=0,\cdots,M$.  For simplicity, we choose an identical number of neurons for all hidden layers, i.e., $m_n=m,n=1,\cdots,M-1$. Obviously, we have  $m_0=d_0$, and $m_M=d_1$. The neural network may be thought of as a function from $\bR^{d_0}$ to $\bR^{d_1}$ defined by composition of simple functions as
\begin{equation}\label{nnfunc}
x\in\bR^{d_0}\mapsto A_M\;o\;\varrho\;o\;A_{M-1}\;o\cdots o\;\varrho\;o\;A_1(x)\in\bR^{d_1}.
\end{equation}
Here, $A_1:\bR^{d_0}\mapsto\bR^m,A_M:\bR^{m}\mapsto\bR^{d_1}\;$ and $A_n:\bR^{m}\mapsto\bR^m,n=2,\cdots,M-1$ are affine transformations on a whole layer and defined by
\begin{equation*}
A_n(x)=\mathcal{W}_nx+\beta_n,
\end{equation*}
where the matrix $\mathcal{W}_n$ and the vector $\beta_n$ are called weight and bias respectively for the $n$th layer of the network. For the last layer we choose identity function as activation function, and the activation function $\varrho$ is applied component-wise on the outputs of $A_n$, for $n=1,\dots, M-1$.

The parameters of neural network may be denoted by $\theta=(\mathcal{W}_n,\beta_n)_{n=1}^M$. Given $d_0,d_1,M$ and $m$, the total number of parameters in a network is $M_m=\sum_{n=0}^{M-1}(m_n+1)m_{n+1}=(d_0+1)m+(m+1)m(M-1)+(m+1)d_1$ and thus $\theta\in\bR^{M_m}$. By $\Theta_m$, we denote the set of all possible parameters and if there are no constraints on parameters, we have $\Theta_m=\bR^{M_m}$. By $\Phi_m(\cdot;\theta)$ we denote the neural network function defined in (\ref{nnfunc}) and set of all such neural networks $\Phi_m(\cdot;\theta),\theta\in\Theta_m$ is denoted by $\mathcal{NN}_{d_0,d_1,M,m}^{\varrho}(\Theta_m)$.
\par Deep neural networks may approximate large classes of unknown functions. Following is a fundamental result by Hornik et al. \cite{hornik1989multilayer,hornik1990universal}:
\begin{lem}[Universal Approximation Theorem] \label{lem-univ-approx}
It holds that:
\begin{enumerate}
 \item[(i)] For each $M\in\bN^+\setminus \{1\}$, the set $\cup_{m\in\mathbb{N}}\mathcal{NN}_{d_0,d_1,M,m}^{\varrho}(\bR^{M_m})$ is dense in $L^2(\bR^{d_0},\nu(dx);\bR^{d_1})$ for any finite measure $\nu$ on $\bR^{d_0}$, whenever $\varrho$ is continuous and non-constant.
 \item[(ii)]     Assume that $\varrho$ is a non-constant $C^k$ function. Then the neural networks/functions in $\cup_{m\in\mathbb{N}}\mathcal{NN}_{d_0,d_1,2,m}^{\varrho}(\bR^{2_m})$ can approximate any function and its derivatives up to order $k$, arbitrarily well on any compact set of $\bR^{d_0}$.
 \end{enumerate}
\end{lem}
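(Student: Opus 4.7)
The plan is largely to invoke the classical results of Hornik \emph{et al.}~\cite{hornik1989multilayer,hornik1990universal}, since both assertions are the standard universal approximation theorems stated exactly in the form given there. I would first verify that the hypothesis on $\varrho$ in each part matches the hypothesis of the cited theorems, and then reduce the lemma to those results. Because the depth $M$ in our definition of $\mathcal{NN}_{d_0,d_1,M,m}^{\varrho}(\Theta_m)$ is arbitrary in (i) while the cited theorems are proved for single-hidden-layer networks ($M=2$), a short embedding argument is needed to pass from $M=2$ to general $M\ge 2$.

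For part (i), I would first establish the $M=2$ case by the standard Hahn--Banach argument: suppose the closure of $\cup_m \mathcal{NN}_{d_0,d_1,2,m}^{\varrho}(\bR^{M_m})$ in $L^2(\bR^{d_0},\nu)$ is a proper subspace. Then there exists a nonzero signed measure (equivalently, an $L^2(\nu)$-function) orthogonal to every function $x\mapsto \varrho(w\cdot x+b)$ for all $w\in\bR^{d_0}$, $b\in\bR$. Using that $\varrho$ is continuous and non-constant, one shows via a Fourier/translation argument that this forces the measure to be zero, giving density (this is exactly the content of~\cite{hornik1990universal}). To extend to $M>2$, I would observe that any element of $\mathcal{NN}_{d_0,d_1,2,m}^{\varrho}$ can be realised (or approximated) by an element of $\mathcal{NN}_{d_0,d_1,M,m'}^{\varrho}$ for some $m'$: one can either pad the additional hidden layers with affine maps that act as (near-)identities on the range of $A_1$ by exploiting the non-constancy of $\varrho$ on a bounded set (choosing weights small so that $\varrho$ operates in a region where it is approximately affine, then compensating with the next affine map), or, more cleanly, apply the $M=2$ density result to the function obtained by fixing the first $M-2$ layers to a particular embedding.

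For part (ii), under the stronger hypothesis that $\varrho\in C^k$ is non-constant, I would cite the derivative-approximation version in~\cite{hornik1990universal}: the density is upgraded from $L^2(\nu)$ to the Sobolev norm $C^k_{\text{loc}}$-sense on compact sets. The idea behind the proof, which I would outline but not grind through, is that the derivatives of $\varrho(w\cdot x + b)$ with respect to $x$ produce polynomial factors in $w$ times $\varrho^{(j)}(w\cdot x+b)$; one then shows that the closed linear span of $\{\varrho(w\cdot\,+b):w,b\}$ together with its derivatives separates points in the relevant Sobolev-type dual space, using non-constancy of $\varrho^{(j)}$ inherited from that of $\varrho$, plus a Stone--Weierstrass / Fourier argument on compacts.

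The step I expect to be least routine is the depth-reduction argument for part (i): while conceptually straightforward, one must be careful that the identity-like embedding of shallow networks into deep ones respects the parameter set $\Theta_m=\bR^{M_m}$ without imposing hidden constraints, and that the width $m$ can indeed be enlarged as needed. The core Hahn--Banach step is a direct quotation, and the $C^k$ extension in (ii) is a standard refinement that we would simply import from the literature rather than reprove.
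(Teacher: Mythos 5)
The paper offers no proof of this lemma at all: it is quoted directly from Hornik et al.\ \cite{hornik1989multilayer,hornik1990universal}, so your plan of verifying hypotheses and reducing the statement to those references is exactly the paper's treatment, and at that level the proposal is correct and takes the same route.

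That said, the auxiliary step you yourself flag as least routine -- the depth-reduction from $M=2$ to general $M\ge 2$ in part (i) -- is flawed as you sketch it. A merely continuous, non-constant $\varrho$ need not be differentiable, monotone, or injective on \emph{any} interval (think of a Weierstrass-type activation), so there may be no regime in which $\varrho$ ``operates approximately affinely,'' and your cleaner-sounding alternative of fixing the first $M-2$ layers to ``a particular embedding'' founders on the same point: $\varrho$ applied componentwise after $A_1$ need not be injective, so the fixed initial layers need not embed anything, and the pushforward argument does not get off the ground. The standard repair, and in effect what the cited literature does, is different: the results of Hornik--Stinchcombe--White are formulated for multilayer networks in the first place, and for bounded continuous non-constant $\varrho$ one implements an approximate identity on a compact set $K$ by a one-hidden-layer network using uniform density in $C(K)$ (no local affineness of $\varrho$ is used), then controls the $\nu$-tails via finiteness of $\nu$ and boundedness of the network outputs. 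Two smaller points: your attribution is off -- the $L^2(\nu)$-density for continuous non-constant activations is the content of \cite{hornik1989multilayer} and its refinements, while \cite{hornik1990universal} is the source for the derivative approximation in part (ii), which you otherwise import correctly -- and if you retain the Hahn--Banach sketch, note that it implicitly needs $\varrho$ bounded (or $\nu$ compactly supported) so that $x\mapsto\varrho(w\cdot x+b)$ lies in $L^2(\nu)$ at all. None of this changes the verdict, since the intended argument is precisely a quotation of the literature, but the depth-embedding mechanism as written would not survive under the stated hypothesis of bare continuity.
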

Notice that in the above lemma the approximated functions are defined on the finite dimensional spaces i.e., $\bR^{d_0}$. In fact, the approximations may be extended to some classes of functions defined on infinite dimensional spaces. In this paper, we need the following one:
\begin{prop}\label{prop-infty}
For each $T_0\in (0,T]$, $M\in\bN^+\setminus \{1\}$, and $d_0,d_1\in \bN^+$, the function set
\begin{align*}
\Big\{
\Phi_m(W_{t_1},\cdots,W_{t_k},x;\theta):\,\, \Phi_m(\cdot;\theta)\in \mathcal{NN}_{d_0+k,d_1,M,m}^{\varrho}(\bR^{M_m}), \, m,k\in\bN^+ ,\quad \\
0<t_1<t_2<\cdots<t_k\leq T_0
\Big\}
\end{align*}
is dense in
 $L^2\left(\Omega\times \bR^{d_0}, \sF^W_{T_0}\otimes \cB(\bR^{d_0}), \mathbb P(d\omega)\otimes dx;\bR^{d_1}\right)$, whenever $\varrho$ is continuous and non-constant.
\end{prop}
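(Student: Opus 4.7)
The plan is to deduce the density by combining three standard reduction steps: truncation of the spatial variable, cylindrical approximation on Wiener space, and the finite-dimensional universal approximation theorem (Lemma~\ref{lem-univ-approx}). Fix $f \in L^2(\Omega \times \bR^{d_0}, \sF^W_{T_0} \otimes \cB(\bR^{d_0}), \bP \otimes dx; \bR^{d_1})$ and $\eps > 0$. By dominated convergence, one may choose $R > 0$ so large that $\|f - f\mathbf{1}_{B_R}\|_{L^2} < \eps/3$, where $B_R := \{x \in \bR^{d_0} : |x| \le R\}$; it therefore suffices to approximate $f_R := f\mathbf{1}_{B_R}$ on the \emph{finite} product measure space $(\Omega \times B_R,\, \bP \otimes dx|_{B_R})$.

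Next, I will approximate $f_R$ by a cylindrical random function. Since $\sF^W_{T_0}$ is the completion of $\sigma(W_s : 0 < s \le T_0)$, along any refining sequence of finite partitions of $(0, T_0]$ with mesh tending to zero the sub-$\sigma$-algebras $\cG_n := \sigma(W_{t_1^{(n)}}, \ldots, W_{t_{k_n}^{(n)}})$ increase to $\sF^W_{T_0}$. For each fixed $x \in B_R$, Doob's $L^2$-martingale convergence yields $E[f_R(\cdot, x) \mid \cG_n] \to f_R(\cdot, x)$ in $L^2(\Omega)$; Doob's maximal inequality combined with Fubini upgrades this to joint convergence in $L^2(\Omega \times B_R)$, the dominator $\sup_n |E[f_R(\cdot,x)\mid\cG_n]|$ being square-integrable because $\int_{B_R} \|f_R(\cdot,x)\|_{L^2(\Omega)}^2 \,dx < \infty$. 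By the Doob--Dynkin lemma, there exist Borel maps $\psi_n : \bR^{k_n + d_0} \to \bR^{d_1}$ with $E[f_R(\cdot,x) \mid \cG_n] = \psi_n(W_{t_1^{(n)}}, \ldots, W_{t_{k_n}^{(n)}}, x)$ that lie in $L^2(\bR^{k_n + d_0}, \mu_n \otimes dx|_{B_R})$, where $\mu_n$ is the joint Gaussian law of $(W_{t_1^{(n)}}, \ldots, W_{t_{k_n}^{(n)}})$.

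Fix $n$ so that this cylindrical-stage error is below $\eps/3$. Since $\mu_n \otimes dx|_{B_R}$ is a finite Borel measure on $\bR^{k_n + d_0}$, Lemma~\ref{lem-univ-approx}(i) produces $m \in \bN$ and $\theta \in \bR^{M_m}$ for which $\Phi_m(\cdot; \theta) \in \mathcal{NN}_{d_0 + k_n, d_1, M, m}^{\varrho}(\bR^{M_m})$ approximates $\psi_n$ within $\eps/3$ in that $L^2$-norm. Transporting back to $\Omega \times B_R$ via the pushforward identity $\bP \circ (W_{t_1^{(n)}}, \ldots, W_{t_{k_n}^{(n)}})^{-1} = \mu_n$ and summing the three $\eps/3$ contributions delivers the desired approximation. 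The main obstacle is the second step, namely upgrading the pointwise-in-$x$ martingale convergence in $L^2(\Omega)$ to joint convergence in $L^2(\Omega \times B_R)$; this is where truncation to $B_R$ and Doob's maximal inequality cooperate to produce a uniform dominator that is square-integrable with respect to $\bP \otimes dx|_{B_R}$, and where one must also, via an auxiliary cutoff absorbed into the network when necessary, ensure that the final estimate extends from $\Omega \times B_R$ to the full space $\Omega \times \bR^{d_0}$.
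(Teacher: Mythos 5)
Your proposal is correct in substance, but it takes a genuinely different route from the paper. The paper's proof identifies $L^2\left(\Omega\times \bR^{d_0}\right)$ with $L^2\left(\Omega,\sF^W_{T_0};L^2(\bR^{d_0};\bR^{d_1})\right)$, approximates $f$ by simple tensor-type random variables $\sum_{i=1}^l 1_{A_i}(\omega)h_i(x)$ using the density of simple Hilbert-space-valued random variables (citing \cite{da2014stochastic}), and then replaces each indicator $1_{A_i}$ by a smooth cylinder function $g_i(W_{\tilde t^i_1},\dots,W_{\tilde t^i_{k_i}})$ via \cite[Lemma 4.3.1]{oksendal2003stochastic}, before invoking Lemma \ref{lem-univ-approx}(i). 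You instead inline the cylindrical reduction by conditioning on $\cG_n=\sigma(W_{t_1^{(n)}},\dots,W_{t_{k_n}^{(n)}})$ and using Doob's martingale convergence plus the Doob--Dynkin lemma. Both reductions are standard; yours is self-contained (it essentially reproves the \O ksendal-type step), while the paper's keeps the $\omega$- and $x$-dependence separated, which makes joint measurability of the cylindrical approximants automatic. In your argument the joint $(\omega,x)$-measurability of $E[f_R(\cdot,x)\mid\cG_n]$ is asserted rather than established; the clean fix is to condition on the product $\sigma$-algebra $\cG_n\otimes\cB(\bR^{d_0})$ on $\Omega\times B_R$, which agrees fiberwise with your conditional expectations for a.e.\ $x$. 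This also shows the maximal-inequality domination is dispensable: since conditional expectation is an $L^2(\Omega)$-contraction, $x\mapsto 4\|f(\cdot,x)\|^2_{L^2(\Omega)}$ is already an integrable dominator on all of $\bR^{d_0}$, so your truncation to $B_R$ is really needed only to make $\mu_n\otimes dx|_{B_R}$ a finite measure for Lemma \ref{lem-univ-approx}(i).

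One caveat concerns the final globalization step, which you flag but do not resolve: after approximating on $\Omega\times B_R$, the total error contains the $L^2(\bP\otimes dx)$-norm of the network over $\Omega\times(\bR^{d_0}\setminus B_R)$, a set of infinite Lebesgue mass, and a generic network (e.g.\ one with nonzero limiting values) does not lie in $L^2(dx)$ there; ``absorbing a cutoff into the network'' is not literally available, since $\chi(x)\Phi_m(\cdot;\theta)$ leaves the class $\mathcal{NN}^{\varrho}_{d_0+k,d_1,M,m}$. To be fair, the paper's own proof glosses the identical point: it applies Lemma \ref{lem-univ-approx}(i), stated for \emph{finite} measures $\nu$, to approximate $f^k$ with respect to $\mathrm{law}(W_{\bar t_1},\dots,W_{\bar t_k})\otimes dx$, which is only $\sigma$-finite, and it does not verify that the approximating networks belong to $L^2(\bP\otimes dx)$. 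So your attempt is no less rigorous than the paper on this point; a complete argument in either approach would restrict to networks whose output decays in $x$ (constructible for suitable $\varrho$ and $M$), or reformulate the density with respect to a finite measure in the $x$-variable.
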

\begin{proof}
Take $f\in L^2\left(\Omega\times \bR^{d_0}, \sF^W_{T_0}\otimes \cB(\bR^{d_0}), \mathbb P(d\omega)\otimes dx\right)$  arbitrarily. Notice that
$$L^2\left(\Omega\times \bR^{d_0}, \sF^W_{T_0}\otimes \cB(\bR^{d_0}), \mathbb P(d\omega)\otimes dx;\bR^{d_1}\right)\equiv L^2\left(\Omega, \sF^W_{T_0}, \mathbb P; L^2(\bR^{d_0};\bR^{d_1}) \right).$$ The denseness of simple random variables (see \cite[Lemma 1.2, Page 16]{da2014stochastic} for instance) implies that the function $f$ may be approximated \textit{monotonically}  by simple random variables of the following form:
  $$
  \sum_{i=1}^{l}1_{A_i} (\omega) h_i(x),\quad \text{with }h_i\in L^2(\bR^{d_0};\bR^{d_1}),\quad A_i\in\sF_{T_0}, \quad l\in\bN^+, \quad i=1,\dots, l.
  $$

  Further,  applying \cite[Lemma 4.3.1., page 50]{oksendal2003stochastic}  yields that each $1_{A_i}$ may be approximated in $L^2(\Omega,\sF_{T_0})$ by functions in the following set
  $$
  \{g_i(W_{\tilde t^i_1},\dots,W_{\tilde t^i_{k_i}} ):\, \,k_i\in\bN^+,\,g_i\in  C^{\infty}_{c}(\bR^{k_i}),\,\, 0<\tilde t^i_1<\cdots<\tilde t^i_{k_i}\leq T_0 \}.
  $$

  To sum up, the function $f$ may be approximated in {\small $L^2\left(\Omega\times \bR^{d_0}, \sF^W_{T_0}\otimes \cB(\bR^{d_0}), \mathbb P(d\omega)\otimes dx;\bR^{d_1}\right)$} by the following random fields:
  \begin{align*}
  f^k( W_{\bar t_1},\cdots, W_{\bar t_k},x)
  = \sum_{i=1}^{l} g_i\left(W_{\tilde t^i_1},\dots,W_{\tilde t^i_{k_i}} \right)h_i(x),
    \end{align*}
  where $g_i \in C^{\infty}_{c}(\bR^{k_i})$, $h_i\in L^2(\bR^{d_0})$,  $0 <\bar t_1<\cdots<\bar t_{k}\leq T$, and
  $$\{\bar t_1,\ldots,\bar t_k\}=\cup_{i=1}^l\{\tilde t_1^i,\ldots,\tilde t_{k_i}^i     \}.$$
   Applying the approximation in (i) of Lemma \ref{lem-univ-approx} to the functions $f^k$ yields the approximation of $f$, and this completes the proof.
\end{proof}
\begin{rmk}\label{rmk-filtration}
In fact, the process $(W_t)_{t\geq 0}$ and the filtration $(\sF^{ W}_t)_{t\geq 0}$ may be replaced by an arbitrary continuous process $(\overline W_t)_{t\geq 0}$ and corresponding gernerated filtration $(\sF^{\overline W}_t)_{t\geq 0}$, where the process $(\overline W_t)_{t\geq 0}$ is  not necessarily a Brownian motion.
\end{rmk}

\subsection{Deep learning-based method for non-Markovian BSDEs and associated BSPDEs} \label{Sec:deep-learning}
Inspired by \cite{hure2019some,han2018solving},  we adopt a deep learning method  based on the following representation relationship by Theorems \ref{thm-uniqueness} and \ref{thm-existence}. Letting the quadruple $(X_s,Y_s,Z_s,\tilde{Z}_s)$ be the solution to the following FBSDE
\begin{equation}\label{FBSDE}
\left\{
\begin{split}
-dY_s&= F_s(e^{X_s},Y_s ,Z_s,\tilde Z_s)\,ds -
\tilde Z_s \,dW_s-  Z_s \,dB_s  , \quad 0 \leq s\leq T;\\
Y_T &=G(e^{X_T}),\\
dX_s&= \sqrt{V_s} \left( \rho \,dW_s+ \sqrt{1-\rho^2} \,dB_s  \right) -\frac{V_s}{2}\,ds, \quad 0 \leq s\leq T;\\
X_0&=x;\\
V_s&=\xi_s \, \cE(\eta\, \widehat W_s) \quad \text{with} \quad \widehat W_s= \int_0^s \mathcal K(s,r)\,dW_r, \quad s\in[0,T],
\end{split}\right.
\end{equation}
with $\mathcal K$ being a general Kernel function including the particular cases in Examples \ref{ex:rBergomi} and \ref{ex:rough-Heston},
one has
$$
u_{\tau}(X_{\tau})=Y_{\tau},\quad \sqrt{(1-\rho^2)V_{\tau}} Du_{\tau}(X_{\tau} ) =Z_{\tau}, \quad
\psi_{\tau}(X_{\tau} )+\rho\sqrt{V_{\tau}}Du_{\tau}(X_{\tau})
= \tilde Z_{\tau},
$$
for $0\leq \tau\leq T$ and   $x\in\bR$, where the pair $(u,\psi)$ is the unique weak solution to BSPDE \eqref{BSPDE} in Theorem \ref{thm-existence}. In particular, we may write forwardly, for $t\in[0,T]$,
\begin{align}
u_t(X_t)
&= u_0(X_0) -\int_0^t F_s\left( e^{X_s},u_s(X_s),\sqrt{(1-\rho^2)V_{s}} Du_{s}(X_{s} ), \psi_{s}(X_{s} )+\rho\sqrt{V_{s}}Du_{s}(X_{s})  \right)\,ds
	\nonumber\\
&\quad
	+\int_0^t  \left( \psi_{s}(X_{s} )+\rho\sqrt{V_{s}}Du_{s}(X_{s}) \right) \,dW_s
	+\int_0^t \sqrt{(1-\rho^2)V_{s}} Du_{s}(X_{s} )\,dB_s.
	\label{forward-u}
\end{align}

Given a partition of the time interval: $\pi=\{0=t_0<t_1<...<t_N=T\}$ with modulus $|\pi|=\max\limits_{i=0,1,...,N-1}\Delta t_i$, $\Delta t_i=t_{i+1}-t_i$, we first simulate (or approximate) the joint process $(B,W,V)$, and then the forward process $X$ may be approximated by $X^{\pi}$ obtained through an Euler scheme. Further,  the forward representation \eqref{forward-u} yields an approximation for $(u,\psi)$ under the Euler scheme
\begin{align*}
u_{t_{i+1}}(X_{t_{i+1}})
&
\approx H_{t_i}(X_{t_i},u_{t_i}(X_{t_i}),\sqrt{(1-\rho^2)V_{t_i}} Du_{t_i}(X_{t_i} ), \psi_{t_i}(X_{t_i} )+\rho\sqrt{V_{t_i}}Du_{t_i}(X_{t_i}),\Delta B_{t_i},\Delta W_{t_i})
\end{align*}
with
$$H_t(x,y,z,\tilde z, b,w):=y-F_t(e^x,y,z,\tilde z)\Delta t_i+zb+\tilde z w.$$
Inspired by \cite{hure2019some}, we design the numerical approximation of $u_{t_i}(X_{t_i})$ as follows:
\begin{enumerate}
\item[(1)] start with   $\widehat{\mathcal{U}}_N = G$;
\item[(2)] for $i=N-1,...,0$, given $\widehat{\mathcal{U}}_{i+1}$, use the triple of deep neural networks
\begin{align}
(\mathcal{U}_i(\cdot,\theta),\mathcal{Z}_i(\cdot,\theta),\tilde {\mathcal{Z}}_i(\cdot,\theta) )\in \,\,
&
\mathcal{N}\mathcal{N}^{\varrho}_{1+2i,1,M,m}(\mathbb{R}^{M_m})
\times\mathcal{N}\mathcal{N}^{\varrho}_{1+2i,1,M,m}(\mathbb{R}^{M_m})
\nonumber \\
&
\times\mathcal{N}\mathcal{N}^{\varrho}_{1+2i,1,M,m}(\mathbb{R}^{M_m})
\label{Neural-Network-i}
\end{align}
 for the approximation of
 $$
 \left(u_{t_i}(X_{t_i}),
 	\sqrt{(1-\rho^2)V_{t_i}} Du_{t_i}(X_{t_i} ),
 		\psi_{t_i}(X_{t_i} )+\rho\sqrt{V_{t_i}}Du_{t_i}(X_{t_i})\right),
 $$
 to achieve an estimate
 $$\mathcal{U}_{i+1}=H_{t_i}\left( X_{t_i},\mathcal{U}_i(X_{t_i},\theta_i),\mathcal{Z}_i(X_{t_i},\theta_i),\tilde{\mathcal{Z}}_i(X_{t_i},\theta_i),\Delta B_{t_i},\Delta W_{t_i}\right);$$
 \item[(3)]  compute the minimizer of the expected quadratic loss function
{\small
\begin{equation*}
\left\{
\begin{split}
\hat{L}_i(\theta): & = {E} \left|
	\widehat{\mathcal{U}}_{i+1}
	-H_{t_i}\left( X_{t_i},\mathcal{U}_i(X_{t_i},\theta_i),\mathcal{Z}_i(X_{t_i},\theta_i),\tilde{\mathcal{Z}}_i(X_{t_i},\theta_i),\Delta B_{t_i},\Delta W_{t_i}\right)
	\right|^2,\\
& \approx  \frac{1}{J}  \sum_{j=1}^J  \left|
	\widehat{\mathcal{U}}_{i+1}^{(j)}
	-H_{t_i}\left( X^{(j)}_{t_i},\mathcal{U}_i(X^{(j)}_{t_i},\theta_i),\mathcal{Z}_i(X^{(j)}_{t_i},\theta_i),\tilde{\mathcal{Z}}_i(X^{(j)}_{t_i},\theta_i),\Delta B^{(j)}_{t_i},\Delta W^{(j)}_{t_i}\right)
	\right|^2
\\
\theta^*_i & \in \arg\min\limits_{\theta\in\mathbb{R}^{M_m}}\hat{L}_i(\theta),
\end{split}
\right.
\end{equation*}
}
where the Adam (adaptive moment estimation) optimizer may be used to get the optimal parameter $\theta^*$;
\item[(4)] update and set $\widehat{\mathcal{U}}_i=\mathcal{U}_i(\cdot,\theta^*_i)$, $\widehat{\mathcal{Z}}_i=\mathcal{Z}_i(\cdot,\theta^*_i)$, and $\widehat{\tilde{\mathcal{Z}} }_i=\tilde{\mathcal{Z}}_i(\cdot,\theta^*_i)$.
\end{enumerate}
\begin{rmk}
Here, $(X^{(j)}, B^{(j)},W^{(j)}, \widehat W^{(j)}, V^{(j)})_{1\leq j \leq J}$ are independent simulations of $(X,B,W,\widehat W, V)$. Noticing that $\sF^W_t=\sF^{W,\widehat W}_t$ for $t\in [0,T]$, by Proposition \ref{prop-infty} and Remark \ref{rmk-filtration} we have the functions in $\mathcal{N}\mathcal{N}^{\varrho}_{1+2 i,1,M,m}(\mathbb{R}^{M_m})$ of the following form:
$$
\Phi_m(W_{t_1},\cdots, W_{t_{i}}, \widehat W_{t_1},\cdots, \widehat W_{t_{i}},x), \quad i=0,1,2,\cdots, N-1,
$$
which incorporates all the simulated values of $(W,\, \widehat W)$ until time $t_i$, leading to the changing dimension of the inputs. One may also see that the finer the partition of $[0,T]$ is, the higher input dimension it involves. The changing and high dimensionality arising from the approximations prompts us to adopt a deep learning-based method, and this also unveils  the difference from the scheme in \cite{hure2019some}.

On the other hand,  a convergence analysis of the above scheme is given in the appendix. Even though we are working with dimension-changing neural networks under a non-Markovian framework with different assumptions, we adopt a similar strategy to \cite{hure2019some} for the proof of the convergence analysis.
\end{rmk}


\subsection{Numerical examples for the rough Bergomi model}\label{section:numerical}


\subsubsection{European put option}

%

We consider the rough Bergomi model of~\cite{bayer2016pricing} in Example \ref{ex:rBergomi} with the
following choice of parameters: $H = 0.07$, $\eta = 1.9, \rho = -0.9$, $r=0.05$, $T=1$, $X_0=\ln (100)$. For
simplicity, we choose the forward variance curve to be $\xi(t) \equiv 0.09$,
independent of time.

We compute the numerical approximations to the European option price given in \eqref{European-optn}. The value function $u$ together with another random field $\psi$ constitutes the unique solution to BSPDE \eqref{BSPDE-value-funct} which corresponds to the BSPDE \eqref{BSPDE} in Theorem \ref{thm-existence} with
\begin{align*}
F_s(x,y,z,\tilde z) = -r y,\quad \text{and}\quad G(e^x)=(K- e^{x+rT})^+.
\end{align*}
By Theorems \ref{thm-uniqueness} and \ref{thm-existence},
the triple $(Y_t^{0,x},Z_t^{0,x},\overline Z_t^{0,x})_{t\in[0,T]}$ with
\begin{align*}
Y_t^{0,x}:=u_t(X_t^{0,x}),
\quad Z_t^{0,x}:=\rho \sqrt{V_t}Du_t(X_t^{0,x})+\psi_t(X_t^{0,x}),
\quad \overline Z_t^{0,x}:= \sqrt{(1-\rho^2)V_t} Du_t(X_t^{0,x}),
\end{align*}
for $t\in[0,T]$ satisfies the following FBSDE:
\begin{equation}\label{FBSDE-rB}
\left\{
\begin{split}
dX_s^{0,x}&= \sqrt{V_s} \left( \rho \,dW_s+ \sqrt{1-\rho^2} \,dB_s  \right) -\frac{V_s}{2}\,ds, \quad 0\leq s\leq T;\\
X_0^{0,x}&=x;\\
V_s&=\xi_s \, \cE(\eta\, \widehat W_s) \quad \text{with} \quad \widehat W_s= \int_0^s \sqrt{2H} (s-r)^{H-1/2}\,dW_r, \quad s\in[0,T];\\
dY_s^{0,x}&= rY_s^{0,x}\,ds+Z_s^{0,x}\,dW_s+\overline Z_s^{0,x}\, dB_s, \quad s\in[0,T];\\
Y_T^{0,x}&=G(e^{X_T^{0,x}}).
\end{split}\right.
\end{equation}

Then the deep learning-based method in Section \ref{Sec:deep-learning} is used for the numerical approximations.
We take $N=20$ in the Euler Scheme and set a single hidden layer whose number of neurons is equal to half of the total number of neurons in the input and output layers. We adopt the Sigmoid function for the activation function and the optimization algorithm is Adam. We implement 10000 trajectories in mini-batch and check the loss convergence every 50 iterations. In the following Table 1, the reference values are calculated by Monte Carlo method and they are close to the results obtained by averaging 20 independent runs with the deep learning method.
\begin{table}[H]
\centering
\begin{tabular}{c|c|c|c|c}
    & Reference value& $RSD=\frac{standard\ deviation}{average\ value}$ & Estimated value & RSD\\ \hline
    $K=90$& $4.9550$ & $0.0259$ & $4.9535$ & $0.0228$ \\
    $K=100$& $7.8284$ & $0.0135$ & $7.8061$ & $0.0201$\\
    $K=110$& $12.1844$ & $0.0100$ & $12.1940$ & $0.0143$ \\
    $K=120$& $18.1631$ & $0.0077$ & $18.1699$ & $0.0055$ \\
\end{tabular}
\caption{Prices of European put options at t=0 under the different strike prices K.}
\label{Table1}
\end{table}

On the other hand, we also investigate the dependence of the value function on the paths of process $V$. We simulate 10000 independent trajectories of the stochastic variance process $V$ and evaluate the corresponding values of $u(0.5,\ln 100)$ when $t=0.5$, $x=\ln 100$, and $K=100$. The mean of these $u(0.5,\ln 100)$ is $9.9287$ and the standard deviation $0.4240$. Four of these trajectories are randomly chosen in Figure 1 (a), and the corresponding values of $u(0.5,\ln 100)$ are listed in Table 2. From Figure 1(a) and Table 2, one may see that bigger values of $V(0.5)$ do not always lead to bigger option prices.
	Meanwhile, for the simulated 10000 trajectories of $V$, we reset the values of V to be the same and equal to the average of simulated values of $V(t)$  at time $t = 0.5$, i.e., we fix $V(0.5)=0.0825$. Then the mean of these values of $u(0.5,\ln 100)$ turns out to be $9.9292$ with the standard deviation equal to $0.4226$. Four of the trajectories corresponding to Figure 1 (a) are drawn in Figure 1 (b), and we show the corresponding values of $u(0.5,\ln 100)$ in Table 3. Comparing the obtained means, the standard deviations, and the four paths and associated values of $u(0.5,\ln 100)$ in these two cases,  we may see that the value of $V$ at $t=0.5$ does not play a dominating role in determining the price of the options $u(0.5,\ln 100)$, which is different from the classical Markovian cases;  this is due to the path-dependence and thus the non-Markovianity, i.e., the trajectory of $V$ before $t=0.5$ actually affects the value of $u(0.5,\ln 100)$ in a non-negligible manner.\\
\begin{figure}[H]
 \centering
 \subfigure[Paths of V with different values at t=0.5]{
 \includegraphics[width=0.45\textwidth]{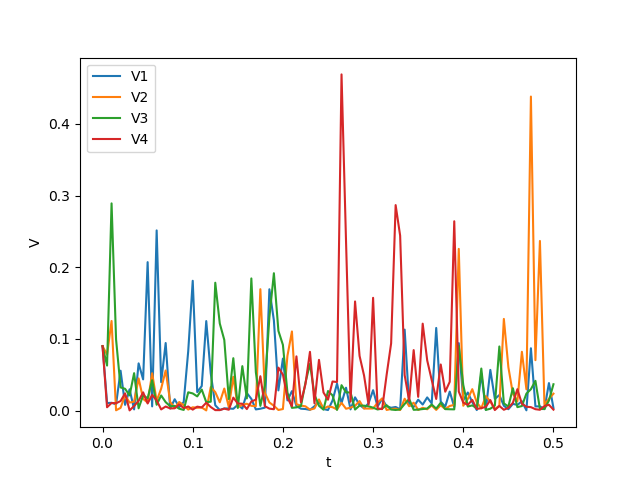}}
 \hspace{0in}
 \subfigure[Paths of $V$ with a fixed value at $t=0.5$]{
 \includegraphics[width=0.45\textwidth]{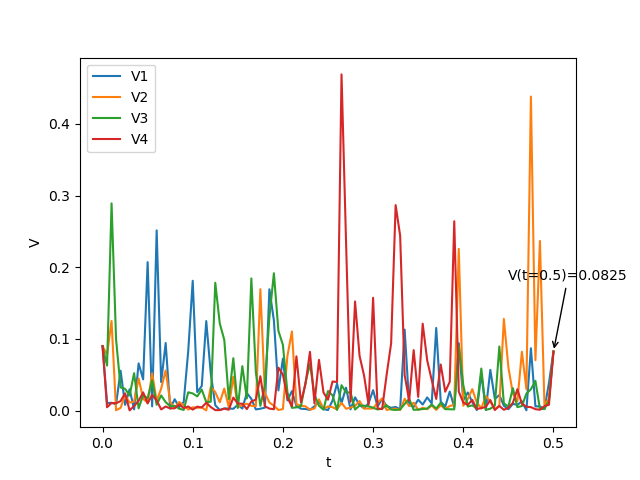}}
 \caption{Different paths of $V$ on time interval $[0,0.5]$}
 \label{Fig1}
\end{figure}
\begin{table}[H]
\centering
\begin{tabular}{c|c|c}
    \text{Paths of process V with } & $u(0.5,\ln 100)$\\ \hline
    $V1(t=0.5)=0.0038$  & $10.3310$ \\
    $V2(t=0.5)=0.0237$  & $10.4847$ \\
    $V3(t=0.5)=0.0369$  & $10.1519$ \\
    $V4(t=0.5)=0.0014$  & $10.3003$ \\
\end{tabular}
\caption{$u(0.5,\ln 100)$ on different paths of $V$ in Figure 1(a)}
\label{Table2}
\end{table}

\begin{table}[H]
\centering
\begin{tabular}{c|c|c}
    \text{Paths of process V with } & $u(0.5,\ln 100) $\\ \hline
    $V1(t=0.5)=0.0825$  & $10.3290$ \\
    $V2(t=0.5)=0.0825$  & $10.4878$ \\
    $V3(t=0.5)=0.0825$  & $10.1457$ \\
    $V4(t=0.5)=0.0825$  & $10.2769$ \\
\end{tabular}
\caption{$u(0.5,\ln 100)$ on different paths of $V$ in Figure 1(b)}
\label{Table3}
\end{table}

\subsubsection{Two schemes for approximating American put options}
 Again, consider the rough Bergomi model in Example \ref{ex:rBergomi} with the
following choice of parameters: $H = 0.07$, $\eta = 1.9, \rho = -0.9$, $r=0.05$, $T=1$, $X_0= \ln (100) $. Also, we choose the forward variance curve to be $\xi(t) \equiv 0.09$ independent of time,  for simplicity. The strike prices may take different values. Then, pricing the American put option is to compute
\begin{align*}
\overline u_0(x):= \sup_{\tau\in \mathcal T_0}E\left[   e^{-\tau r}g_{\tau}(e^{X_{\tau}^{0,x} })  \right],  \text{ with }  g_{\tau}(e^x)=(K-e^{r\tau+x})^+, \text{ for } (\tau,x)\in[0,T]\times \bR.
\end{align*}
We shall adopt two different schemes for the computations for the numerical approximations.

The first scheme is based on the penalization. By Corollary \ref{cor-AO}, $\overline u_0(x)$ may be approximated by $\overline u^{\tilde{N}}_0(x)$ as $\tilde{N}$ tends to infinity, where the pair $(\overline u ^{\tilde{N}}, \overline \psi ^{\tilde{N}})$ is the unique weak solution to
BSPDE \eqref{BSPDE} with
\begin{align*}
F_t(e^x,y,z,\tilde z)
=-ry+\tilde N \left(   g_t(e^x) - y  \right)^+ \quad \text{and}\quad
G(e^x)=g_T(e^x).
\end{align*}
Then the first scheme is to use the algorithm in Section \ref{Sec:deep-learning} to compute $\overline u_0^{\tilde N}(X_0)$ which approximates $\overline u_0(x)$ when $\tilde N$ tends to infinity.

The second scheme is based on the representation via the following forward-backward system:
\begin{equation}\label{R-FBSDE-rB}
\left\{
\begin{split}
dX_s^{0,x}&= \sqrt{V_s} \left( \rho \,dW_s+ \sqrt{1-\rho^2} \,dB_s  \right) -\frac{V_s}{2}\,ds, \quad 0\leq s\leq T;\\
X_0^{0,x}&=x;\\
V_s&=\xi_s \, \cE(\eta\, \widehat W_s) \quad \text{with} \quad \widehat W_s= \int_0^s \sqrt{2H} (s-r)^{H-1/2}\,dW_r, \quad s\in[0,T];\\
-d\overline Y_s^{0,x}&=-r\overline Y_s^{0,x}\,ds +dA^{0,x}_s-\overline Z_s^{0,x;B}\,dB_s -\overline Z_s^{0,x;W}\,dW_s,\quad s\in[0,T];\\
\overline Y_T^{0,x}&=g_T(e^{X_T^{0,x}})   ;\quad \overline Y_s^{0,x}\geq       g_s(e^{X_s^{0,x}}),\quad s\in [0,T];\\
A^{0,x}_{\cdot}& \text{ is increasing and continuous, }A_0^{0,x}=0, \int_0^T \left(\overline Y_s^{0,x}-  g_s(e^{X_s^{0,x}}) \right)\,d A_s^{0,x}=0.
\end{split}\right.
\end{equation}
Recalling the assertion (v) in Corollary \eqref{cor-AO} which gives the following representation
\begin{align*}
\overline u_{\tau}(X_{\tau}^{0,x})=\overline Y_{\tau}^{0,x},\quad
\overline \psi^B_{\tau}(X_{\tau}^{0,x}) =\overline Z_{\tau}^{0,x;B}, \quad \text{and}\quad
\overline \psi^W_{\tau}(X_{\tau}^{0,x})
= \overline Z_{\tau}^{0,x;W},\quad \text{a.s.},
\end{align*}
for $0\leq \tau \leq T$, for some triple $(\overline u,\,\overline \psi^B,\overline \psi^W)$ defined on $(\Omega\times [0,T]\times \bR,\sP^W\otimes \cB(\bR))$, we may use the following scheme:
 \begin{enumerate}
 \item [(1)]  Start with $\widehat{\mathcal{U}}_N = g_T$.
 \item [(2)] For $i=N-1,...,0$, given $\widehat{\mathcal{U}}_{i+1}$, use the triple of deep neural networks
\begin{align}
(\mathcal{U}_i(\cdot,\theta),\mathcal{Z}^B_i(\cdot,\theta),\tilde {\mathcal{Z}}^W_i(\cdot,\theta) )\in \,\,
&
\mathcal{N}\mathcal{N}^{\varrho}_{1+2i,1,M,m}(\mathbb{R}^{M_m})
\times\mathcal{N}\mathcal{N}^{\varrho}_{1+2i,1,M,m}(\mathbb{R}^{M_m})
\nonumber\\
&
\times\mathcal{N}\mathcal{N}^{\varrho}_{1+2i,1,M,m}(\mathbb{R}^{M_m}) \label{neural-network-AMO}
\end{align}
 for the approximation of
 $
 \left(\overline u_{t_i}(X_{t_i}),
 	\overline\psi^B_{t_i}(X_{t_i} ),
 		\overline \psi^W_{t_i}(X_{t_i})\right),
 $
 and obtain an estimate
 $$\mathcal{U}_{i+1}= \mathcal{U}_i(X_{t_i},\theta_i) + r \mathcal{U}_i(X_{t_i},\theta_i) \Delta t_i
 + \mathcal{Z}^B_i(X_{t_i},\theta_i)\, \Delta B_{t_i}+ \mathcal{Z}^W_i(X_{t_i},\theta_i)\,\Delta W_{t_i}.$$
 \item [(3)] Compute the minimizer of the expected quadratic loss function:
\begin{equation*}
\left\{
\begin{split}
\hat{L}_i(\theta): & = {E} \left|
	\widehat{\mathcal{U}}_{i+1}
	-
	\mathcal{U}_{i+1}
	\right|^2,\\
\theta^*_i & \in \arg\min\limits_{\theta\in\mathbb{R}^{N_m}}\hat{L}_i(\theta).
\end{split}
\right.
\end{equation*}
\item [(4)] Update $\widehat{\mathcal{U}}_i= \max\left\{ \mathcal{U}_i(X_{t_i},\theta^*_i),\, g_{t_i}(X_{t_i}) \right\}$.
 \end{enumerate}

The above scheme extends the one proposed in \cite[Section 3.3]{hure2019some} from Markovian cases to a non-Markovian setting, with the main difference lying in the changing dimensions in the neural networks \eqref{neural-network-AMO}. Looking into Appendix for the convergence analysis of the scheme in Section \ref{Sec:deep-learning}, we may extend the  convergence analysis in  \cite[Section 4.3]{hure2019some} to our non-Markovian setting, and as such an extension is similar to that of the scheme in Section \ref{Sec:deep-learning}, the proof is omitted.

In Table 4, the estimates of the above two schemes are presented together with the reference values which are lower bound estimates from \cite{bayer2018pricing}. We take $N=20$ and implement a single hidden layer whose number of neurons is equal to half of the total number of neurons in the input and output layers. The activation function and optimization algorithm we use here are Sigmoid function and Adam. The results are obtained by averaging 20 independent runs. For the first scheme, in theory, $\overline u_0^{\tilde N}(X_0)$ is (bigger and) closer to the real value than $\overline u_0^{\bar N}(X_0)$ when $\tilde N > \bar N$, which is affirmed by the numerical experiments.
We set $\tilde N$ equal to 40 and 10000 for comparisons. The same neural networks are put to use in the second scheme. Here, neural networks with $\geq 2$ hidden layers and/or big number of neurons were also tried, which, we believed, might produce better approximations. However, we found the obtained results were largely different and quite sensitive to the learning rate,  the optimizer,  the iteration numbers, and even the activation function, and this enlightened us to reduce the complexity to use the selected neural networks for relatively stable estimates.   
\begin{table}[H]
\centering
\begin{tabular}{c|c|c|c|c|c|c|c}
\hline
\multirow{2}{*}{ }&
\multirow{2}{*}{reference value} &
\multicolumn{4}{c|}{1st scheme} &
\multirow{2}{*}{2nd scheme} &
\multirow{2}{*}{RSD} \\
\cline{3-6}
  & & N=40 & RSD & N=10000& RSD &{}\\
\hline
$K=90$ & $5.32$ & $5.5053$ & $0.0998$ & $5.5113$ & $0.0980$ &$5.5497$ & $0.0895$ \\
$K=100$ & $8.51$ & $9.6392$ & $0.0553$ & $9.6672$ & $0.0582$ & $9.6867$ & $0.0552$ \\
$K=110$ & $13.24$ & $15.4707$ & $0.0196$ & $15.4882$ & $0.0243$ & $15.5020$ & $0.0292$ \\
$K=120$ & $20$ & $22.5800$ & $0.0213$ & $22.6069$ & $0.0221$ &$22.5742$ & $0.0114$ \\
\end{tabular}
\caption{Prices of American put options at t=0 under two different schemes.}
\label{Table4}
\end{table}

While the two schemes presented in this paper yield results that are very
close to each other (well within confidence intervals for the Monte Carlo
error), the references values from \cite{bayer2018pricing} differ
significantly. It should be noted that the results from
\cite{bayer2018pricing} -- which were also recovered by a similar method
suggested in \cite{goudenege2020machine} -- are only supported by theory for
Markov models. Moreover, those results are lower bounds, and currently, to the
best of our knowledge, no efficient numerical methods providing upper bounds
of American option prices in rough volatility models has been provided. In
contrast, our method is supported by theory. In essence, this leads us to the
uncomfortable conclusion that either the reference values from
\cite{bayer2018pricing} or our own results -- or both -- are highly
inaccurate, and that we are unable to discern which.

In order to backtest our algorithm, we additionally consider a
classical Markovian case, setting  $\rho=\eta=0$ and keeping the other parameters unchanged,
The estimates of the above two schemes are compared with the option prices calculated by \textit{binprice} function in the financial toolbox of Matlab. It can be seen from Table 5 that our results are pretty close to the option price estimates by using the Cox-Ross-Rubinstein binomial model.
\begin{table}[H]
\centering
\begin{tabular}{c|c|c|c|c|c|c|c}
\hline
\multirow{2}{*}{ }&
\multirow{2}{*}{Reference value} &
\multicolumn{4}{c|}{1st scheme} &
\multirow{2}{*}{2nd scheme} &
\multirow{2}{*}{RSD} \\
\cline{3-6}
  & & N=40 & RSD & N=10000& RSD &{}\\
\hline
    $K=90$& $5.6168$ & $5.5700$ & $0.0949$ & $5.5945$ & $0.0931$ &$5.6157$ & $0.0881$ \\
    $K=100$& $9.7980$ & $9.7465$ & $0.0520$ & $9.7779$ & $0.0504$ &$9.7928$ & $0.0555$\\
    $K=110$& $15.6720$ & $15.6176$ & $0.0265$ &$15.6516$ & $0.0210$ & $15.6341$ & $0.0221$\\
    $K=120$& $22.7501$ & $22.7140$ & $0.0204$ & $22.7367$ & $0.0185$ & $22.6994$ & $0.0106$\\
\end{tabular}
\caption{American option prices when $\rho=\eta=0$.}
\label{Table5}
\end{table}




\appendix

\section{Convergence analysis}
\label{sec:convergence-analysis}

This section is to devoted to a convergence analysis for the deep learning-based scheme proposed in Section \ref{Sec:deep-learning}. The discussions are conducted under Assumptions $(\cA^*)$, \ref{ass:stoch-var}, \ref{ass}, and the following one:\\[5pt]
{\bfseries (H1)}  (i) There exists a continuous and increasing function $\rho: [0,\infty) \rightarrow [0,\infty)$ with $\rho(0)=0$ such that for any $0\leq t_1\leq t_2\leq  T$, it holds that
$$
E\left[ \int_{t_1}^{t_2} V_s\,ds \right]  + E\left[\left(\int_{t_1}^{t_2} V_s\,ds\right)^2\right]\leq \rho(|t_1-t_2|).
$$
(ii) There exists a constant $L_2 > 0$ such that
\begin{align*}
&|F_{t_1}( e^{x_1}, y_1, z_1,\tilde{z}_1)-F_{t_2}( e^{x_2}, y_2, z_2,\tilde{z}_2)|
\\
&\leq L_2 (\sqrt{\rho(|t_2-t_1|)}+|x_2-x_1|+|y_2-y_1|+|z_2-z_1|+|\tilde{z}_2-\tilde{z}_1|),
\end{align*}
for all $(t_1, x_1, y_1, z_1,\tilde{z}_1)$ and $(t_2, x_2, y_2, z_2,\tilde{z}_2)$ in $[0, T] \times \mathbb{R} \times \mathbb{R} \times \mathbb{R} \times \mathbb{R}.$

\begin{rmk}
In fact, for examples like \ref{ex:rBergomi} and \ref{ex:rough-Heston}, one has
$$
E\left[ \int_0^T |V_t|^p\, dt \right] <\infty, \quad \text{for some }p>2,
$$
which, by H\"older's inequality, implies
\begin{align*}
&E\int_{t_1}^{t_2} V_t \,dt \leq |t_1-t_2|^{\frac{p-1}{p}} \left( E\int_{t_1}^{t_2} |V_t|^p\,dt \right)^{1/p} \leq C_p \left|t_1-t_2\right|^{\frac{p-1}{p}},\quad \text{for }0\leq t_1\leq t_2\leq T,\\
&E\left[\left(\int_{t_1}^{t_2} V_t \,dt\right)^2\right]
\leq |t_1-t_2|^{\frac{2(p-1)}{p}} \left( E\int_{t_1}^{t_2} |V_t|^p\,dt \right)^{2/p} \leq C^2_p \left|t_1-t_2\right|^{\frac{2(p-1)}{p}},\quad \text{for }0\leq t_1\leq t_2\leq T,
\end{align*}
and thus, we may take $\rho(r)=(C_p+C_p^2) \cdot \left( |r|^{\frac{p-1}{p}} \vee  |r|^{\frac{2(p-1)}{p}}\right)$, for $r\geq 0$. Further, one may straightforwardly check that the numerical examples discussed in Section \ref{section:numerical} have Assumption {\bfseries (H1)} satisfied.
\end{rmk}

In what follows, we denote by $C$ a positive generic constant whose value is independent of $\pi$ and may vary from line to line, and by $\mathcal{X}$ we denote the unique (strong) solution to the SDE \eqref{forward-X} start at $t=0$ and by $X = X^{\pi}$ the Euler-Maruyama approximation with a time grid $\pi = \{t_0 = 0 < t_1 < ... < t_N = T\}$, with modulus $|\pi| =\max_{1\leq i\leq N}|t_i-t_{i-1}|$ bounded by  $\frac{CT}{N}$ for some constant $C$. Under Assumptions \ref{ass:stoch-var} and \textbf{(H1)}, standard calculations yield that
\begin{align}
 {E}\left[\sup_{0\le t\le T} |\mathcal{X}_t|^2\right]  \le C (1+|x_0|^2)\label{X},\\
\max_{i=0,...,N-1} {E} \left[|\mathcal{X}_{t_{i+1}}-X_{t_{i+1}}|^2 + \sup_{t\in[t_i,t_{i+1}]}|\mathcal{X}_t-X_{t_i}|^2\right]\le C\rho(|\pi|). \label{rho}
\end{align}
By the theory of BSDEs (see \cite{Hu_2002} for instance),  Assumptions \ref{ass:stoch-var}, \ref{ass}, and  {\bfseries (H1)} imply the existence and uniqueness of an adapted $L^2$-solution $(Y,Z,\tilde{Z})$ to BSDE \eqref{BSDE}, which together with \eqref{X} and {\bfseries (H1)}-(ii) gives
\begin{equation}\label{F}
 {E}\left[\int_{0}^{T}|F_t(e^{\mathcal{X}_t},Y_t,Z_t,\tilde{Z}_t)|^2dt\right]<\infty
\end{equation}
and the standard $L^2$-regularity result on $Y$:
\begin{equation}\label{Y}
\max_{i=0,...,N-1} {E}\left[\sup_{t\in[t_i,t_{i+1}]}|Y_t-Y_{t_i}|^2 \right] = O(|\pi|).
\end{equation}

For the pair $(Z,\tilde Z)$, set
\begin{equation}
\left\{\begin{array}{l}{
 \varepsilon^{Z} (\pi) : = {E}\left[\sum_{i=0}^{N-1}\int_{t_i}^{t_{i+1}}|Z_t-\bar{Z}_{t_i}|^2dt\right],\quad \text{with } \bar{Z}_{t_i} := \frac{1}{\Delta t_i} {E}_i\left[\int_{t_i}^{t_{i+1}}Z_tdt\right]},\\
{\varepsilon^{\tilde{Z}}(\pi) :=  {E}\left[ \sum_{i=0}^{N-1}\int_{t_i}^{t_{i+1}}|\tilde{Z}_t-\bar{\tilde{Z}}_{t_i}|^2dt\right],\quad \text{with } \bar{\tilde{Z}}_{t_i} := \frac{1}{\Delta t_i} {E}_i\left[\int_{t_i}^{t_{i+1}}\tilde{Z}_tdt\right] },
\end{array}\right.
\end{equation}
where $ E _i$ denotes the conditional expectation given $\mathcal{F}_{t_i}$. 

To investigate the convergence of the deep learning scheme, we define, for $i = 0,...,N-1$,
\begin{equation}\label{VZZ}
\left\{\begin{array}{l}
{\widehat{\mathcal{V}}_{t_{i}} := E _{i}[\widehat{\mathcal{U}}_{i+1}(X_{t_{i+1}})]+F_{t_{i}}( e^{X_{t_{i}}}, \widehat{\mathcal{V}}_{t_{i}}, \overline{\widehat{Z}_{t_{i}}}, \overline{\widehat{\tilde{Z}}_{t_{i}}})\Delta t_{i},} \\ {\overline{\widehat{Z}_{t_{i}}} :=\frac{1}{\Delta t_{i}}  E _{i}[(\widehat{\mathcal{U}}_{i+1}(X_{t_{i+1}})\Delta B_{t_{i}}],} \\ {\overline{\widehat{\tilde{Z}}_{t_{i}}} :=\frac{1}{\Delta t_{i}}  E _{i}[(\widehat{\mathcal{U}}_{i+1}(X_{t_{i+1}})\Delta W_{t_{i}}],}
\end{array}\right.
\end{equation}
where, $\widehat{\mathcal{V}}_{t_i}$ is well-defined for sufficiently small $|\pi|$ due to the uniform Lipschitz continuity of $F$. In view of Theorem \ref{thm-uniqueness}, we may find $\sF^W_{t_i}\otimes \cB(\bR)$-measurable functions $\hat{v}_i$, $\overline{\hat{z}_i}$, and $\overline{\hat{\tilde{z}}_i}$ s.t.
\begin{equation}\label{4.6}
\widehat{\mathcal{V}}_{t_i} = \hat{v}_i(X_{t_i}),\quad \overline{\widehat{Z}_{t_i}} = \overline{\hat{z}_i}(X_{t_i}),\quad and \quad \overline{\widehat{\tilde{Z}}_{t_i}} = \overline{\hat{\tilde{z}}_i}(X_{t_i}), \quad i=0,...,N-1.
\end{equation}
On the other hand, by the martingale representation theorem, there exist two $\mathbb{R}$-valued square integrable processes $\{\widehat{Z}_t\}$ and $\{\widehat{\tilde{Z}}_t\}$ s.t.
\begin{equation}\label{4.7}
\widehat{\mathcal{U}}_{i+1}(X_{t_{i+1}})=\widehat{\mathcal{V}}_{t_i}-F_{t_i}(e^{X_{t_i}},\widehat{\mathcal{V}}_{t_i},\overline{\widehat{Z}_{t_i}},\overline{\widehat{\tilde{Z}}_{t_i}})\Delta t_i +\int_{t_i}^{t_{i+1}}\widehat{Z}_t\, dB_t+\int_{t_i}^{t_{i+1}}\widehat{\tilde{Z}}_t\, dW_t,
\end{equation}
and It\^{o}'s isometry gives
\begin{center}
$\overline{\widehat{Z}_{t_i}} = \frac{1}{\Delta t_i} E _i[\int_{t_i}^{t_{i+1}}\widehat{Z}_tdt],\quad \overline{\widehat{\tilde{Z}}_{t_i}} = \frac{1}{\Delta t_i} E _i[\int_{t_i}^{t_{i+1}}\widehat{\tilde{Z}}_tdt],\quad i=0,...,N-1.$
\end{center}
The distance between the optimal triple $(\widehat{\mathcal{U}}_{i}, \, \widehat{\mathcal{Z}}_{i},\,\widehat{\mathcal{\tilde{Z}}}_{i} )$ from the deep learning-based scheme and $({\widehat{\mathcal{V}}_{t_{i}}}, \, {\overline{\widehat{Z}_{t_{i}}}},\, {\overline{\widehat{\tilde{Z}}_{t_{i}}}} )$ from the system \eqref{VZZ} is given as follows.
\begin{lem}\label{lem-dis-v}
Let Assumptions $(\cA^*)$, \ref{ass:stoch-var}, \ref{ass}, and {\bfseries (H1)} hold.  When $|\pi|$  is sufficiently small, we have
\begin{align}
 &E |\widehat{\mathcal{V}}_{t_{i}}-\widehat{\mathcal{U}}_{i}(X_{t_{i}})|^{2}
 +\Delta t_{i}  E\left[ | \overline{\widehat{Z}_{t_{i}}}-\widehat{\mathcal{Z}}_i(X_{t_{i}})|^{2}
 +  | \overline{\widehat{\tilde{Z}}_{t_{i}}}-\widehat{\tilde{\mathcal{Z}}}_i(X_{t_{i}})|^{2} \right]
 \nonumber\\
 &\leq C \varepsilon_{i}^{\mathcal{N}, v}+C \Delta t_{i} \varepsilon_{i}^{\mathcal{N}, z}+C \Delta t_{i} \varepsilon_{i}^{\mathcal{N}, \tilde{z}},\label{4.18}
\end{align}
where we use
\begin{align*}
\varepsilon_{i}^{\mathcal{N}, v} :=\inf _{\xi}  E |\hat{v}_{i}(X_{t_{i}})-\mathcal{U}_{i}(X_{t_{i}} ; \xi)|^{2},
\quad \varepsilon_{i}^{\mathcal{N}, z} :=\inf _{\eta}  E |\overline{\hat{z}_i}(X_{t_{i}})-\mathcal{Z}_{i}(X_{t_{i}} ; \eta)|^{2},
\end{align*}
and $ \varepsilon_{i}^{\mathcal{N}, \tilde{z}} :=\inf _{\eta}  E |\overline{\hat{\tilde{z}}_i}(X_{t_{i}})-\tilde{\mathcal{Z}}_{i}(X_{t_{i}} ; \eta)|^{2}$ to denote the $L^2$-approximation errors of $\hat{v}_i$ ,$\overline{\hat{z}_i}$, and $\overline{\hat{\tilde{z}}_i}$ by neural networks $\mathcal{U}_i$ , $\mathcal{Z}_i$, and $\tilde{\mathcal{Z}}_i$,  for $i = 0,...,N-1$.
\end{lem}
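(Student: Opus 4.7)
The plan is to compare the quadratic loss $\hat L_i(\theta^*_i)$ actually attained by training with the value of the same loss at a parameter $\theta$ furnished by the universal approximation result in Proposition \ref{prop-infty}, and to read \eqref{4.18} off from the resulting two-sided estimate. Introducing the shorthands
\begin{equation*}
\delta_{\mathcal{U}}(\theta):=\widehat{\mathcal{V}}_{t_i}-\mathcal{U}_i(X_{t_i};\theta),\ \ \delta_{\mathcal{Z}}(\theta):=\overline{\widehat{Z}_{t_i}}-\mathcal{Z}_i(X_{t_i};\theta),\ \ \delta_{\tilde{\mathcal{Z}}}(\theta):=\overline{\widehat{\tilde{Z}}_{t_i}}-\tilde{\mathcal{Z}}_i(X_{t_i};\theta),
\end{equation*}
and $\Delta F(\theta):=F_{t_i}(e^{X_{t_i}},\widehat{\mathcal{V}}_{t_i},\overline{\widehat{Z}_{t_i}},\overline{\widehat{\tilde{Z}}_{t_i}})-F_{t_i}(e^{X_{t_i}},\mathcal{U}_i(X_{t_i};\theta),\mathcal{Z}_i(X_{t_i};\theta),\tilde{\mathcal{Z}}_i(X_{t_i};\theta))$, I would substitute the martingale representation \eqref{4.7} of $\widehat{\mathcal{U}}_{i+1}(X_{t_{i+1}})$ into $\hat L_i(\theta)$. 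Since $\delta_{\mathcal{U}}(\theta)-\Delta F(\theta)\Delta t_i$ is $\sF_{t_i}$-measurable, the stochastic integrals on $[t_i,t_{i+1}]$ against $B$ and $W$ have zero conditional mean given $\sF_{t_i}$, and $B,W$ are independent, the three resulting contributions are pairwise $L^2$-orthogonal. Combining this with the Pythagorean identity
\begin{equation*}
E\int_{t_i}^{t_{i+1}}|\widehat{Z}_t-\mathcal{Z}_i(X_{t_i};\theta)|^2\,dt = E\int_{t_i}^{t_{i+1}}|\widehat{Z}_t-\overline{\widehat{Z}_{t_i}}|^2\,dt + \Delta t_i\,E|\delta_{\mathcal{Z}}(\theta)|^2,
\end{equation*}
valid because $\overline{\widehat{Z}_{t_i}}$ is the time-averaged $\sF_{t_i}$-conditional expectation of $\widehat Z$ on $[t_i,t_{i+1}]$, and its analog for $\widehat{\tilde Z}$, and gathering all $\theta$-independent residuals in a single constant $R$, I obtain
\begin{equation*}
\hat L_i(\theta) = E|\delta_{\mathcal{U}}(\theta)-\Delta F(\theta)\Delta t_i|^2 + \Delta t_i E|\delta_{\mathcal{Z}}(\theta)|^2 + \Delta t_i E|\delta_{\tilde{\mathcal{Z}}}(\theta)|^2 + R.
\end{equation*}

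From this identity I would derive matching lower and upper bounds on $\hat L_i(\theta)-R$. For the lower bound at $\theta^*_i$, an application of Young's inequality to $E|\delta_{\mathcal U}(\theta^*_i)-\Delta F(\theta^*_i)\Delta t_i|^2$ with a small parameter of order $\Delta t_i$, combined with the Lipschitz estimate $|\Delta F|^2\leq 3L_0^2(|\delta_{\mathcal U}|^2+|\delta_{\mathcal Z}|^2+|\delta_{\tilde{\mathcal Z}}|^2)$ from Assumption \ref{ass}, leads after a calibrated choice of the Young parameter (e.g., $\gamma=6L_0^2$) and for $|\pi|$ sufficiently small to
\begin{equation*}
\hat L_i(\theta^*_i)-R \geq c\bigl(E|\delta_{\mathcal U}(\theta^*_i)|^2 + \Delta t_i E|\delta_{\mathcal Z}(\theta^*_i)|^2 + \Delta t_i E|\delta_{\tilde{\mathcal Z}}(\theta^*_i)|^2\bigr)
\end{equation*}
for some absolute $c>0$. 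Conversely, for an arbitrary $\theta$, the elementary inequality $(a-b)^2\leq 2a^2+2b^2$ together with the same Lipschitz estimate produces the matching upper bound $\hat L_i(\theta)-R \leq C\bigl(E|\delta_{\mathcal U}(\theta)|^2 + \Delta t_i E|\delta_{\mathcal Z}(\theta)|^2 + \Delta t_i E|\delta_{\tilde{\mathcal Z}}(\theta)|^2\bigr)$.

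To close the argument, I would then exploit the freedom of choosing the sub-parameters of $\theta=(\xi_{\mathcal U},\xi_{\mathcal Z},\xi_{\tilde{\mathcal Z}})$ independently: by \eqref{4.6} the targets $\hat v_i(X_{t_i}),\overline{\hat z_i}(X_{t_i}),\overline{\hat{\tilde z}_i}(X_{t_i})$ lie in $L^2(\Omega,\sF^W_{t_i}\otimes\cB(\bR))$, hence by Proposition \ref{prop-infty} together with Remark \ref{rmk-filtration} (which handles the $\widehat W$-increment inputs to the networks in \eqref{Neural-Network-i}) each can be $L^2$-approximated to arbitrary precision. Picking $\theta$ whose three component approximation errors are within an arbitrarily small excess of $\varepsilon_i^{\mathcal N,v},\varepsilon_i^{\mathcal N,z},\varepsilon_i^{\mathcal N,\tilde z}$ respectively, then chaining the lower bound at $\theta^*_i$ with the upper bound at $\theta$ via the minimality $\hat L_i(\theta^*_i)\leq \hat L_i(\theta)$, subtracting $R$ from both sides, and passing to the infimum delivers \eqref{4.18}. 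The main obstacle is the coupling produced by $\Delta F$: the Lipschitz leakage of $\delta_{\mathcal Z},\delta_{\tilde{\mathcal Z}}$ into $E|\delta_{\mathcal U}-\Delta F\,\Delta t_i|^2$ has to be absorbed via the Young split carefully enough that the coefficient of $E|\delta_{\mathcal U}(\theta^*_i)|^2$ in the lower bound stays strictly positive (forcing the $|\pi|$-smallness hypothesis, which is also required for the very well-definedness of $\widehat{\mathcal V}_{t_i}$ in \eqref{VZZ}) while the correct $\Delta t_i$-scaling of the $\delta_{\mathcal Z},\delta_{\tilde{\mathcal Z}}$ terms is preserved on both sides so as to match the explicit $\Delta t_i$ multipliers in front of $\varepsilon_i^{\mathcal N,z},\varepsilon_i^{\mathcal N,\tilde z}$ on the right of \eqref{4.18}.
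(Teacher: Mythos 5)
Your proposal is correct and follows essentially the same route as the paper's proof: the orthogonal decomposition of $\hat L_i(\theta)$ via the martingale representation \eqref{4.7} and the $L^2$-projection property of $\overline{\widehat{Z}_{t_i}},\overline{\widehat{\tilde Z}_{t_i}}$ (the paper's identity \eqref{4.15}), two-sided Young/Lipschitz bounds on the $\theta$-dependent part with a calibrated parameter $\gamma$ of order $L^2$, and chaining the lower bound at the minimizer $\theta_i^*$ against the upper bound at a near-infimal $\theta$, with the three network parameters chosen independently. The only cosmetic deviations are harmless: you use the cruder split $(a-b)^2\le 2a^2+2b^2$ in the upper bound where the paper uses the $\gamma\Delta t_i$-weighted Young inequality (both yield the generic constant $C$), and your appeal to Proposition \ref{prop-infty} at the end is superfluous, since the lemma only requires picking $\theta$ within an arbitrarily small excess of the defining infima, not that those infima be small.
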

To focus on the convergence analysis, we postpone the proof of Lemma \ref{lem-dis-v}. Define the following square error:
\begin{align*}
\mathcal{E}[(\widehat{\mathcal{U}}, \widehat{\mathcal{Z}}, \widehat{\mathcal{\tilde{Z}}}),(Y, Z, \tilde{Z})]
&=\max _{i=0, \ldots, N-1}  E \left[ |Y_{t_{i}}-\widehat{\mathcal{U}}_{i}(X_{t_{i}})|^{2} \right]+ E \left[\sum_{i=0}^{N-1} \int_{t_{i}}^{t_{i+1}}|Z_{t}-\widehat{\mathcal{Z}}_{i}(X_{t_{i}})|^{2}dt\right]
\\
&\quad\quad+ E \left[\sum_{i=0}^{N-1} \int_{t_{i}}^{t_{i+1}}|\tilde{Z}_{t}-\widehat{\mathcal{\tilde{Z}}}_{i}(X_{t_{i}})|^{2}dt\right].
\end{align*}


\begin{thm}\label{thm-converg}
 Under Assumptions $(\cA^*)$,  \ref{ass:stoch-var}, \ref{ass}, and {\bfseries (H1)}, it holds that
{\small
\begin{align}
&\mathcal{E}[(\widehat{\mathcal{U}}, \widehat{\mathcal{Z}}, \widehat{\tilde{\mathcal{Z}}}),(Y, Z, \tilde{Z})] \nonumber \\
& \leq C\left\{  E |G(\mathcal{X}_{T})-G(X_{T})|^{2}+\rho(|\pi|) + |\pi|+\varepsilon^{Z}(\pi)+\varepsilon^{\tilde{Z}}(\pi)+\sum_{i=0}^{N-1}(N \varepsilon_{i}^{\mathcal{N}, v}+\varepsilon_{i}^{\mathcal{N}, z}+\varepsilon_{i}^{\mathcal{N}, \tilde{z}})\right\}, \label{4.8}
\end{align}
}
where the constant $C$ is independent of the partition $\pi$.
\end{thm}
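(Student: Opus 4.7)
The plan is to split the total error through the auxiliary ``exact conditional-expectation'' scheme $(\widehat{\mathcal{V}}_{t_i},\overline{\widehat{Z}_{t_i}},\overline{\widehat{\tilde{Z}}_{t_i}})$ from \eqref{VZZ}:
$$
|Y_{t_i}-\widehat{\mathcal{U}}_i(X_{t_i})|^2 \le 2|Y_{t_i}-\widehat{\mathcal{V}}_{t_i}|^2+2|\widehat{\mathcal{V}}_{t_i}-\widehat{\mathcal{U}}_i(X_{t_i})|^2,
$$
and analogously for $Z$ and $\tilde Z$ with path-averages $\bar Z_{t_i},\bar{\tilde Z}_{t_i}$ and their scheme counterparts inserted. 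The ``neural-network approximation'' piece is exactly what Lemma \ref{lem-dis-v} controls, so everything reduces to the pure time-discretization error between the continuous FBSDE solution $(Y,Z,\tilde Z)$ and the implicit backward-Euler triple $(\widehat{\mathcal{V}}_{t_i},\overline{\widehat{Z}_{t_i}},\overline{\widehat{\tilde{Z}}_{t_i}})$.

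Step one is a one-step recursion for the $Y$-component. Writing the BSDE between $t_i$ and $t_{i+1}$, taking $E_i[\cdot]$, and subtracting \eqref{VZZ} gives
$$
Y_{t_i}-\widehat{\mathcal{V}}_{t_i}= E_i\!\left[Y_{t_{i+1}}-\widehat{\mathcal{U}}_{i+1}(X_{t_{i+1}})\right]+ E_i\!\!\int_{t_i}^{t_{i+1}}\!\!\!\bigl(F_s(e^{\mathcal{X}_s},Y_s,Z_s,\tilde Z_s)-F_{t_i}(e^{X_{t_i}},\widehat{\mathcal V}_{t_i},\overline{\widehat Z_{t_i}},\overline{\widehat{\tilde Z}_{t_i}})\bigr) ds.
$$
Squaring, using the Lipschitz bound on $F$ from \textbf{(H1)}(ii), Young's inequality and the regularity estimates \eqref{X}, \eqref{rho}, \eqref{Y}, I obtain for $|\pi|$ small
$$
E|Y_{t_i}-\widehat{\mathcal V}_{t_i}|^2 \le (1+C\Delta t_i) E|Y_{t_{i+1}}-\widehat{\mathcal U}_{i+1}(X_{t_{i+1}})|^2 + C\Delta t_i\bigl(\rho(|\pi|)+|\pi|\bigr) + C\Delta t_i\bigl(\mathcal{R}^Z_i+\mathcal{R}^{\tilde Z}_i\bigr),
$$
where $\mathcal R^Z_i,\mathcal R^{\tilde Z}_i$ are the local $L^2$-regularity defects of $Z,\tilde Z$ around $t_i$; the implicit character of \eqref{VZZ} forces the restriction $L_0\Delta t_i<\tfrac12$, which is exactly where ``$|\pi|$ sufficiently small'' enters.

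Step two handles $(Z,\tilde Z)$. Multiplying the BSDE identity for $Y_{t_{i+1}}-Y_{t_i}$ by $\Delta B_{t_i}$ (resp.\ $\Delta W_{t_i}$), taking $E_i$, and subtracting the analogue for \eqref{4.7} yields
$$
\Delta t_i E|\bar Z_{t_i}-\overline{\widehat Z_{t_i}}|^2 + \Delta t_i E|\bar{\tilde Z}_{t_i}-\overline{\widehat{\tilde Z}_{t_i}}|^2 \le C E|Y_{t_{i+1}}-\widehat{\mathcal U}_{i+1}(X_{t_{i+1}})|^2
$$
by Cauchy--Schwarz and It\^o's isometry. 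Integrating the expansion $|Z_t-\widehat{\mathcal Z}_i(X_{t_i})|^2 \le 3|Z_t-\bar Z_{t_i}|^2+3|\bar Z_{t_i}-\overline{\widehat Z_{t_i}}|^2+3|\overline{\widehat Z_{t_i}}-\widehat{\mathcal Z}_i(X_{t_i})|^2$ over $[t_i,t_{i+1}]$, summing over $i$, and using Lemma \ref{lem-dis-v} for the last summand produces the $\varepsilon^Z(\pi)+\varepsilon^{\tilde Z}(\pi)+\sum_i(\varepsilon_i^{\mathcal N,z}+\varepsilon_i^{\mathcal N,\tilde z})$ contribution in \eqref{4.8}.

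The final step is to feed these two ingredients into a discrete Gronwall argument. Adding $E|\widehat{\mathcal V}_{t_i}-\widehat{\mathcal U}_i(X_{t_i})|^2\le C\varepsilon_i^{\mathcal N,v}+C\Delta t_i(\varepsilon_i^{\mathcal N,z}+\varepsilon_i^{\mathcal N,\tilde z})$ from Lemma \ref{lem-dis-v} to the $Y$-recursion produces
$$
a_i \le (1+C\Delta t_i)a_{i+1}+C\Delta t_i\bigl(\rho(|\pi|)+|\pi|\bigr)+C\varepsilon_i^{\mathcal N,v}+C\Delta t_i(\varepsilon_i^{\mathcal N,z}+\varepsilon_i^{\mathcal N,\tilde z})+C\Delta t_i(\mathcal R_i^Z+\mathcal R_i^{\tilde Z}),
$$
with $a_i:=E|Y_{t_i}-\widehat{\mathcal U}_i(X_{t_i})|^2$ and $a_N=E|G(\mathcal X_T)-G(X_T)|^2$. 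Iterating backward and using $\prod_i(1+C\Delta t_i)\le e^{CT}$, the extra factor $N=T/|\pi|$ is absorbed into $\Delta t_i^{-1}$ only for $\varepsilon_i^{\mathcal N,v}$, explaining the $N\varepsilon_i^{\mathcal N,v}$ in \eqref{4.8}; substituting this bound back into step two gives the $Z,\tilde Z$ parts of $\mathcal E[\cdot,\cdot]$.

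The main technical obstacle is precisely this coefficient-asymmetry: the one-step estimate for $(\bar Z,\bar{\tilde Z})$ loses a factor $\Delta t_i^{-1}$ relative to that for $Y$, so propagating the neural-network error of $\widehat{\mathcal V}$ through the Gronwall recursion magnifies $\varepsilon_i^{\mathcal N,v}$ by the number of steps. A secondary issue is justifying that the implicit step \eqref{VZZ} is well-posed and that $(\widehat{\mathcal V}_{t_i},\overline{\widehat Z_{t_i}},\overline{\widehat{\tilde Z}_{t_i}})$ can be represented as deterministic functions \eqref{4.6} of $X_{t_i}$ so that Lemma \ref{lem-dis-v}'s projection errors $\varepsilon_i^{\mathcal N,\cdot}$ are meaningful; this follows from the $\sF^W_{t_i}$-measurability in Theorem \ref{thm-main-measurability} and the contraction guaranteed by $L_0|\pi|<1/2$.
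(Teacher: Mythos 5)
Your overall architecture coincides with the paper's: split the error through the implicit scheme \eqref{VZZ}, invoke Lemma \ref{lem-dis-v} for the network-projection errors, run a one-step recursion for $Y_{t_i}-\widehat{\mathcal{V}}_{t_i}$, use the Young split $(a+b)^2\ge(1-|\pi|)a^2-|\pi|^{-1}b^2$ to convert $\widehat{\mathcal{V}}_{t_i}$ into $\widehat{\mathcal{U}}_i(X_{t_i})$ (which is indeed the sole source of the factor $N\varepsilon_i^{\mathcal{N},v}$), and close with discrete Gronwall. Your Step 1 is essentially the paper's derivation of \eqref{4.12}, modulo the omitted but harmless term $C|\pi|E\bigl[\int_{t_i}^{t_{i+1}}|F_t(e^{\mathcal{X}_t},Y_t,Z_t,\tilde Z_t)|^2dt\bigr]$, which is summable by \eqref{F}.

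The genuine gap is in your Step 2. The sharp consequence of multiplying by $\Delta B_{t_i}$, $\Delta W_{t_i}$ and conditioning is the paper's \eqref{4.11},
\[
\Delta t_i\, E\bigl[|\bar Z_{t_i}-\overline{\widehat Z_{t_i}}|^2\bigr]\le 2\Bigl(E|Y_{t_{i+1}}-\widehat{\mathcal U}_{i+1}(X_{t_{i+1}})|^2-E\bigl|E_i[Y_{t_{i+1}}-\widehat{\mathcal U}_{i+1}(X_{t_{i+1}})]\bigr|^2\Bigr)+2\Delta t_i\, E\Bigl[\int_{t_i}^{t_{i+1}}|F_t|^2\,dt\Bigr],
\]
and the \emph{subtracted conditional second moment is essential}. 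You discard it, asserting $\Delta t_i E|\bar Z_{t_i}-\overline{\widehat Z_{t_i}}|^2\le C\,E|Y_{t_{i+1}}-\widehat{\mathcal U}_{i+1}(X_{t_{i+1}})|^2$; summing over $i$ then yields $C\sum_i a_{i+1}\approx CN\max_i a_i$, i.e.\ an uncontrolled extra factor $N$ multiplying \emph{every} term of your $Y$-estimate (including $E|G(\mathcal X_T)-G(X_T)|^2$, $\rho(|\pi|)$, $\varepsilon^Z(\pi)$), which is incompatible with \eqref{4.8}. The paper's Step 3 is built precisely to avoid this: it retains the difference of second moments, shifts indices so that the sum of these differences telescopes against the one-step $Y$-recursion (leaving only the terminal contribution $2E|G(\mathcal X_T)-G(X_T)|^2$ plus local errors), and chooses the Young parameter large ($\gamma=50L_2^2$, so that $\tfrac{10L_2^2}{\gamma}(1+\gamma|\pi|)/(1-|\pi|)\le\tfrac14$) so that the local $Z,\tilde Z$-errors reappearing on the right are absorbed into the left-hand side at coefficient $\tfrac14$. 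Without this orthogonality--telescoping mechanism, your closing claim that ``substituting this bound back into step two gives the $Z,\tilde Z$ parts'' does not go through. Two smaller corrections: the functions in \eqref{4.6} are $\sF^W_{t_i}\otimes\cB(\bR)$-measurable \emph{random} fields, not deterministic functions of $X_{t_i}$ --- this non-Markovian feature is exactly why the network classes of Proposition \ref{prop-infty} take the Brownian path values as additional inputs; and the Lipschitz constant governing the well-posedness of the implicit step \eqref{VZZ} is $L_2$ from {\bfseries (H1)}, not $L_0$.
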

The computations involved in the proofs of Lemma \ref{lem-dis-v} and Theorem \ref{thm-converg} are conducted in a similar way to \cite[Section 4.1]{hure2019some} by  Hur{\'e}, Pham, and Warin, with the main differences lying in the approximations of the random variables with dimension-varying neural networks and the general modulus function $\rho(\pi)$. We provide the proofs for the reader's interests.
\begin{proof}[Proof of Theorem \ref{thm-converg}]
\textbf{Step 1.}
  We first derive a recursive estimate for the square norm of $Y_{t_{i}}-\widehat{\mathcal{V}}_{t_{i}}$, i.e.,  \begin{align}
   E |Y_{t_{i}}-\widehat{\mathcal{V}}_{t_{i}}|^{2} &\leq (1+C|\pi|)  E |Y_{t_{i+1}}-\widehat{\mathcal{U}}_{i+1}(X_{t_{i+1}})|^{2}
   +C|\pi|  E \left[\int_{t_{i}}^{t_{i+1}}|F_t(e^{\mathcal{X}_{t}}, Y_{t}, Z_{t},\tilde{Z}_t)|^{2}dt\right]
   \notag\\
   &
   +C  E \left[\int_{t_{i}}^{t_{i+1}}\left(|\tilde{Z}_{t}-\bar{\tilde{Z}}_{t_{i}}|^{2}+|Z_{t}-\bar{Z}_{t_{i}}|^{2} \right)dt\right]
   +C\rho(|\pi|)|\pi|, \label{4.12}
  \end{align}
  for each $i\in \{0,...,N-1\}$.

 In view of \eqref{BSDE} and \eqref{VZZ}, we have
\begin{align}
Y_{t_{i}}-\widehat{\mathcal{V}}_{t_{i}}=& E _{i}[Y_{t_{i+1}}-\widehat{\mathcal{U}}_{i+1}(X_{t_{i+1}})]+ E _{i}\left[\int_{t_{i}}^{t_{i+1}}F_t( e^{\mathcal{X}_{t}}, Y_{t}, Z_{t}, \tilde{Z}_t)-F_{t_{i}}( e^{X_{t_{i}}},\widehat{\mathcal{V}}_{t_{i}},\overline{\widehat{Z}_{t_{i}}},\overline{\widehat{\tilde{Z}}_{t_{i}}})\mathrm{d}t\right].\nonumber
\end{align}
 Young's inequality gives $(a + b)^2 \le (1 + \gamma\Delta t_i)a^2 + (1 + \frac{1}{\gamma\Delta t_i})b^2$  for any $a,b\in\bR$ and $\gamma>0$, which combined with Cauchy-Schwarz inequality, the Lipschitz condition on $F$ in {\bfseries (H1)}, and the estimation \eqref{rho} on the forward process, implies that
 \begin{align}\label{4.9}
 &E |Y_{t_i}-\widehat{\mathcal{V}}_{t_i}|^2
 \nonumber\\
 &\le
 E \bigg\{(1+\gamma\Delta t_i)\left( E _i[Y_{t_{i+1}}-\widehat{\mathcal{U}}_{i+1}(X_{t_{i+1}})]\right)^2\nonumber\\ &
 	\quad+\left(1+\frac{1}{\gamma\Delta t_i}\right)
	\left( E _i\Big[\int_{t_i}^{t_{i+1}}(F_t(e^{\mathcal{X}_t},Y_t,Z_t,\tilde{Z}_t)- F_{t_i}(e^{X_{t_i}},\widehat{\mathcal{V}}_{t_i},	
			\overline{\widehat{Z}_{t_i}},\overline{\widehat{\tilde{Z}}_{t_i}}))dt\Big]\right)^2\bigg\}
\nonumber\\
&\le
 	(1+\gamma\Delta t_i)
	E\left[ | E _i[Y_{t_{i+1}}-\widehat{\mathcal{U}}_{i+1}(X_{t_{i+1}})]|^2\right]
		+5\left(1+\frac{1}{\gamma\Delta t_i}\right)  L_2 ^2\Delta t_i\bigg\{C\rho(|\pi|)|\pi|
\nonumber\\
&\quad \quad
	+E \bigg[\int_{t_i}^{t_{i+1}}|Y_t-\widehat{\mathcal{V}}_{t_i}|^2dt\bigg]
	+E\bigg[\int_{t_i}^{t_{i+1}}\left( |Z_t-\overline{\widehat{Z}_{t_i}}|^2
	+|\tilde{Z}_t-\overline{\widehat{\tilde{Z}}_{t_i}}|^2 \right)dt\bigg]\bigg\}
	\nonumber\\
&\le
\left(1+\gamma\Delta t_i\right) E \left[ | E _i[Y_{t_{i+1}}-\widehat{\mathcal{U}}_{i+1}(X_{t_{i+1}})]|^2\right]
	+5\left(1+\gamma\Delta t_i\right)
		\frac{ L_2 ^2}{\gamma} \bigg\{C\rho(|\pi|)|\pi|
\nonumber\\
&\quad
+2\Delta t_i  E |Y_{t_i}
	-\widehat{\mathcal{V}}_{t_i}|^2
	+ E \bigg[ \int_{t_i}^{t_{i+1}}
	\left(|Z_t-\overline{\widehat{Z}_{t_i}}|^2 + |\tilde{Z}_t-\overline{\widehat{\tilde{Z}}_{t_i}}|^2 \right)dt
		\bigg] \bigg\},
\end{align}
where the $L^2$-regularity of $Y$ \eqref{Y} is used in the last inequality.

Recalling that $\bar{Z}$ and $\bar{\tilde{Z}}$  are the $L^2$-projections of $Z$ and $\tilde{Z}$ respectively,  we have\begin{equation}\label{4.10}
\left\{\begin{array}{l}{
 E [\int_{t_{i}}^{t_{i+1}}|Z_{t}-\overline{\widehat{Z}_{t_{i}}}|^{2}dt]
 	= E [\int_{t_{i}}^{t_{i+1}}|Z_{t}-\bar{Z}_{t_{i}}|^{2}dt]+\Delta t_{i}  E \left[ |\bar{Z}_{t_{i}}-\overline{\widehat{Z}_{t_{i}}}|^{2}\right] },\\
{ E [\int_{t_{i}}^{t_{i+1}}|\tilde{Z}_{t}-\overline{\widehat{\tilde{Z}}_{t_{i}}}|^{2}dt]
= E [\int_{t_{i}}^{t_{i+1}}|\tilde{Z}_{t}-\bar{\tilde{Z}}_{t_{i}}|^{2}dt]+\Delta t_{i}  E \left[|\bar{\tilde{Z}}_{t_{i}}-\overline{\widehat{\tilde{Z}}_{t_{i}}}|^{2}\right]}.
\end{array}\right.
\end{equation}
Integrate equation \eqref{BSDE} over time interval $[t_i,t_{i+1}]$ multiplied by $\Delta W_{t_i}$ and $\Delta B_{t_i}$ respectively.  This together with \eqref{VZZ} gives
\begin{align}
\Delta t_{i}\left(\bar{\tilde{Z}}_{t_{i}}-\overline{\widehat{\tilde{Z}}_{t_{i}}}\right)
=& E _{i}\left[
\Delta W_{t_{i}}\left(Y_{t_{i+1}}-\widehat{\mathcal{U}}_{i+1}(X_{t_{i+1}})- E _{i}[Y_{t_{i+1}}-\widehat{\mathcal{U}}_{i+1}(X_{t_{i+1}})]\right)\right]\nonumber\\
& +E _{i} \left[\Delta W_{t_{i}} \int_{t_{i}}^{t_{i+1}} F_t(e^{ \mathcal{X}_{t}} , Y_{t}, Z_{t}, \tilde{Z}_t)dt\right],\nonumber\\
\Delta t_{i}\left(\bar{Z}_{t_{i}}-\overline{\widehat{Z}_{t_{i}}}\right)
=& E _{i}\left[\Delta B_{t_{i}}\left(Y_{t_{i+1}}-\widehat{\mathcal{U}}_{i+1}(X_{t_{i+1}})- E _{i}[Y_{t_{i+1}}-\widehat{\mathcal{U}}_{i+1}(X_{t_{i+1}})]\right)\right]
\nonumber\\
&
	+E _{i}\left[ \Delta B_{t_{i}} \int_{t_{i}}^{t_{i+1}} F_t(e^{ \mathcal{X}_{t}}, Y_{t}, Z_{t}, \tilde{Z}_t)dt
	\right].\nonumber
\end{align}
Standard computations further indicate that
\begin{align}\label{4.11}
\Delta t_{i}  E \left[  |\bar{Z}_{t_{i}}-\overline{\widehat{Z}_{t_{i}}}|^{2}  \right]
\leq
& 2\left( E |Y_{t_{i+1}}-\widehat{\mathcal{U}}_{i+1}(X_{t_{i+1}})|^{2}- E | E _{i}[Y_{t_{i+1}}-\widehat{\mathcal{U}}_{i+1}(X_{t_{i+1}})]|^{2}\right)
\nonumber\\
&+2\Delta t_{i} E \left[\int_{t_{i}}^{t_{i+1}}|F_t(e^{ \mathcal{X}_{t}}, Y_{t}, Z_{t}, \tilde{Z}_{t})|^{2}dt\right];
\end{align}
it follows similarly for $\tilde{Z}$. Then, by plugging \eqref{4.10} and \eqref{4.11} into \eqref{4.9}, and choosing $\gamma = 20 L_2 ^2$, we have
\begin{align}
 &
 E \left[ |Y_{t_i}-\widehat{\mathcal{V}}_{t_i}|^2 \right]
 \notag \\
 &\le
 	 (1+\gamma\Delta t_i)
	 	 E \left[ | E _i\left[ Y_{t_{i+1}}-\widehat{\mathcal{U}}_{i+1}(X_{t_{i+1}})\right]|^2 \right]
		 +5(1+\gamma\Delta t_i)\frac{ L_2 ^2}{\gamma} \bigg\{
		 	C\rho(|\pi|)|\pi|+2\Delta t_i  E |Y_{t_i}-\widehat{\mathcal{V}}_{t_i}|^2
\nonumber\\
&\quad
+ E \left[\int_{t_i}^{t_{i+1}}\left( |Z_t-\bar{Z}_{t_i}|^2+ |\tilde{Z}_t-\bar{\tilde{Z}}_{t_i}|^2 \right)dt\right]
	+4\Big( E |Y_{t_{i+1}}-\widehat{\mathcal{U}}_{i+1}(X_{t_{i+1}})|^2\nonumber\\
&\quad- E\big[ | E _i[Y_{t_{i+1}}-\widehat{\mathcal{U}}_{i+1}(X_{t_{i+1}})]|^2\big] \Big)
	+4\Delta t_i  E \left[\int_{t_i}^{t_{i+1}}|F_t(e^{\mathcal{X}_t},Y_t,Z_t,\tilde{Z}_t)|^2dt\right]
	\bigg\}
\nonumber\\
&
\le  C\rho(|\pi|)|\pi|+(1+\gamma\Delta t_i) E |Y_{t_{i+1}}-\widehat{\mathcal{U}}_{i+1}(X_{t_{i+1}})|^2+C\Delta t_i E |Y_{t_i}-\widehat{\mathcal{V}}_{t_i}|^2 \nonumber\\
&\quad+C E \bigg[\int_{t_i}^{t_{i+1}} \Big(|Z_t-\bar{Z}_{t_i}|^2 + |\tilde{Z}_t-\bar{\tilde{Z}}_{t_i}|^2 \Big)dt\bigg]
+C\Delta t_i E \bigg[\int_{t_i}^{t_{i+1}}|F_t(e^{\mathcal{X}_t},Y_t,Z_t,\tilde{Z}_t)|^2dt\bigg],
\end{align}
which implies \eqref{4.12} when $|\pi|$ is sufficiently small.\\[6pt]
\textbf{Step 2.} We prove the estimate for the $Y$-component in \eqref{4.8}, i.e.,
 \begin{align}
\max _{i=0, \ldots, N-1}  E |Y_{t_{i}}-\widehat{\mathcal{U}}_{i}(X_{t_{i}})|^{2} &\leq C\rho(|\pi|)+C  E |G(\mathcal{X}_{T})-G(X_{T})|^{2}+C \varepsilon^{Z}(\pi)+C \varepsilon^{\tilde{Z}}(\pi)\nonumber\\
&\quad+C \sum_{i=0}^{N-1}(N \varepsilon_{i}^{\mathcal{N}, v}+\varepsilon_{i}^{\mathcal{N}, z}+\varepsilon_{i}^{\mathcal{N}, \tilde{z}}).\label{4.19}
\end{align}
Indeed,
using Young inequality of the form:
$$(a+b)^2 \ge  (1-|\pi|)a^2 + \left(1-\frac{1}{|\pi|}\right) b^2 \ge (1-|\pi|)a^2- \frac{1}{|\pi|}b^2,$$
we have
\begin{align}
 E |Y_{t_{i}}-\widehat{\mathcal{V}}_{t_{i}}|^{2}
 &= E |Y_{t_{i}}-\widehat{\mathcal{U}}_{i}(X_{t_{i}})+\widehat{\mathcal{U}}_{i}(X_{t_{i}})-\widehat{\mathcal{V}}_{t_{i}}|^{2}
 \nonumber\\
 &\geq(1-|\pi|)  E |Y_{t_{i}}-\widehat{\mathcal{U}}_{i}(X_{t_{i}})|^{2}-\frac{1}{|\pi|} E |\widehat{\mathcal{U}}_{i}(X_{t_{i}})-\widehat{\mathcal{V}}_{t_{i}}|^{2}. \label{4.13}
\end{align}
Plugging the above inequality into \eqref{4.12} and letting $|\pi|$ be small enough yield that
\begin{align}
 &E |Y_{t_{i}}-\widehat{\mathcal{U}}_{i}(X_{t_{i}})|^{2}
 \nonumber\\
 &
 \leq C\rho(|\pi|)|\pi|+(1+C|\pi|)  E |Y_{t_{i+1}}-\widehat{\mathcal{U}}_{i+1}(X_{t_{i+1}})|^{2}+C  E \bigg[\int_{t_{i}}^{t_{i+1}}\left(
 |Z_{t}-\bar{Z}_{t_{i}}|^{2} +|\tilde{Z}_{t}-\bar{\tilde{Z}}_{t_{i}}|^{2} \right) dt\bigg]
 \nonumber\\
&
 +C|\pi|  E \bigg[ \int_{t_{i}}^{t_{i+1}}|F_t(e^{\mathcal{X}_{t}}, Y_{t}, Z_{t}, \tilde{Z}_t)|^{2}dt\bigg]+CN  E |\widehat{\mathcal{V}}_{t_{i}}-\widehat{\mathcal{U}}_{i}(X_{t_{i}})|^{2}.
\end{align}
Recalling $Y_{t_N} = G(\mathcal{X}_T)$ and $\widehat{\mathcal{U}}_i(X_{t_N}) = G(X_T)$, and \eqref{F}, we may use the discrete Gronwall's inequality to reach the following estimate:
\begin{align}\label{4.14}
&\max _{i=0, \ldots, N-1}  E |Y_{t_{i}}-\widehat{\mathcal{U}}_{i}(X_{t_{i}})|^{2}
\notag\\
&\leq C \left\{
\rho(|\pi|) +|\pi|+ E |G(\mathcal{X}_{T})-G(X_{T})|^2+\varepsilon^{Z}(\pi)+\varepsilon^{\tilde{Z}}(\pi)+N\sum_{i=0}^{N-1}E|\widehat{\mathcal{U}}_i(X_{t_i})-\widehat{V}_{t_i}|^2 \right\},
\end{align}
which combined with Lemma \ref{lem-dis-v} gives \eqref{4.19}.

\noindent
\textbf{Step 3.} We prove the estimate for the $(Z,\tilde Z)$-component in \eqref{4.8},  i.e.,
  \begin{align*}
   &E \left[\sum_{i=0}^{N-1}\int_{t_{i}}^{t_{i+1}}\left( |Z_{t}-\widehat{\mathcal{Z}}_{i}(X_{t_{i}})|^{2} + |\tilde{Z}_{t}-\widehat{\tilde{\mathcal{Z}}}_{i}(X_{t_{i}})|^{2}\right) dt\right]
   \\
   &\leq C\left\{
   \varepsilon^Z(\pi)+\varepsilon^{\tilde{Z}}(\pi)+\rho(|\pi|)+|\pi|+ E |G(\mathcal{X}_T)-G(X_T)|^2
  +\sum_{i=0}^{N-1}(N\varepsilon^{\mathcal{N},v}_i+\varepsilon^{\mathcal{N},z}_i),+\varepsilon^{\mathcal{N},\tilde{z}}_i) \right\}.
  \end{align*}
From \eqref{4.10} and \eqref{4.11}, it follows that for any $i = 0,...,N-1$,
\begin{align*}
 &E [\int_{t_{i}}^{t_{i+1}}|Z_{t}-\overline{\widehat{Z}_{t_{i}}}|^{2}dt]
 \\
 &\leq  E\left[\int_{t_{i}}^{t_{i+1}}|Z_{t}-\bar{Z}_{t_{i}}|^{2}dt\right]
 	+2 \left( E |Y_{t_{i+1}}-\widehat{\mathcal{U}}_{i+1}(X_{t_{i+1}})|^2- E | E _i[Y_{t_{i+1}}-\widehat{\mathcal{U}}_{i+1}(X_{t_{i+1}})]|^2\right)\nonumber\\
&\quad
+2 |\pi| E \left[\int_{t_{i}}^{t_{i+1}}|F_t(e^{ \mathcal{X}_{t}}, Y_{t}, Z_{t}, \tilde{Z}_t)|^{2}dt\right].
\end{align*}
which, together with  \eqref{F}, gives
\begin{align}
 E \left[\sum_{i=0}^{N-1} \int_{t_{i}}^{t_{i+1}}|Z_{t}-\overline{\widehat{Z}_{t_i}}|^{2}dt\right]
 &\leq \varepsilon^{Z}(\pi)+2  E |G(\mathcal{X}_{T})-G(X_{T})|^{2}+2\sum_{i=0}^{N-1}\Big( E |Y_{t_{i}}-\widehat{\mathcal{U}}_{i}(X_{t_{i}})|^{2}\nonumber\\
&\quad -E | E _{i}[Y_{t_{i+1}}-\widehat{\mathcal{U}}_{i+1}(X_{t_{i+1}})]|^{2}\Big)+C|\pi|,\label{4.20}
\end{align}
where the indices are changed in the last summation. Analogously,
\begin{align}
 E \left[ \sum_{i=0}^{N-1} \int_{t_{i}}^{t_{i+1}}|\tilde{Z}_{t}-\overline{\widehat{\tilde{Z}}_{t_i}}|^{2}dt \right]
  &\leq \varepsilon^{\tilde{Z}}(\pi)+2  E |G(\mathcal{X}_{T})-G(X_{T})|^{2}
  	+2\sum_{i=0}^{N-1}\Big( E |Y_{t_{i}}-\widehat{\mathcal{U}}_{i}(X_{t_{i}})|^{2}\nonumber\\
&\quad -E | E _{i}[Y_{t_{i+1}}-\widehat{\mathcal{U}}_{i+1}(X_{t_{i+1}})]|^{2}\Big)+C|\pi|.
\end{align}
Notice that by \eqref{4.9} and \eqref{4.13} we have
\begin{align}
&2\left(  E |Y_{t_i}-\widehat{\mathcal{U}}_i(X_{t_i})|^2- E | E _i[Y_{t_{i+1}}-\widehat{\mathcal{U}}_{i+1}(X_{t_{i+1}})]|^2\right)
\nonumber\\
&
\le \frac{2}{1-|\pi|}\bigg\{(1+\gamma\Delta t_i) E \left[ | E _i[Y_{t_{i+1}}-\widehat{\mathcal{U}}_{i+1}(X_{t_{i+1}})]|^2 \right]
+5(1+\gamma\Delta t_i)\frac{ L_2 ^2}{\gamma} \Big( C\rho(|\pi|)|\pi|\nonumber\\
&\quad+2|\pi| E |Y_{t_i}-\widehat{\mathcal{V}}_{t_i}|^2+ E \Big[\int_{t_i}^{t_{i+1}}|Z_t-\overline{\widehat{Z}_{t_i}}|^2dt\Big]+ E \Big[\int_{t_i}^{t_{i+1}}|\tilde{Z}_t-\overline{\widehat{\tilde{Z}}_{t_i}}|^2dt \Big]\Big)  \bigg\}
\nonumber\\
&\quad+\frac{3}{|\pi|(1-|\pi|)} E |\widehat{\mathcal{U}}_i(X_{t_i} )-\widehat{\mathcal{V}}_{t_i}|^2.
\end{align}
Take $\gamma = 50 L_2 ^2$ so that $\frac{10 L_2 ^2}{\gamma}(1 + \gamma|\pi|)/(1-|\pi|) \le 1/4$ for $|\pi|$ small enough and notice that  $[(1+\gamma|\pi|)/(1-|\pi|)-1]= O(|\pi|)$.  This together with \eqref{F}, \eqref{4.18}, \eqref{4.12}, \eqref{4.19}, and  \eqref{4.20}, yields
{\small
\begin{align}
&\frac{1}{2} E \left[ \sum_{i=0}^{N-1}\int_{t_i}^{t_{i+1}}
\left(|Z_t-\overline{\widehat{Z}_{t_i}}|^2
+ |\tilde{Z}_t-\overline{\widehat{\tilde{Z}}_{t_i}}|^2 \right) dt \right]
\nonumber\\
&\le
\varepsilon^Z(\pi)+\varepsilon^{\tilde{Z}}(\pi)+C\max_{i=0,...,N} E |Y_{t_i}-\widehat{\mathcal{U}}_i(X_{t_i})|^2+C\rho(|\pi|)+C E |G(\mathcal{X}_T)-G(X_T)|^2\nonumber\\
&\quad+C|\pi|\sum_{i=0}^{N-1} E |Y_{t_i}-\widehat{\mathcal{V}}_{t_i}|^2+CN\sum_{i=0}^{N-1} E |\widehat{\mathcal{U}}_i(X_{t_i})-\widehat{\mathcal{V}}_{t_i}|^2 +C|\pi|
\nonumber\\
&\le
\varepsilon^Z(\pi)+\varepsilon^{\tilde{Z}}(\pi)+C \max_{i=0,...,N} E |Y_{t_i}-\widehat{\mathcal{U}}_i(X_{t_i})|^2+C\rho(|\pi|)+C|\pi|\nonumber\\
&\quad+
C|\pi|\sum_{i=0}^{N-1}\bigg\{
	C\rho(|\pi|)|\pi|+C E \bigg[ \int_{t_i}^{t_{i+1}} \Big(|Z_t-\bar{Z}_{t_i}|^2 + |\tilde{Z}_t-\bar{\tilde{Z}}_{t_i}|^2 \Big)dt\bigg]\nonumber\\
&\quad+(1+C|\pi|) E |Y_{t_{i+1}}-\widehat{\mathcal{U}}_{i+1}(X_{t_{i+1}})|^2
+C|\pi| E \bigg[\int_{t_i}^{t_{i+1}}|F(t,\mathcal{X}_t,Y_t,Z_t,\tilde{Z}_t)|^2dt\bigg] \bigg\}
\nonumber\\
&
\quad+CN\sum_{i=0}^{N-1} E |\widehat{\mathcal{U}}_i(X_{t_i})-\widehat{\mathcal{V}}_{t_i}|^2
\nonumber\\
&
\le C \bigg\{
\varepsilon^Z(\pi)+\varepsilon^{\tilde{Z}}(\pi)+\rho(|\pi|)+|\pi|+ E |G(\mathcal{X}_T)-G(X_T)|^2
+\sum_{i=0}^{N-1}(N\varepsilon^{\mathcal{N},v}_i+\varepsilon^{\mathcal{N},z}_i+\varepsilon^{\mathcal{N},\tilde{z}}_i)
\bigg\}.   \label{4.21}
\end{align}
}
 Finally, noticing the relations
\begin{align*}
 &E \left[\int_{t_{i}}^{t_{i+1}}\left|Z_{t}-\widehat{\mathcal{Z}}_{i}\left(X_{t_{i}}\right)\right|^{2} \mathrm{d} t\right] \leq 2  E \left[\int_{t_{i}}^{t_{i+1}}\left|Z_{t}-\overline{\widehat{Z}_{t_i}}\right|^{2} \mathrm{d} t\right]+2 \Delta t_{i}  E \left|\overline{\widehat{Z}_{t_i}}-\widehat{\mathcal{Z}}_{i}\left(X_{t_{i}}\right)\right|^{2},\\
 &E \left[\int_{t_{i}}^{t_{i+1}}\left|\tilde Z_{t}-\widehat{ \tilde {\mathcal{Z}}}_{i}\left(X_{t_{i}}\right)\right|^{2} \mathrm{d} t\right] \leq 2  E \left[\int_{t_{i}}^{t_{i+1}}\left|\tilde Z_{t}-\overline{\widehat{ {\tilde Z}}_{t_i}}\right|^{2} \mathrm{d} t\right]+2 \Delta t_{i}  E \left|\overline{\widehat{{\tilde Z}_{t_i}}}-\widehat{ \tilde{\mathcal{Z}}}_{i}\left(X_{t_{i}}\right)\right|^{2},
\end{align*}
and using \eqref{4.18}, \eqref{4.21}, we obtain by  summing over $i = 0,...,N-1,$ the desired error estimate for the $(Z,\tilde Z)$-component, completing the proof.
\end{proof}

Finally, we prove the claim in  Lemma \ref{lem-dis-v}.
\begin{proof}[Proof of Lemma \ref{lem-dis-v}]
Fix $i\in\{0,...,N-1\}.$ Using relation \eqref{4.7} in the expression of the expected quadratic loss function, and recalling the definitions of $\overline{\widehat{Z}_{t_i}}$ and $\overline{\widehat{\tilde{Z}}_{t_i}}$ as $L^2$-projection of $\widehat{Z}_t$ and $\widehat{\tilde{Z}}_t$, we have for all parameters $\theta$ of the neural networks $\mathcal{U}_i(.;\theta)$, $\mathcal{Z}_i(.;\theta)$, and $\tilde{\mathcal{Z}}_i(.;\theta)$,
\begin{align}
\hat{L}_{i}(\theta)=\tilde{L}_{i}(\theta)+ E \left[\int_{t_{i}}^{t_{i+1}} \left( \left| \widehat{Z}_{t}-\overline{\widehat{Z}_{t_{i}}} \right|^{2} + \left| \widehat{\tilde{Z}}_{t}-\overline{\widehat{\tilde{Z}}_{t_{i}}} \right|^{2}\right) \mathrm{d} t\right], \label{4.15}
\end{align}
with
\begin{align}
\tilde{L}_{i}(\theta)
:= &\, E \bigg[
	|\widehat{\mathcal{V}}_{t_{i}}-\mathcal{U}_{i}(X_{t_{i}}; \theta_i)
\nonumber\\
&
	\quad \quad+(F_{t_{i}}( e^{X_{t_{i}}}, \mathcal{U}_{i}(X_{t_{i}} ; \theta), \mathcal{Z}_{i}(X_{t_{i}}; \theta_i), \tilde{\mathcal{Z}}_{i}(X_{t_{i}}; \theta_i))
	-F_{t_i}( e^{X_{t_{i}}}, \widehat{\mathcal{V}}_{t_{i}}, \overline{\widehat{Z}_{t_{i}}}, \overline{\widehat{\tilde{Z}}_{t_{i}}}))\Delta t_{i}|^{2} \bigg]
\nonumber\\
&
	\quad\quad
		+\Delta t_{i}  E \left[ |\overline{\widehat{Z}_{t_{i}}}-\mathcal{Z}_{i}(X_{t_{i}};\theta_i)|^{2} \right]+\Delta t_{i}  E \left[ |\overline{\widehat{\tilde{Z}}_{t_{i}}}-\tilde{\mathcal{Z}}_{i}(X_{t_{i}};\theta_i)|^{2} \right].
\end{align}
By using Young inequality: $(a + b)^2 \le (1 + \gamma\Delta t_i)a^2 + (1 + \frac{1}{\gamma\Delta t_i})b^2$, together with the Lipschitz condition on $F$ in {\bfseries (H1)}, we  see that
\begin{align}
\tilde{L}_{i}(\theta)
&
 \leq(1+C \Delta t_{i})  E |\widehat{\mathcal{V}}_{t_{i}}-\mathcal{U}_{i}(X_{t_{i}}; \theta_i)|^{2}
\nonumber\\
&\quad
+C \Delta t_{i}  E \left[ |\overline{\widehat{Z}_{t_{i}}}-\mathcal{Z}_{i}(X_{t_{i}} ; \theta_i)|^{2}+  |\overline{\widehat{\tilde{Z}}_{t_{i}}}-\tilde{\mathcal{Z}}_{i}(X_{t_{i}} ; \theta_i)|^{2} \right]. \label{4.16}
\end{align}
On the other hand, using Young inequality in the form: $(a+b)^2 \ge (1-\gamma\Delta t_i)a^2 + (1-\frac{1}{\gamma\Delta t_i})b^2\ge (1-\gamma\Delta t_i)a^2-\frac{1}{\gamma\Delta t_i}b^2$, together with the Lipschitz condition on $F$, gives
\begin{align}
\tilde{L}_{i}(\theta) \geq&(1-\gamma \Delta t_{i}) E |\widehat{V}_{t_{i}}-\mathcal{U}_{i}(X_{t_{i}} ;\theta_i)|^{2}-\frac{3\Delta t_{i} L_2 ^{2}}{\gamma}\Big( E |\widehat{\mathcal{V}}_{t_{i}}-\mathcal{U}_{i}(X_{t_{i}} ;\theta_i)|^{2}+ E |\overline{\widehat{Z}_{t_i}}-\mathcal{Z}_{i}(X_{t_{i}} ;\theta_i)|^{2}\nonumber\\
&+ E |\overline{\widehat{\tilde{Z}}_{t_i}}-\tilde{\mathcal{Z}}_{i}(X_{t_{i}} ; \theta_i)|^{2}\Big)+\Delta t_{i}  E |\overline{\widehat{Z}_{t_i}}-\mathcal{Z}_{i}(X_{t_{i}} ; \theta_i)|^{2}+\Delta t_{i}  E |\overline{\widehat{\tilde{Z}}_{t_i}}-\tilde{\mathcal{Z}}_{i}(X_{t_{i}} ; \theta_i)|^{2}.
\end{align}
Choosing $\gamma=6 L_2 ^{2}$, this yields
\begin{equation}\label{4.17}
\tilde{L}_{i}(\theta) \geq(1-C \Delta t_{i})  E |\widehat{V}_{t_{i}}-\mathcal{U}_{i}(X_{t_{i}}; \theta_i)|^{2}+\frac{\Delta t_{i}}{2}  E\bigg[  |\overline{\widehat{Z}_{t_{i}}}-\mathcal{Z}_{i}(X_{t_{i}}; \theta_i)|^{2}+  |\overline{\widehat{\tilde{Z}}_{t_{i}}}-\tilde{\mathcal{Z}}_{i}(X_{t_{i}};\theta_i)|^{2} \bigg].
\end{equation}
For each $i\in\{0, \ldots, N-1\}$, take $\theta_{i}^{*}\in\arg\min _{\theta}\hat{L}_{i}(\theta)$ so that $\widehat{\mathcal{U}}_{i}=\mathcal{U}_{i}(\cdot ;\theta_{i}^{*})$, $\widehat{\mathcal{Z}}_{i}=\mathcal{Z}_{i}(\cdot; \theta_{i}^{*})$, and $\widehat{\mathcal{\tilde{Z}}}_{i}=\tilde{\mathcal{Z}}_{i}(\cdot; \theta_{i}^{*})$.
As the second term of the right hand side of \eqref{4.15} is independent of parameters $\theta_i$,  it also holds that $\theta_{i}^{*}\in\arg\min _{\theta}\tilde{L}_{i}(\theta)$. Combining \eqref{4.17} and \eqref{4.16} implies that for all $\theta$
\begin{align}
&(1-C \Delta t_{i})  E |\widehat{\mathcal{V}}_{t_{i}}-\widehat{\mathcal{U}}_{i}(X_{t_{i}})|^{2}
+\frac{\Delta t_{i}}{2}  E \left[|\overline{\widehat{Z}_{t_{i}}}-\widehat{\mathcal{Z}}_{i}(X_{t_{i}})|^2+  |\overline{\widehat{\tilde{Z}}_{t_{i}}}-\widehat{\tilde{\mathcal{Z}}}_{i}(X_{t_{i}})|^2\right]
\leq \tilde{L}_{i}(\theta_{i}^{*})
\leq
\tilde{L}_{i}(\theta)
 \nonumber\\
& \leq(1+C \Delta t_{i})  E |\widehat{\mathcal{V}}_{t_{i}}-\mathcal{U}_{i}(X_{t_{i}};\theta_i)|^{2}+C \Delta t_{i}  E \left[|\overline{\widehat{Z}_{t_{i}}}-\mathcal{Z}_{i}(X_{t_{i}};\theta_i)|^{2}+  |\overline{\widehat{\tilde{Z}}_{t_{i}}}-\tilde{\mathcal{Z}}_{i}(X_{t_{i}};\theta_i)|^{2}\right].
\end{align}
By \eqref{4.6}, letting $|\pi|$  be sufficiently small gives \eqref{4.18}.
\end{proof}

\bibliographystyle{alpha}
\newcommand{\etalchar}[1]{$^{#1}$}

\end{document}